\def\ps@pprintTitle{%
   \let\@oddhead\@empty
   \let\@evenhead\@empty
   \def\@oddfoot{\reset@font\hfil\thepage\hfil}
   \let\@evenfoot\@oddfoot
}
\newtheorem{theorem}{Theorem}
\newtheorem{lemma}[theorem]{Lemma}
\newcommand{\rank}{{\mathrm{rank}}}
\newcommand{\gf}{{\mathbb{F}}}
\newcommand{\PG}{{\mathrm{PG}}}
\newcommand{\support}{{\mathrm{suppt}}}
\newcommand{\C}{{\mathcal{C}}}
\newcommand{\cH}{{\mathcal{H}}}
\newcommand{\bc}{{\mathbf{c}}}
\newcommand{\bg}{{\mathbf{g}}}
\begin{document}

\begin{frontmatter}



\title{Constructions of near MDS codes which are optimal locally recoverable codes}

\tnotetext[fn1]{
Z. Heng's research was supported in part by the National Natural Science Foundation of China under Grant 11901049, in part by the  Young Talent Fund of University Association for Science and Technology in Shaanxi, China, under Grant 20200505 and in part by the Fundamental Research Funds for the Central Universities, CHD, under Grant 300102122202.
}

\author{Xiaoru Li}
\ead{lixiaoru@163.com}
\author{Ziling Heng \corref{cor}}
\ead{zilingheng@chd.edu.cn}

\cortext[cor]{Corresponding author}
\address{School of Science, Chang'an University, Xi'an 710064, China}




\begin{abstract}
A linear code with parameters $[n,k,n-k]$ is said to be  almost maximum distance separable (AMDS for short). An AMDS code whose dual is also AMDS is referred to as an near
maximum distance separable (NMDS for short) code. NMDS codes have nice applications in finite geometry, combinatorics, cryptography and data storage. In this paper, we first present several constructions of NMDS codes and determine their weight enumerators. In particular, some constructions produce NMDS codes with the same parameters but different weight enumerators. Then we determine the locality of the NMDS codes and obtain many families of distance-optimal and dimension-optimal locally repairable codes.
\end{abstract}

\begin{keyword}
Linear code \sep near MDS code \sep locally recoverable codes

\MSC  94B05 \sep 94A05

\end{keyword}

\end{frontmatter}


\section{Introduction}
Let $q$ be a prime power and  $\mathbb{F}_{q}$ the finite field with $q$ elements. Denote by $\mathbb{F}_{q}^*=\mathbb{F}_{q}\setminus \{0\}$.

Linear codes are an important class of error-correcting codes which are widely used in communication systems. For a non-empty set $\mathcal{C} \subseteq \mathbb{F}_{q}^{n}$, if $\mathcal{C}$ is a $k$-dimensional linear subspace over $\mathbb{F}_{q}$, then it is called an $[n,k,d]$ linear code over $\mathbb{F}_q$, where $d$ denotes its minimal  distance. The dual code of an $[n,k]$ linear code $\mathcal{C}$ over  $\mathbb{F}_{q}$ is defined as
$$
\mathcal{C}^{\perp}=\left\{ \mathbf{c}^{\perp} \in \mathbb{F}_{q}^{n}: \langle  \mathbf{c}^{\perp}, \mathbf{c} \rangle=0\mbox{ $\forall$ }\mathbf{c} \in \mathcal{C} \right\},
$$
where $\langle \mathbf{c}^{\perp},\mathbf{c} \rangle$ denotes the Euclidean inner product of $\mathbf{c}^{\perp}$ and $\mathbf{c}$. Then $\mathcal{C}^{\perp}$ is an $[n,n-k]$ linear code. Let $A_{i}$ denote the number of codewords with weight $i$ in a linear code $\mathcal{C}$ of length $n$, where $0 \leq i \leq n$. The sequence $(1,A_{1},A_{2}, \cdots ,A_{n})$ is called the weight distribution of $\mathcal{C}$. The polynomial
$$
A(z)=1+A_{1}z+A_{2}z^2+ \cdots +A_{n}z^n
$$
is called the weight enumerator of $\mathcal{C}$. Weight distribution is an important research subject as it contains crucial information including the capabilities of error detection and correction. The weight distributions of linear codes have been widely investigated in the literature \cite{D, DT, HDZ, L1, L2, TD, Wqiu}.

A linear code is called an MDS (maximum distance separable) code if it has parameters $[n, k, n-k+1]$. The dual of an MDS code is also an MDS code. The weight distribution of a $q$-ary $[n,k]$ MDS code is unique. MDS codes are meaningful for both theory and practice. Recently, twisted Reed-Solomon
codes with one-dimensional hull which are MDS were constructed in \cite{W}.  Linear complementary dual  MDS codes
of non-Reed-Solomon type were constructed in \cite{WHL}.

A linear code is said to be almost maximum distance separable (almost MDS or AMDS for short) if it has parameters $[n, k, n-k]$.  AMDS codes with dimensions $1, n -2, n-1, n$ are said to be trivial ones.  In general, the dual of an
AMDS code may not be AMDS. It is known that
AMDS codes,
$n$-tracks and linear orthogonal arrays of index $q$ are equivalent to each other \cite{B}. Some upper bounds
on the maximum length for which an AMDS code exists were summarized in \cite{B, FaldumWillems97}.

 If both a code and its dual are AMDS codes, then the code is referred to as an near maximum distance separable(near MDS or NMDS for short) code.  The first near MDS code was the $[11, 6, 5]$ ternary Golay code discovered by Golay \cite{Golay49}. A characterization of
NMDS codes with a parity-check matrix was given in \cite{DodLan95}. Recently, some families of NMDS
codes were constructed. In \cite{DT, TD}, Ding and Tang
constructed several infinite families of NMDS codes holding $t$-designs for $t=2,3,4$. In \cite{J}, self-dual NMDS codes were derived from elliptic curves.
In \cite{HY}, MDS or NMDS self-dual codes from twisted generalized
Reed-Solomon codes were constructed.
In \cite{Wqiu}, several infinite families of NMDS
codes were constructed with oval polynomials.  These families of NMDS codes were proved to be
optimal or almost optimal locally recoverable codes in \cite{TanP}.

According to \cite{GY} and \cite{TanP}, AMDS and NMDS codes can be used to derive optimal or nearly optimal locally recoverable codes. Hence it is interesting to construct more new AMDS or NMDS codes. In \cite{Wqiu}, some special matrixes were used to construct NMDS codes. Inspired by the work in \cite{Wqiu}, we present several new constructions of NMDS codes with different matrixes in this paper. The weight enumerators of the NMDS codes are explicitly determined. In particular, some constructions produce NMDS codes with the same parameters but different weight enumerators. Then we determine the locality of the NMDS codes and obtain many families of distance-optimal and dimension-optimal locally repairable codes.

The rest of this paper is organized as follows. In Section \ref{sec2}, we present some properties of NMDS codes and oval polynomials used in this paper. In Section \ref{sec3}, we give several constructions of NMDS codes and determine their parameters and weight distributions. In Section \ref{sec4}, we prove the NMDS codes constructed in this paper are optimal or almost optimal locally recoverable codes.

\section{Preliminaries}\label{sec2}
In this section, we introduce some properties of NMDS codes and oval polynomials used later.
\subsection{Some properties of NMDS codes}

In this subsection, we present two properties of NMDS codes.

Denote by $(1,A_1,\cdots,A_n)$ and $(1,A_1^\perp,\cdots,A_n^\perp)$ the weight distributions of a linear code $\C$ and its dual $\C^\perp$ of length $n$, respectively. The weight distributions of an NMDS code and its dual satisfy the following recurrence relations.

\begin{lemma}[\cite{DodLan95}]\label{lem-N1}
Let $\mathcal{C}$ be an NMDS code with paraments $[n, k, n-k]$ over the finite field $\gf_q$. Then the weight distributions of the two codes
$\mathcal{C}^\perp$ and $\mathcal{C}$
are given by
\begin{eqnarray*}
A_{k+s}^\perp = \binom{n}{k+s} \sum_{j=0}^{s-1} (-1)^j \binom{k+s}{j}(q^{s-j}-1) +
             (-1)^s \binom{n-k}{s}A_{k}^\perp
\end{eqnarray*}
for $s \in \{1,2, \ldots, n-k\}$; and
\begin{eqnarray*}
A_{n-k+s} = \binom{n}{k-s} \sum_{j=0}^{s-1} (-1)^j \binom{n-k+s}{j}(q^{s-j}-1) +
             (-1)^s \binom{k}{s}A_{n-k}
\end{eqnarray*}
for $s \in \{1,2, \ldots, k\}$.
\end{lemma}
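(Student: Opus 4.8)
The plan is to prove both identities by one support-counting argument, applied first to $\C$ and then to its dual $\C^\perp$, which is again near MDS with parameters $[n,n-k,k]$. For a subset $T\subseteq\{1,\dots,n\}$ let $\C(T)=\{\bc\in\C:\support(\bc)\subseteq T\}$ be the subcode of codewords supported inside $T$, and set $\phi(j)=\sum_{|T|=j}q^{\dim\C(T)}=\sum_{|T|=j}|\C(T)|$. Writing $B_S$ for the number of codewords with support exactly $S$, we have $|\C(T)|=\sum_{S\subseteq T}B_S$, so M\"obius inversion over the subset lattice followed by grouping pairs $(S,T)$ with $T\subseteq S$ by $|T|=j$ gives
\[
A_w=\sum_{j=0}^{w}(-1)^{w-j}\binom{n-j}{w-j}\,\phi(j).
\]
Thus everything reduces to computing $\dim\C(T)$, and the whole point is that for a near MDS code this dimension depends only on $|T|$, with the single exception of the critical value $|T|=n-k$.

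The structural heart is the following computation, which uses only the two defining distance conditions. Fix a generator matrix $G$ of $\C$; then $\dim\C(T)=k-\rank(G_{\bar T})$, where $\bar T$ is the complement of $T$. Since $\C^\perp$ has minimum distance $k$, every $k-1$ columns of $G$ are linearly independent. Hence if $|T|\ge n-k+1$ then $|\bar T|\le k-1$, so $\rank(G_{\bar T})=|\bar T|=n-|T|$ and $\dim\C(T)=|T|-(n-k)$ exactly; if $|T|=n-k$ then $\bar T$ has $k$ columns containing $k-1$ independent ones, forcing $\rank(G_{\bar T})\in\{k-1,k\}$ and hence $\dim\C(T)\in\{0,1\}$; and if $|T|<n-k$ then no codeword of weight $\ge n-k$ can be supported on $T$, so $\dim\C(T)=0$. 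These are precisely the values for a genuine MDS code of the same length and dimension, \emph{except} at $|T|=n-k$. Sorting the weight-$(n-k)$ codewords by their supports shows that among the $\binom{n}{n-k}$ sets of size $n-k$ exactly $A_{n-k}/(q-1)$ carry a one-dimensional subcode (each contributing its $q-1$ minimum-weight codewords), whence $\phi(n-k)=\binom{n}{n-k}+A_{n-k}$, while $\phi(j)=\binom{n}{j}q^{\max(0,\,j-n+k)}$ agrees with the MDS value for every $j\ne n-k$.

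Substituting into the inversion formula, the terms with $j\ne n-k$ reassemble into the weight distribution of an MDS $[n,k,n-k+1]$ code, namely $A_{n-k+s}^{\mathrm{MDS}}=\binom{n}{k-s}\sum_{j=0}^{s-1}(-1)^j\binom{n-k+s}{j}(q^{s-j}-1)$ (the known closed form, which I would quote rather than re-derive), while the single extra term $j=n-k$ contributes $(-1)^{s}\binom{k}{s}A_{n-k}$ for $w=n-k+s$; this is the second identity. Running the identical argument on $\C^\perp$, a near MDS code of dimension $n-k$ and minimum distance $k$, and interchanging the roles of $k$ and $n-k$, yields the first identity with correction term $(-1)^s\binom{n-k}{s}A_k^\perp$. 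I expect the main obstacle to be the bookkeeping in the middle step: correctly converting $d(\C^\perp)=k$ into ``every $k-1$ columns of $G$ are independent'' and thereby pinning $\dim\C(T)$ to its MDS value for all $|T|\ne n-k$. Once the single defective support size is isolated in this way, the remainder is the standard MDS computation together with one inclusion–exclusion term.
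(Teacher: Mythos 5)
The paper never proves this lemma at all: it is imported verbatim from the cited reference of Dodunekov and Landgev, so there is no internal proof to compare yours against, and your argument must be judged on its own merits. So judged, it is correct, and it is in essence the classical derivation of the NMDS weight distribution. The inversion identity $A_w=\sum_{j=0}^{w}(-1)^{w-j}\binom{n-j}{w-j}\phi(j)$ with $\phi(j)=\sum_{|T|=j}|\C(T)|$ is valid for every linear code; the dictionary $\dim\C(T)=k-\rank(G_{\bar T})$ combined with $d(\C^\perp)=k$ correctly pins $\dim\C(T)=\max(0,\,|T|-(n-k))$ for every $|T|\neq n-k$ and leaves only the dichotomy $\dim\C(T)\in\{0,1\}$ at the critical size; the count $\phi(n-k)=\binom{n}{n-k}+A_{n-k}$ is right, because a one-dimensional $\C(T)$ with $|T|=n-k$ consists precisely of the $q-1$ scalar multiples of a single minimum-weight codeword with support exactly $T$; and the lone extra term at $j=n-k$ contributes $(-1)^{s}\binom{k}{s}A_{n-k}$ when $w=n-k+s$, which is exactly the correction in the statement. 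Since $\C^\perp$ is itself NMDS with parameters $[n,n-k,k]$, rerunning the argument with $k$ and $n-k$ interchanged (and noting $\binom{n}{(n-k)-s}=\binom{n}{k+s}$) yields the first identity.

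One point should be restated more carefully. The $j\neq n-k$ terms do not literally ``reassemble into the weight distribution of an MDS $[n,k,n-k+1]$ code,'' because such a code need not exist: for the very parameters this lemma is applied to in the paper, e.g.\ length $q+4$ and dimension $3$ over $\gf_q$ with $q$ even, the longest $3$-dimensional MDS code has length $q+2$. What your argument actually requires is the polynomial identity in $q$
\begin{equation*}
\sum_{j=0}^{w}(-1)^{w-j}\binom{n-j}{w-j}\binom{n}{j}\,q^{\max(0,\,j-n+k)}
=\binom{n}{w}\sum_{j=0}^{w-(n-k)-1}(-1)^{j}\binom{w}{j}\bigl(q^{\,w-(n-k)-j}-1\bigr),
\end{equation*}
which is exactly what the standard derivation of the MDS weight enumerator establishes; alternatively, it holds identically in $q$ because both sides are polynomials in $q$ that agree for every prime power $q\geq n-1$, where (extended, shortened) Reed--Solomon codes supply genuine MDS codes. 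Quoting the identity in this algebraic form, rather than as the spectrum of a hypothetical code of the same length, closes the only loophole in your write-up.
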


Despite the recurrence relations in Lemma \ref{lem-N1}, the weight distributions of NMDS codes can't be totally determined. Two NMDS codes with the same parameters many have different weight distributions. In this paper, some NMDS codes with the same parameters but different weight distributions will be constructed.

The following is another useful property of NMDS codes.

\begin{lemma}[\cite{FaldumWillems97}]\label{lem-N2}
Let $\mathcal{C}$ be an NMDS code. Then for any minimum weight codeword $\bc$ in $\mathcal{C}$, there exists, up to a multiple, a unique minimum weight codeword $\bc^\perp$ in $\mathcal{C}^\perp$ such that
$\support(\bc) \cap \support(\bc^\perp)=\emptyset$, where $\support(\bc)=\{1 \leq i \leq n: c_i \neq 0\}$
denotes the support of the codeword $\bc=(c_1, \ldots, c_n)$. Particularly, $\mathcal{C}$ and its dual have the same number of minimum weight codewords.
\end{lemma}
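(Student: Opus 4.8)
The plan is to reduce the entire statement to a rank computation on a single $k \times k$ submatrix of a generator matrix of $\mathcal{C}$. Since $\mathcal{C}$ is NMDS, both $\mathcal{C}$ and $\mathcal{C}^\perp$ are AMDS, so $\mathcal{C}$ has minimum distance $n-k$ and $\mathcal{C}^\perp$ has minimum distance $k$. Fix a minimum weight codeword $\bc$ of $\mathcal{C}$; then $|\support(\bc)|=n-k$, so its zero set $Z=\{1,\dots,n\}\setminus\support(\bc)$ has size $k$. A minimum weight codeword $\bc^\perp$ of $\mathcal{C}^\perp$ has weight $k$, and the condition $\support(\bc)\cap\support(\bc^\perp)=\emptyset$ forces $\support(\bc^\perp)\subseteq Z$; since $|\support(\bc^\perp)|=k=|Z|$, this is equivalent to $\support(\bc^\perp)=Z$. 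Hence it suffices to show that the codewords of $\mathcal{C}^\perp$ vanishing outside $Z$ form a one-dimensional space whose nonzero elements all have full support $Z$.

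First I would fix a generator matrix $G$ of $\mathcal{C}$ with columns $g_1,\dots,g_n$, and let $G_Z$ be the $k\times k$ submatrix formed by the columns indexed by $Z$. A vector supported on $Z$ lies in $\mathcal{C}^\perp$ iff its restriction to $Z$ lies in $\ker G_Z$, so the space in question is isomorphic to $\ker G_Z$. Writing $\bc=\bu G$, the defining property $c_j=\bu g_j=0$ for $j\in Z$ says that $\bu$ is a nonzero left null vector of $G_Z$; hence $\rank G_Z\le k-1$ and $\dim\ker G_Z\ge 1$. Any nonzero $\bx\in\ker G_Z$ extends by zeros to a nonzero codeword of $\mathcal{C}^\perp$ supported in $Z$, of weight at most $k$; since the minimum distance of $\mathcal{C}^\perp$ is $k$, it has weight exactly $k$ and support exactly $Z$. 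This establishes existence of $\bc^\perp$.

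The crux is uniqueness, i.e. $\dim\ker G_Z=1$. If instead $\dim\ker G_Z\ge 2$, I would pick two linearly independent codewords $\bc_1^\perp,\bc_2^\perp\in\mathcal{C}^\perp$ supported on $Z$ and fix some $i\in Z$; solving the single linear equation in the $i$-th coordinates yields a nontrivial combination $\lambda\bc_1^\perp+\mu\bc_2^\perp$ whose $i$-th entry vanishes. This combination is a nonzero codeword of $\mathcal{C}^\perp$ supported in $Z\setminus\{i\}$, hence of weight at most $k-1$, contradicting $d^\perp=k$. Therefore $\dim\ker G_Z=1$ and $\bc^\perp$ is unique up to a scalar multiple. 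The main work is packaging this combination argument correctly; existence via the rank deficiency is the easy half.

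Finally, for the counting statement I would set up a bijection on scalar classes. The construction $[\bc]\mapsto[\bc^\perp]$ above sends each class of minimum weight codewords of $\mathcal{C}$ to one of $\mathcal{C}^\perp$; since $(\mathcal{C}^\perp)^\perp=\mathcal{C}$, the identical construction applied to $\mathcal{C}^\perp$ gives a map back. Because $\support(\bc^\perp)=Z$ is exactly the zero set of $\bc$, the zero set of $\bc^\perp$ equals $\support(\bc)$, and $\bc$ is a minimum weight codeword of $\mathcal{C}$ supported there; uniqueness then forces the two maps to be mutually inverse. Hence the classes are in bijection, and multiplying by the $q-1$ scalars in each class gives $A_{n-k}=A_k^\perp$, so $\mathcal{C}$ and $\mathcal{C}^\perp$ have the same number of minimum weight codewords.
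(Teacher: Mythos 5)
The paper does not prove this lemma at all: it is quoted verbatim from \cite{FaldumWillems97} and used as a black box, so there is no internal proof to compare against. Your argument is correct and self-contained, and it is essentially the standard linear-algebra proof of this fact. All the key steps check out: a word supported on the zero set $Z$ of $\bc$ lies in $\mathcal{C}^\perp$ exactly when its restriction lies in the right kernel of the $k\times k$ column submatrix $G_Z$; writing $\bc=\bu G$ exhibits $\bu$ as a nonzero left null vector, so $\rank G_Z\le k-1$ and the right kernel is nontrivial, giving existence (with full support $Z$ forced by $d(\mathcal{C}^\perp)=k$); if the kernel had dimension at least two, a nontrivial combination killing one fixed coordinate of $Z$ would be a nonzero dual codeword of weight at most $k-1$, a contradiction, giving uniqueness up to scalar; and since the construction applied to $\mathcal{C}^\perp$ (legitimate because $(\mathcal{C}^\perp)^\perp=\mathcal{C}$ and NMDS is a self-dual notion) sends the class of $\bc^\perp$ back to the class of $\bc$ by the same uniqueness statement, the scalar classes are in bijection and $A_{n-k}=A_k^\perp$ follows after multiplying by $q-1$. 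What your write-up buys, compared with the paper's bare citation, is a short elementary proof that also makes explicit the stronger structural fact underlying the lemma: the dual codewords vanishing on $\support(\bc)$ form exactly a one-dimensional space, every nonzero element of which has support equal to the complement of $\support(\bc)$. One stylistic remark: it is worth stating explicitly at the outset that linear independence of two dual codewords supported on $Z$ is the same as linear independence of their restrictions to $Z$, since that is what licenses passing between $\ker G_Z$ and the codewords themselves in the uniqueness step.
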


\subsection{Some properties of oval polynomials}
In this subsection, let $q=2^m$ with $m$ a positive integer.
In order to calculate the parameters and weight distributions of the NMDS codes in this paper, here we list some properties of oval polynomial. First, we briefly introduce the definition of oval polynomial in the following.

\begin{lemma}[\cite{LN97}]\label{thm-opoly}
Let $m \geq 2$. Any hyperoval in the Desarguesian projective plane $\PG(2, q)$ can be written as
$$
\cH(f)=\{(f(c), c, 1): c \in \gf_q\} \cup \{(1,0,0)\}  \cup \{(0,1,0)\},
$$
where $f \in \gf_q[x]$ is a polynomial such that
\begin{enumerate}
\item $f$ is a permutation polynomial of $\gf_q$ with $\deg(f)<q$ and $f(0)=0$, $f(1)=1$; and
\item for each $a \in \gf(q)$, $g_a(x):=(f(x+a)+f(a))x^{q-2}$ is also a permutation polynomial of $\gf_q$.
\end{enumerate}
Conversely, every such set $\cH(f)$ is a hyperoval.
\end{lemma}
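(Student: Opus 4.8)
The plan is to realise the hyperoval inside the affine plane after a projective normalization, and then translate the ``no three collinear'' condition into the two stated polynomial conditions. Since $m\ge 2$ forces $q+2\ge 6$ points, and since $\PGL(3,q)$ acts transitively (indeed sharply transitively) on ordered quadruples of points no three of which are collinear, I would first select four points of the hyperoval---these are automatically in general position---and apply the unique collineation carrying them to the standard frame $(1,0,0),(0,1,0),(0,0,1),(1,1,1)$. After this step $(1,0,0)$ and $(0,1,0)$ lie on the hyperoval, so the line $z=0$, which already contains both of them, meets the hyperoval in no further point; hence the remaining $q$ points all have nonzero last coordinate and may be scaled to the affine form $(x,y,1)$.

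Next I would read off the coordinate structure from the two chosen points at infinity. The affine lines through $(1,0,0)$ are exactly the lines $y=c$, and since $(1,0,0)$ lies on the hyperoval each such line meets it in precisely one further affine point; this produces for every $c\in\gf_q$ a unique point $(g(c),c,1)$ and hence a well-defined map $g\colon\gf_q\to\gf_q$. Likewise the affine lines through $(0,1,0)$ are the lines $x=c$, and the same count forces $g$ to be injective, thus a permutation of $\gf_q$. The two frame points $(0,0,1)$ and $(1,1,1)$ give $g(0)=0$ and $g(1)=1$. Representing $g$ by its reduced polynomial $f$ of degree $<q$, which is legitimate because $x^q=x$ as a function on $\gf_q$, yields a permutation polynomial fulfilling condition~(1).

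The heart of the argument, and the step I expect to be the main obstacle, is recasting collinearity of the affine points in terms of $g_a$. The only lines not yet accounted for are the oblique affine lines, which avoid both $(1,0,0)$ and $(0,1,0)$; for these I would compute the chord ``slope'' between the points indexed by $a$ and $a+x$ with $x\neq 0$, namely $(f(a+x)+f(a))/x$, and observe that in characteristic $2$ this equals $g_a(x)$ once one interprets $x^{q-2}$ as $x^{-1}$ for $x\neq 0$ and as $0$ for $x=0$. A short verification shows $g_a(0)=0$ and, using that $g$ is a permutation, that $g_a$ maps $\gf_q^*$ into $\gf_q^*$; consequently $g_a$ is a permutation of $\gf_q$ if and only if it is injective on $\gf_q^*$. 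Three distinct affine points, indexed by $a$, $a+x_1$, $a+x_2$, are collinear exactly when these two chords share a slope, i.e. when $g_a(x_1)=g_a(x_2)$; hence the point set has no three collinear affine points if and only if $g_a$ is injective on $\gf_q^*$ for every $a$, which is precisely condition~(2).

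Finally, for the converse I would run the same computation in reverse: starting from an $f$ obeying (1) and (2), the set $\cH(f)$ has exactly $q+2$ points, and the slope identity shows that every line---through $(1,0,0)$, through $(0,1,0)$, the line $z=0$, or oblique---meets $\cH(f)$ in at most two points, so $\cH(f)$ is a hyperoval. The only delicate bookkeeping throughout is the treatment of the degenerate input $x=0$ in $x^{q-2}$ together with the repeated use of the fact that each line through a point of a hyperoval is a secant, which is what converts the incidence counts into the function-theoretic statements above.
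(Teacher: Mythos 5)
The paper never proves this lemma: it is quoted as a known result from \cite{LN97}, so there is no internal proof to compare your argument against. Judged on its own merits, your proposal is correct, and it is essentially the classical normalization argument for o-polynomials. Two things you did deserve emphasis. First, you correctly recognized that the statement read literally cannot hold (a hyperoval need not contain $(1,0,0)$ or $(0,1,0)$ at all, e.g.\ when the line $z=0$ is external to it), and that the intended meaning is ``after a suitable choice of homogeneous coordinates''; your use of the sharply transitive action of $\PGL(3,q)$ on ordered frames supplies exactly that reading, and sending four hyperoval points to the standard frame is what forces $f(0)=0$ and $f(1)=1$. Second, the two facts that drive the whole translation --- every line through a point of a hyperoval is a secant (the $q+1$ lines through it must absorb the remaining $q+1$ points, at most one each), and the identification of $x^{q-2}$ with $x^{-1}$ on $\gf_q^*$ together with $g_a(0)=0$ and $g_a(\gf_q^*)\subseteq\gf_q^*$, so that ``permutation'' is equivalent to ``injective on $\gf_q^*$'' --- are both stated and used correctly, in both directions of the equivalence. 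I see no gap.
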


If a polynomial meets the two conditions of Lemma \ref{thm-opoly}, we call it an \emph{oval polynomial}. It is easy to deduce that $f(x)=x^2$ is an oval polynomial over $\gf_q$ for all $m \geq 2$. The followings are some properties of oval polynomials.

\begin{lemma}[\cite{Masch98}]\label{lem-opoly1}
A polynomial $f$ with $f(0)=0$ over $\gf_q$ is an oval polynomial if and only if $f_u:=f(x)+ux$ is $2$-to-$1$ for any $u \in \gf_q^*$.
\end{lemma}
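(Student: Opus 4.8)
The plan is to connect the $2$-to-$1$ condition to the geometry of $\cH(f)$ through a single collision identity, and then pin down the exact fibre sizes by a counting argument. First I would record the elementary fact that for $x \neq y$ in $\gf_q$ one has $f_u(x) = f_u(y)$ if and only if $u = (f(x)+f(y))/(x+y)$; thus each unordered pair $\{x,y\}$ produces exactly one value of $u$, and this $u$ is nonzero precisely when $f(x) \neq f(y)$. A short determinant computation then shows that three distinct elements $x,y,z$ satisfy $f_u(x)=f_u(y)=f_u(z)$ for a common $u$ exactly when the affine points $(f(x),x,1),(f(y),y,1),(f(z),z,1)$ are collinear in $\PG(2,q)$, i.e. when $f(x)(y+z)+f(y)(x+z)+f(z)(x+y)=0$. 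By Lemma \ref{thm-opoly}, $f$ is an oval polynomial if and only if $\cH(f)$ is a hyperoval, and checking collinearity against the two points at infinity reduces the hyperoval condition to two statements: that $f$ be a bijection (collinearity with $(0,1,0)$ amounts to $f(a)=f(b)$) and that no three affine points be collinear (collinearity with $(1,0,0)$ forces $a=b$ and is automatic, as is the purely infinite configuration).

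Next I would set up the counting identity. For $u \in \gf_q^*$ let $P(u)$ be the number of ordered pairs $(x,y)$ with $x \neq y$ and $f_u(x)=f_u(y)$. Summing the collision identity over all ordered pairs and discarding those with $f(x)=f(y)$, which force $u=0$, gives $\sum_{u \in \gf_q^*} P(u) = q(q-1) - M$, where $M$ is the number of ordered pairs $x \neq y$ with $f(x)=f(y)$. I would then use two observations: a map on $\gf_q$ all of whose nonempty fibres have size $2$ (a $2$-to-$1$ map) contributes exactly $P(u)=q$, since it has $q/2$ fibres each giving two ordered pairs, whereas if merely every fibre has size at most $2$ then $P(u)\le q$, with equality if and only if no fibre has size $1$. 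The occurrence of a fibre of size at least $3$ is, by the identity above, exactly the occurrence of a collinear affine triple.

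For the forward direction, assume $f$ is an oval polynomial. Then $f$ is a bijection, so $M=0$ and $\sum_{u\in\gf_q^*}P(u)=q(q-1)$; moreover no three affine points are collinear, so every fibre of every $f_u$ has size at most $2$ and hence $P(u)\le q$. Since there are $q-1$ summands, each at most $q$, summing to $(q-1)q$, each $P(u)$ must equal $q$, which forces every $f_u$ to be exactly $2$-to-$1$. For the converse, assume $f_u$ is $2$-to-$1$ for every $u\in\gf_q^*$. Then $P(u)=q$ for each of the $q-1$ values of $u$, so the identity gives $q(q-1)=q(q-1)-M$, whence $M=0$ and $f$ is a bijection. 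Finally, a collinear affine triple would produce a fibre of size at least $3$ for some $f_u$ with $u\neq 0$, nonzero because $f$ is injective, contradicting the $2$-to-$1$ hypothesis; so no three affine points are collinear and $f$ is an oval polynomial.

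The main obstacle is not the inequality that every fibre has size at most $2$, which is immediate from the no-three-collinear property, but upgrading it to the exact $2$-to-$1$ statement, i.e. excluding fibres of size $1$, which geometrically means excluding tangent lines to the hyperoval. This is precisely what the global counting identity delivers: the total collision count is rigid, so the per-$u$ bound $P(u)\le q$ can only be met with equality across the board. The same rigidity, read in the opposite direction, is what forces $f$ to be a permutation in the converse even though the hypothesis constrains only $f_u$ for $u\neq 0$.
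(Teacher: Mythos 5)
The paper gives no proof of this lemma: it is quoted from Maschietti \cite{Masch98}, so there is no internal argument to compare yours against. Judged on its own, your proof is correct and is a clean reconstruction of the standard argument. The collision identity $f_u(x)=f_u(y)\iff u=(f(x)+f(y))/(x+y)$, the determinant computation identifying a common-$u$ triple collision with collinearity of $(f(x),x,1),(f(y),y,1),(f(z),z,1)$, and the reduction of the hyperoval property of $\cH(f)$ to ``$f$ bijective and no three affine points collinear'' are all right in characteristic $2$. Your counting step is also sound, and it sits exactly where the content of the lemma lies: geometrically, $|f_u^{-1}(v)|$ is the number of affine points of $\cH(f)$ on the line $X+uY+vZ=0$, and for $u\neq 0$ these are precisely the lines avoiding both infinite points, so the hypothesis says every such line is external or secant; the nontrivial implication is that this alone forces $f$ to be a permutation (i.e.\ rules out tangent lines through $(0,1,0)$), and your rigidity argument --- $q-1$ terms $P(u)\le q$ summing to $q(q-1)-M$ with $M\ge 0$ --- delivers both directions of that step at once.

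One caveat, which you inherit from the paper's formulation rather than introduce yourself: you invoke Lemma \ref{thm-opoly} as the biconditional ``$f$ is an oval polynomial iff $\cH(f)$ is a hyperoval,'' but the paper's definition of oval polynomial also demands $\deg(f)<q$ and the normalization $f(1)=1$, and the $2$-to-$1$ condition cannot detect these: $f(x)=cx^2$ with $c\neq 0,1$ satisfies $f(0)=0$ and has every $f_u$ exactly $2$-to-$1$ (the fibres are the pairs $\{x,\,x+u/c\}$), yet $f(1)=c\neq 1$. So, read against the paper's definition literally, the statement itself needs the normalization either dropped or added to the hypothesis; what your argument actually proves --- and what the lemma is meant to say --- is that, for $f$ with $f(0)=0$, the condition ``$f_u$ is $2$-to-$1$ for all $u\in\gf_q^*$'' characterizes exactly those $f$ for which $\cH(f)$ is a hyperoval in $\PG(2,q)$. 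That is the form in which the lemma is used later in the paper, so this is a defect of bookkeeping, not of your mathematics.
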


\begin{lemma}[\cite{Masch98}]\label{lem-opoly2}
$f$ is an oval polynomial over $\gf_q$ if and only if
\begin{enumerate}
\item $f$ is a permutation of $\gf_q$; and
\item
$$
\frac{f(x)+f(y)}{x+y} \neq \frac{f(x)+f(z)}{x+z}
$$
for all pairwise-distinct $x, y, z$ in $\gf_q$.
\end{enumerate}
\end{lemma}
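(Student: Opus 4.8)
The plan is to deduce the lemma from the $2$-to-$1$ characterization in Lemma \ref{lem-opoly1}, by recasting both the oval-polynomial property and conditions (1)--(2) in terms of the ``collision pairs'' of the maps $f_u(x)=f(x)+ux$. Throughout I keep the standing normalization $f(0)=0$ from the definition of an oval polynomial (the requirement $f(1)=1$ is a harmless scaling that affects none of the conditions below). For $x\neq y$ I write $s(x,y)=\frac{f(x)+f(y)}{x+y}\in\gf_q$ for the slope of the secant through $(x,f(x))$ and $(y,f(y))$; since we are in characteristic $2$, a direct computation gives, for every $u\in\gf_q$,
\[
f_u(x)=f_u(y)\iff f(x)+f(y)=u(x+y)\iff s(x,y)=u .
\]
Thus the unordered pairs on which $f_u$ collides are exactly the pairs whose secant slope equals $u$, and this single observation drives everything.

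First I would translate the two conditions. Reading off the displayed equivalence with $u$ ranging over all of $\gf_q$, condition (2) --- that $s(x,y)\neq s(x,z)$ for all pairwise distinct $x,y,z$ --- says precisely that no value is attained three times by any $f_u$, i.e.\ that \emph{every $f_u$ is at most $2$-to-$1$}. Separately, condition (1), that $f$ is a permutation, is equivalent to saying that no pair $\{x,y\}$ has slope $s(x,y)=0$, i.e.\ that every secant slope lies in $\gf_q^{*}$ (this also makes the $u=0$ instance of condition (2) automatic). By Lemma \ref{lem-opoly1}, being an oval polynomial is equivalent to $f_u$ being \emph{exactly} $2$-to-$1$ for every $u\in\gf_q^{*}$. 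Hence the whole lemma reduces to one assertion about maps on the even-sized set $\gf_q$: ``permutation together with at-most-$2$-to-$1$ for every nonzero slope'' is the same as ``exactly-$2$-to-$1$ for every nonzero slope.''

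The crux is a double count of the $\binom{q}{2}$ unordered pairs, sorted by slope. For $u\in\gf_q^{*}$ let $P_u$ be the set of pairs $\{x,y\}$ with $s(x,y)=u$; by the boxed equivalence these are exactly the collision pairs of $f_u$, and the $P_u$ are disjoint for distinct $u$ because a pair has a unique slope. Being at most $2$-to-$1$ forces the collision pairs of $f_u$ to form a matching on the $q$ points, so $|P_u|\le q/2$, while being exactly $2$-to-$1$ (a perfect matching) forces $|P_u|=q/2$. If $f$ is a permutation then no pair has slope $0$, so $\sum_{u\in\gf_q^{*}}|P_u|=\binom{q}{2}=(q-1)\cdot\frac{q}{2}$; with $q-1$ summands each at most $q/2$, equality forces $|P_u|=q/2$ for every $u$, so every $f_u$ is exactly $2$-to-$1$. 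Conversely, if every $f_u$ is exactly $2$-to-$1$ then $\sum_{u\in\gf_q^{*}}|P_u|=(q-1)\cdot\frac{q}{2}=\binom{q}{2}$, which exhausts all pairs and leaves none of slope $0$, so $f$ is injective, hence a permutation (and in particular at most $2$-to-$1$ for each $u$). Feeding this equivalence back through the two translations closes the argument in both directions.

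I expect this pigeonhole/double-counting step to be the only real obstacle. The slope reformulation and the reading of (1)--(2) are bookkeeping, but upgrading ``at most $2$-to-$1$'' to ``exactly $2$-to-$1$,'' and extracting the permutation property in the reverse direction, genuinely uses that the $q-1$ slope classes must partition all $\binom{q}{2}$ pairs with each class as large as it can be.
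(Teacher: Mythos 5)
The paper offers no proof of this lemma at all: it is quoted from Maschietti \cite{Masch98}, just as Lemma \ref{lem-opoly1} is, and the two are treated as independent known facts. Your argument is therefore necessarily a different route, and it is a correct one. The observation that, in characteristic $2$, $f_u(x)=f_u(y)$ for $x\neq y$ exactly when the secant slope $\frac{f(x)+f(y)}{x+y}$ equals $u$ correctly identifies condition (2) with ``every $f_u$, $u\in\gf_q$, is at most $2$-to-$1$'' and condition (1) with ``no secant has slope $0$''; and the double count of the $\binom{q}{2}$ pairs against the $q-1$ nonzero slopes, each slope class being a partial matching of size at most $q/2$, is exactly what is needed to upgrade ``at most $2$-to-$1$'' to ``exactly $2$-to-$1$'' in one direction and to recover injectivity of $f=f_0$ in the other. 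What your derivation buys is a self-contained proof that the paper's two quoted characterizations of oval polynomials are equivalent to each other by elementary counting, without ever returning to the hyperoval definition in Lemma \ref{thm-opoly}. One caveat you flag deserves to be stated even more bluntly: the normalization $f(0)=0$ is not implied by conditions (1)--(2), and without it the lemma as printed is literally false under the paper's definition of oval polynomial (for instance $f(x)=x^2+1$ is a permutation whose secant slopes are $x+y$, so it satisfies (1) and (2), yet $f(0)\neq 0$); so your standing assumption $f(0)=0$ is not merely convenient but necessary, and it is the same implicit normalization under which Lemma \ref{lem-opoly1} is stated.
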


\begin{lemma}[\cite{Wqiu}]\label{lem-opoly3}
Let $m \geq 3$ be odd and let $f(x)$ be an oval polynomial over $\gf_q$ with coefficients in $\gf_2$. Then $f(x)+x+1=0$ has no solution in $\gf_q$.
\end{lemma}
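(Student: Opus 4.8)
The plan is to count the solutions of $f(x)+x+1=0$ in $\gf_q$ (where $q=2^m$) in two independent ways and show that the two counts are incompatible unless there are no solutions at all. The two inputs are the $\gf_2$-rationality of $f$, which controls the Frobenius action on the solution set, and Lemma \ref{lem-opoly1}, which controls its cardinality.

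First I would exploit the hypothesis that $f$ has coefficients in $\gf_2$, which forces $f$ to commute with the Frobenius map $x \mapsto x^2$; that is, $f(x^2)=f(x)^2$ for every $x \in \gf_q$. Consequently, if $\alpha$ is a solution, so that $f(\alpha)=\alpha+1$, then squaring gives $f(\alpha^2)=f(\alpha)^2=(\alpha+1)^2=\alpha^2+1$, so $\alpha^2$ is again a solution. Hence the solution set is closed under Frobenius and therefore splits into Frobenius orbits. Since the orbit of an element of $\gf_{2^m}$ has size equal to its degree over $\gf_2$, which divides $m$, and $m$ is odd, every orbit has odd size.

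Next I would invoke Lemma \ref{lem-opoly1} with $u=1$: the map $f_1(x)=f(x)+x$ is $2$-to-$1$ on $\gf_q$. This means that the fiber $f_1^{-1}(1)$, which is exactly the set of solutions of $f(x)+x+1=0$, has cardinality either $0$ or $2$. The crux is then to rule out the count $2$ using the orbit structure. A solution set of size $2$ would be a union of Frobenius orbits of odd size summing to $2$; the only arithmetic possibility is two orbits of size $1$, i.e. two Frobenius-fixed points. But the fixed points of Frobenius in $\gf_q$ are precisely the elements of $\gf_2=\{0,1\}$, and a direct check using the oval-polynomial normalizations $f(0)=0$ and $f(1)=1$ shows that neither $0$ nor $1$ is a solution (both give $f(x)+x+1=1$). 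So no size-$1$ orbit can occur, every remaining orbit has size at least $3>2$, and a solution set of size exactly $2$ is impossible. Therefore the count is $0$, which is the claim.

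The main obstacle is recognizing the right pairing of hypotheses: the argument hinges on combining the $\gf_2$-rationality (to force odd-sized orbits) with the $2$-to-$1$ property (to force an even count of size $0$ or $2$), so that the oddness of $m$ yields the contradiction. Once these two structural facts are identified, the divisibility and parity bookkeeping, together with the trivial verification that $0$ and $1$ are non-solutions, is routine.
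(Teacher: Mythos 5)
The paper does not prove this lemma at all: it is quoted from \cite{Wqiu} as a known result, so there is no internal proof to compare against, and your argument must stand on its own. It does. The three ingredients fit together exactly as you claim: the $\gf_2$-coefficients give $f(x)^2=f(x^2)$, so the solution set of $f(x)+x+1=0$ is Frobenius-stable and splits into orbits whose sizes are degrees over $\gf_2$, hence odd divisors of the odd integer $m$; Lemma \ref{lem-opoly1} with $u=1$ exhibits the solution set as the fiber $f_1^{-1}(1)$ of a $2$-to-$1$ map, so it has $0$ or $2$ elements; and a $2$-element set partitioned into odd orbits forces two Frobenius fixed points, i.e.\ solutions in $\{0,1\}$, which the normalizations $f(0)=0$ and $f(1)=1$ from Lemma \ref{thm-opoly} rule out (both evaluate $f(x)+x+1$ to $1$). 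This is a complete and correct proof, and it uses precisely the hypotheses ($\gf_2$-rationality, ovality, $m$ odd) in the roles one would expect.
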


\section{Several constructions of NMDS codes}\label{sec3}
In this section, we first present several constructions of NMDS codes over $\gf_q$ with some special matrixes and the oval polynomial $f(x)=x^2$, and then determine their weight enumerators, where $q=2^m$. For convenience, denote by $\dim(\mathcal{C})$ and $d(\mathcal{C})$ the dimension and minimal distance of a linear code $\mathcal{C}$, respectively.
Let $\alpha_0=0,\alpha_1=1,\alpha_2,\cdots,\alpha_{q-1}$ be all elements of $\gf_q$.
\subsection{NMDS code with parameters $[q+4, 3, q+1]$}
Define
\begin{eqnarray*}
G_1=\left[
\begin{array}{lllllllll}
1 & 1 & \cdots & 1 & 1 & 0 & 0 & 1 & 0\\
\alpha_1 & \alpha_2 & \cdots & \alpha_{q-1} & 0 & 0 & 1 & 0 & 1\\
\alpha_1^2 & \alpha_2^2 & \cdots & \alpha_{q-1}^2 & 0 & 1 & 0 & 1 & 1
\end{array}
\right].
\end{eqnarray*}
Obviously, $G_1$ is a $3$ by $q+4$ matrix over $\gf_q$. Let  $\mathcal{C}$ be the linear code over $\gf_q$ with generator matrix $G_1$. Next, we determine the parameters and weight enumerator of $\C$.

\begin{theorem}\label{thm-N4235}
Let $m \geq 3$ be odd. Then the linear code $\mathcal{C}$ is an NMDS code over $\gf_q$ with parameters $[q+4, 3, q+1]$ and weight enumerator
\begin{eqnarray*}
A(z)=1 + (q-1)(q+2) z^{q+1} + \frac{q(q-1)(q+1)}{2} z^{q+2} + \\
(q-1)(q-2) z^{q+3} + \frac{(q-1)(q^2-3q+2)}{2} z^{q+4}.
\end{eqnarray*}
\end{theorem}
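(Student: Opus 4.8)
The code $\mathcal{C}$ is generated by a $3 \times (q+4)$ matrix whose first $q-1$ columns are $(1,\alpha_i,\alpha_i^2)^T$ for the nonzero $\alpha_i$, followed by five "extra" columns. My strategy is to compute the weight distribution of $\mathcal{C}$ directly, show the minimum distance is $q+1$, verify that $\mathcal{C}$ is NMDS by checking $d(\mathcal{C}^\perp)=n-k-1 = q+1$ (or equivalently that $\mathcal{C}$ is AMDS and its dual is too), and then confirm the closed-form weight enumerator. A generic codeword is $\mathbf{c}(a,b,c) = (a,b,c)G_1$ for $(a,b,c)\in\gf_q^3$, so its coordinates on the first block are the evaluations $a + b\alpha_i + c\alpha_i^2$ of the polynomial $P(x) = cx^2 + bx + a$ at the points $\alpha_1,\ldots,\alpha_{q-1}$, together with $a$ at $x=0$, and then five further linear forms coming from the extra columns. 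The weight of $\mathbf{c}(a,b,c)$ is $n$ minus the number of zero coordinates, so everything reduces to counting roots of the quadratic $P$ among the $\alpha_i$ and counting which of the extra forms vanish.

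First I would organize the count by the degree of $P$. When $c=0$ we get linear (or constant) forms, contributing the low-weight and high-weight words; when $c\neq 0$ we have a genuine quadratic, and here is where the oval-polynomial machinery enters. Since $q=2^m$ and $f(x)=x^2$ is an oval polynomial, the key input is that in characteristic $2$ the map $x\mapsto x^2$ is a bijection (the Frobenius), so $cx^2+bx+a=0$ has either $0$ or $2$ roots in $\gf_q$ when $b\neq 0$ and exactly one root when $b=0$; this is exactly the $2$-to-$1$ property of $f_u(x)=x^2+ux$ from Lemma~\ref{lem-opoly1}. Controlling how many of those roots land in $\gf_q^*$ versus equal $0$, together with the values of the five extra linear forms, is what separates the codewords of weight $q+1, q+2, q+3$ from those of weight $q+4$. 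I expect the role of Lemma~\ref{lem-opoly3} (that $f(x)+x+1=0$ has no solution for odd $m\ge 3$) to be decisive precisely in ruling out certain root configurations, which is why the hypothesis ``$m\ge 3$ odd'' appears; this is the step I would watch most carefully.

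Once the number of codewords of each weight $q+1,\dots,q+4$ is tabulated by these root/evaluation counts, I would assemble the weight enumerator and read off that $A_{q+1}=(q-1)(q+2)>0$ while $A_i=0$ for $i<q+1$, giving $d(\mathcal{C})=q+1=n-k$, so $\mathcal{C}$ is AMDS. To finish proving $\mathcal{C}$ is NMDS, rather than computing $\mathcal{C}^\perp$ from scratch I would invoke the recurrence of Lemma~\ref{lem-N1} together with Lemma~\ref{lem-N2}: it suffices to show $A_{q+1}=A_{q+1}^\perp$ (equal number of minimum-weight words in code and dual) and that the dual has no codeword of weight below $q+1$, i.e.\ $A_1^\perp=\cdots=A_q^\perp=0$, which is equivalent to $d(\mathcal{C}^\perp)\ge q+1$. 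The cleanest route is to show $d(\mathcal{C}^\perp)=q+1$ by arguing that any $q$ columns of $G_1$ have rank $3$ (so no $\le q$ columns are dependent in the way that would produce a low-weight dual word) while some $q+1$ columns fail to; equivalently, one checks the combinatorial rank conditions on column subsets of $G_1$, which is where the concrete choice of the five extra columns $(1,0,0),(0,0,1),(0,1,0),(1,0,1),(0,1,1)$ must be used.

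The main obstacle I anticipate is the bookkeeping for the quadratic case $c\neq 0$: one must partition the $(a,b,c)$ triples according to the exact number of roots of $cx^2+bx+a$ in $\gf_q^*$ and the vanishing pattern of the five appended coordinates, and then verify that the resulting multiplicities collapse into the stated four-term enumerator. The substantive (non-routine) part of that bookkeeping is establishing that the forbidden configuration $f(x)+x+1=0$ never occurs, so that the contributions to weights $q+2$ and $q+3$ come out exactly as claimed; I would isolate that as a short lemma-style computation leaning on Lemma~\ref{lem-opoly3}, and treat the remaining tallying as a finite, if tedious, case analysis whose totals I would cross-check against the two sanity constraints $\sum_i A_i = q^3-1$ and the MacWilliams/Lemma~\ref{lem-N1} consistency of the dual distribution.
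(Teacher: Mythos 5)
Your core computational plan---expanding $\bf{c}(a,b,c)=(a,b,c)G_1$ as the evaluations of $P(x)=a+bx+cx^2$ over $\gf_q$ plus four extra linear forms $c,\ b,\ a+c,\ b+c$, and tallying weights by root counts and vanishing patterns---is viable, and it is genuinely different from the paper's route (the paper never computes the enumerator of $\mathcal{C}$ directly; it counts the weight-$3$ codewords of $\mathcal{C}^\perp$ by a $3\times 3$ submatrix case analysis, transfers that count to $A_{q+1}$ via Lemma \ref{lem-N2}, and then invokes the recurrence of Lemma \ref{lem-N1}). You also correctly isolate where the hypothesis ``$m\ge 3$ odd'' bites: for instance, when $c\neq 0$ and $a+c=b+c=0$ the first $q$ coordinates are the values of $c(x^2+x+1)$ on $\gf_q$, which are all nonzero exactly when $m$ is odd (Lemma \ref{lem-opoly3}); for even $m$ such words have weight $q<n-k$ and the theorem is false.

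However, your plan for certifying that $\mathcal{C}$ is NMDS rests on a false statement and would fail. You propose to show $d(\mathcal{C}^\perp)\ge q+1$ (equivalently $A_1^\perp=\cdots=A_q^\perp=0$) ``by arguing that any $q$ columns of $G_1$ have rank $3$.'' This is impossible: $\mathcal{C}^\perp$ is a $[q+4,\,q+1]$ code, so the Singleton bound already forces $d(\mathcal{C}^\perp)\le 4$; equivalently, $G_1$ is the parity-check matrix of $\mathcal{C}^\perp$ and has only $3$ rows, so \emph{any} four of its columns are linearly dependent. (Your arithmetic is also off: $n-k-1=q$, not $q+1$; but neither is the relevant quantity.) What NMDS actually requires is $d(\mathcal{C}^\perp)=\dim(\mathcal{C})=3$, and that is the easy direction: no column of $G_1$ is zero, no two columns are proportional, and the $q$-th, $(q+1)$-th and $(q+3)$-th columns are dependent since $(1,0,0)^T+(0,0,1)^T=(1,0,1)^T$. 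Your companion claim $A_{q+1}=A_{q+1}^\perp$ misreads Lemma \ref{lem-N2} in the same way: the lemma equates the numbers of \emph{minimum-weight} codewords, which here means $A_{q+1}=A_3^\perp$, the two minimum weights being $q+1$ and $3$. The repair is cheap---once your direct count yields the full enumerator of $\mathcal{C}$ (hence $d(\mathcal{C})=q+1$ and $\mathcal{C}$ is AMDS), the three-line column argument above shows $\mathcal{C}^\perp$ is AMDS, and Lemmas \ref{lem-N1} and \ref{lem-N2} are then needed only as consistency checks---but as written, the dual-distance step asserts something that cannot be proved because it is false.
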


\begin{proof}
Firstly, it is easy to deduce that $\dim(\mathcal{C})=3$ as the $q$-th, $q+1$-th and $q+2$-th columns of the generator matrix $G_1$ are linearly independent.

\subsubsection*{We then prove that $\mathcal{C}^\perp$ has parameters $[q+4, q+1, 3]$.}

Obviously, $\dim(\mathcal{C}^\perp)=(q+4)-\dim(\mathcal{C})=q+1$. It is easy to find that  no column of $G_1$ is the zero vector and any two columns of $G_1$ are $\gf_q$-linearly independent. Then the minimum distance $d(\mathcal{C}^\perp) > 2$. We also find the $q$-th, $q+1$-th, $q+3$-th columns of $G_1$ are linearly dependent, which means that $\mathcal{C}^\perp$ has codewords of weight $3$. Then we have the minimum distance $d(\mathcal{C}^\perp)=3$. Now we calculate the total number of codewords of weight $3$ in $\mathcal{C}^\perp$. We need to consider the following cases.

{Case 1.1:} Let $x, y, z$ be three pairwise different elements in $\gf_q$. Consider the submatrix
\begin{eqnarray*}
M_{1,1}=\left[
\begin{array}{lll}
1 & 1 & 1 \\
x & y & z \\
x^2 & y^2 & z^2
\end{array}
\right].
\end{eqnarray*}
We have $|M_{1,1}|=(x+y)(x^2+z^2)+(x+z)(x^2+y^2)$. By Lemma \ref{lem-opoly2}, $|M_{1,1}| \neq 0$. Hence, the rank of $M_{1,1}$ is three. In conclusion, $\mathcal{C}^\perp$ has no codeword of weight $3$ whose nonzero coordinates are at the first $q$ locations.

{Case 1.2:} Let $x, y$ be two different elements in $\gf_q$. Consider the submatrix
\begin{eqnarray*}
M_{1,2}=\left[
\begin{array}{lll}
1 & 1 & 0 \\
x & y & 0 \\
x^2 & y^2 & 1
\end{array}
\right].
\end{eqnarray*}
Then we have $|M_{1,2}|=y+x$. Since $x \neq y$, $|M_{1,2}| \neq 0$. The rank of $M_{1,2}$ is $3$. Hence, $\mathcal{C}^\perp$ has no codeword of weight $3$ whose first two nonzero coordinates are at the first $q$ locations and the rest is at the $q+1$-th location.

{Case 1.3:} Let $x, y$ be two different elements in $\gf_q$. Consider the submatrix
\begin{eqnarray*}
M_{1,3}=\left[
\begin{array}{lll}
1 & 1 & 0 \\
x & y & 1 \\
x^2 & y^2 & 0
\end{array}
\right].
\end{eqnarray*}
Then we have $|M_{1,3}|=y^2+x^2=(x+y)^2$. Since $x \neq y$, $|M_{1,3}| \neq 0$. The rank of $M_{1,3}$ is $3$. Hence, $\mathcal{C}^\perp$ has no codeword of weight $3$ whose first two nonzero coordinates are at the first $q$ locations and the rest is at the $q+2$-th location.

{Case 1.4:} Let $x, y$ be two different elements in $\gf_q$. Consider the submatrix
\begin{eqnarray*}
M_{1,4}=\left[
\begin{array}{lll}
1 & 1 & 1 \\
x & y & 0 \\
x^2 & y^2 & 1
\end{array}
\right].
\end{eqnarray*}
It is easy to deduce that $|M_{1,4}|=(y^2+1)x+(x^2+1)y$. If $x=0, y=1$ or $x=1, y=0$, then $|M_{1,4}| \neq 0$ and $\rank(M_{1,4})=3$. Hence, $\mathcal{C}^\perp$ does not have a codeword of weight $3$ whose coordinates are at the first, $q$-th and $q+3$-th locations. Next we count the number of pair $(x,y)$ such that $|M_{1,4}|=0$, where $x,y \in \gf_q \setminus \{0,1\}$ . For any $x,y \in \gf_q \setminus \{0,1\}$, let $a=\frac{x^2+1}{x}$. Then $a \neq 0$. By Lemma \ref{lem-opoly3}, $a \neq 1$. We have $z^2+az$ is $2$-to-$1$ by Lemma \ref{lem-opoly1}. Therefore, there exists another element $y \in \gf_q \setminus \{0,1\}$ such that $x^2+ax=1=y^2+ay$. For this pair $(x, y)$ we have $|M_{1,4}|=0$ and vice versa. Hence, the number of distinct $(x,y) \in \gf_q \setminus \{0,1\}$ such that $|M_{1,4}|=0$ is equal to $(q-2)/2$. Consequently, the number of codewords of weight $3$ in $\mathcal{C}^\perp$  whose first two nonzero coordinates are at the first $q$ locations (expect the first and $q$-th locations) and the rest is at the $q+3$-th location is equal to $(q-2)(q-1)/2$.

{Case 1.5:} Let $x, y$ be two distinct elements in $\gf_q$. Consider the submatrix
\begin{eqnarray*}
M_{1,5}=\left[
\begin{array}{lll}
1 & 1 & 0 \\
x & y & 1 \\
x^2 & y^2 & 1
\end{array}
\right].
\end{eqnarray*}
It is easy to deduce that $|M_{1,5}|=x^2+y^2+x+y$. Now we calculate the number of $(x,y)$ such that $|M_{1,5}|=0$. Let $|M_{1,5}|=x^2+y^2+x+y=0$ which is equivalent to $(x+y)^2=x+y$. Since $x \neq y$, then this equation  can be simplified to $x+y=1$. We then deduce that the number of different $(x,y)$ such that $|M_{1,5}|=0$ is equal to $q/2$. In conclusion, the number of codewords of weight $3$ in $\mathcal{C}^\perp$ whose first two nonzero coordinates are at the first $q$ locations and the rest is at the $q+4$-th location is equal to $q(q-1)/2$.

{Case 1.6:} Let $x$ be an element in $\gf_q$. Consider the submatrix
\begin{eqnarray*}
M_{1,6}=\left[
\begin{array}{lll}
1 & 0 & 0 \\
x & 0 & 1 \\
x^2 & 1 & 0
\end{array}
\right].
\end{eqnarray*}
Then we have $|M_{1,6}|=1$. The rank of $M_{1,6}$ is $3$. Hence, $\mathcal{C}^\perp$ has no codeword of weight $3$ whose first nonzero coordinate is at the first $q$ locations and the rest are at the $q+1$-th and $q+2$-th locations.

{Case 1.7:} Let $x$ be an element in $\gf_q$. Consider the submatrix
\begin{eqnarray*}
M_{1,7}=\left[
\begin{array}{lll}
1 & 0 & 1 \\
x & 0 & 0 \\
x^2 & 1 & 1
\end{array}
\right].
\end{eqnarray*}
Then we have $|M_{1,7}|=x$. If $x \neq 0$, the rank of $M_{1,7}$ is $3$. Hence, $\mathcal{C}^\perp$ has no codeword of weight $3$ whose first nonzero coordinate is at the first $q-1$ locations and the other nonzero coordinates are at the $q+1$-th, $q+3$-th locations. If $x=0$, $M_{1,7}$ has the rank $2$. Consequently, the number of codewords of weight $3$ in $\mathcal{C}^\perp$  whose nonzero coordinates are at the $q$-th, $q+1$-th and $q+3$-th locations is equal to $q-1$.

{Case 1.8:} Let $x$ be an element in $\gf_q$. Consider the submatrix
\begin{eqnarray*}
M_{1,8}=\left[
\begin{array}{lll}
1 & 0 & 0 \\
x & 0 & 1 \\
x^2 & 1 & 1
\end{array}
\right].
\end{eqnarray*}
Then we have $|M_{1,8}|=1$. The rank of $M_{1,8}$ is $3$. Hence, $\mathcal{C}^\perp$ has no codeword of weight $3$ whose first nonzero coordinate is at the first $q$ locations and the rest are at the $q+1$-th and $q+4$-th locations.

{Case 1.9:} Let $x$ be an element in $\gf_q$. Consider the submatrix
\begin{eqnarray*}
M_{1,9}=\left[
\begin{array}{lll}
1 & 0 & 1 \\
x & 1 & 0 \\
x^2 & 0 & 1
\end{array}
\right].
\end{eqnarray*}
Then we have $|M_{1,9}|=x^2+1$. $|M_{1,9}|=0$ if and only if $x=1$. If $x=1$, $M_{1,9}$ has the rank $2$. Consequently, the number of codewords of weight $3$ in $\mathcal{C}^\perp$  whose nonzero coordinates are at the first, $q+2$-th and $q+3$-th locations is equal to $q-1$.

{Case 1.10:} Let $x$ be an element in $\gf_q$. Consider the submatrix
\begin{eqnarray*}
M_{1,10}=\left[
\begin{array}{lll}
1 & 0 & 0 \\
x & 1 & 1 \\
x^2 & 0 & 1
\end{array}
\right].
\end{eqnarray*}
Then we have $|M_{1,10}|=1$. The rank of $M_{1,10}$ is $3$. Hence, $\mathcal{C}^\perp$ has no codeword of weight $3$ whose first nonzero coordinate is at the first $q$ locations and the rest are at the $q+2$-th and $q+4$-th locations.

{Case 1.11:} Let $x$ be an element in $\gf_q$. Consider the submatrix
\begin{eqnarray*}
M_{1,11}=\left[
\begin{array}{lll}
1 & 1 & 0 \\
x & 0 & 1 \\
x^2 & 1 & 1
\end{array}
\right].
\end{eqnarray*}
It is easy to derive that $|M_{1,11}|=x^2+x+1$. By Lemma \ref{lem-opoly3}, $|M_{1,11}| \neq 0$. Then the rank of $M_{1,11}$ is $3$. Hence, $\mathcal{C}^\perp$ has no codeword of weight $3$ whose first nonzero coordinate is at the first $q$ locations and the other two nonzero coordinates are at the last two locations.

{Case 1.12:} Consider the submatrix
\begin{eqnarray*}
M_{1,12}=\left[
\begin{array}{lll}
0 & 0 & 1 \\
0 & 1 & 0 \\
1 & 0 & 1
\end{array}
\right].
\end{eqnarray*}
 Clearly, $|M_{1,12}|=1$. Then the rank of $M_{1,12}$ is $3$. Hence, $\mathcal{C}^\perp$ has no codeword of weight $3$ whose nonzero coordinates are at the $q+1$-th, $q+2$-th and $q+3$-th locations.

{Case 1.13:} Consider the submatrix
\begin{eqnarray*}
M_{1,13}=\left[
\begin{array}{lll}
0 & 0 & 0 \\
0 & 1 & 1 \\
1 & 0 & 1
\end{array}
\right].
\end{eqnarray*}
It is easy to find that the rank of $M_{1,13}$ is $2$. Consequently, the number of codewords of weight $3$ in $\mathcal{C}^\perp$ whose nonzero coordinates are at the $q+1$-th, $q+2$-th and $q+4$-th locations is equal to $q-1$.

{Case 1.14:} Consider the submatrix
\begin{eqnarray*}
M_{1,14}=\left[
\begin{array}{lll}
0 & 1 & 0 \\
0 & 0 & 1 \\
1 & 1 & 1
\end{array}
\right].
\end{eqnarray*}
 Since $|M_{1,14}|=1$, the rank of $M_{1,14}$ is $3$. Hence, $\mathcal{C}^\perp$ has no codeword of weight $3$ whose nonzero coordinates are at the $q+1$-th, $q+3$-th and $q+4$-th locations.

{Case 1.15:} Consider the submatrix
\begin{eqnarray*}
M_{1,15}=\left[
\begin{array}{lll}
0 & 1 & 0 \\
1 & 0 & 1 \\
0 & 1 & 1
\end{array}
\right].
\end{eqnarray*}
 Since $|M_{1,15}|=1$, the rank of $M_{1,15}$ is $3$. Hence, $\mathcal{C}^\perp$ has no codeword of weight $3$ whose nonzero coordinates are at the $q+2$-th, $q+3$-th and $q+4$-th locations.

Summarizing the above cases, the total number of codewords of weight $3$ in $\mathcal{C}^\perp$ is $(q-1)(q+2)$.

\subsubsection*{We finally prove that the minimum distance of $\mathcal{C}$ is $q+1$}

Assume that $d(\mathcal{C}) \leq q=q+4-4$ and let $\bc=a \bg_1 + b \bg_2 + c \bg_3$ be a codeword with the minimum weight in $\mathcal{C}$, where $\bg_1$, $\bg_2$ and $\bg_3$ respectively represent the first, second and third rows of $G_1$. We deduce that at least four coordinates  in $\bc$ are zero. Consider the following cases.

{Case 2.1:} Assume that the last four coordinates in $\bc$ are zero.
\begin{eqnarray*}
\left\{
\begin{array}{r}
c=0, \\
b=0, \\
a+c=0, \\
b+c=0.  \\
\end{array}
\right.
\end{eqnarray*}
We deduce that $a=b=c=0$ and $\bc=0$. This is contrary to the fact that $\bc$ is a minimum weight codeword in $\mathcal{C}$.

{Case 2.2:} Assume that three of the last four coordinates in $\bc$ are zero. Then there exists an element $x$ in $\gf_q$ such that
\begin{eqnarray*}
\left\{
\begin{array}{r}
a+bx+cx^2 = 0, \\
c=0, \\
b=0, \\
a+c=0,  \\
\end{array}
\right.
\mbox{ or }
\left\{
\begin{array}{r}
a+bx+cx^2 = 0, \\
c=0, \\
b=0, \\
b+c=0,  \\
\end{array}
\right.
\mbox{ or }
\left\{
\begin{array}{r}
a+bx+cx^2 = 0, \\
c=0, \\
a+c=0, \\
b+c=0,  \\
\end{array}
\right.
\mbox{ or }
\left\{
\begin{array}{r}
a+bx+cx^2 = 0, \\
b=0, \\
a+c=0, \\
b+c=0.  \\
\end{array}
\right.
\end{eqnarray*}
We deduce that $a=b=c=0$ and $\bc=0$. This is contrary to the fact that $\bc$ is a minimum weight codeword in $\mathcal{C}$.

{Case 2.3:} Assume that two of the last four coordinates in $\bc$ are zero. Then there exist two different elements $x$ and $y$ in $\gf_q$ such that
\begin{eqnarray*}
\left\{
\begin{array}{r}
a+bx+cx^2 = 0, \\
a+by+cy^2 = 0, \\
c=0,  \\
b=0, \\
\end{array}
\right.
\mbox{ or }
\left\{
\begin{array}{r}
a+bx+cx^2 = 0, \\
a+by+cy^2 = 0, \\
c=0,  \\
a+c=0, \\
\end{array}
\right.
\mbox{ or }
\left\{
\begin{array}{r}
a+bx+cx^2 = 0, \\
a+by+cy^2 = 0, \\
c=0,  \\
b+c=0, \\
\end{array}
\right.
\end{eqnarray*}
\begin{eqnarray*}
\left\{
\begin{array}{r}
a+bx+cx^2 = 0, \\
a+by+cy^2 = 0, \\
a+c=0,  \\
b+c=0, \\
\end{array}
\right.
\mbox{ or }
\left\{
\begin{array}{r}
a+bx+cx^2 = 0, \\
a+by+cy^2 = 0, \\
b=0,  \\
a+c=0, \\
\end{array}
\right.
\mbox{ or }
\left\{
\begin{array}{r}
a+bx+cx^2 = 0, \\
a+by+cy^2 = 0, \\
b=0,  \\
b+c=0. \\
\end{array}
\right.
\end{eqnarray*}
We deduce that $a=b=c=0$ and $\bc=0$ by Lemma \ref{lem-opoly3}. This is contrary to the fact that $\bc$ is a minimum weight codeword in $\mathcal{C}$.

{Case 2.4:} Assume that at most one of the last four coordinates in $\bc$ is zero. Let $x, y, z$ be three  pairwise different elements in $\gf_q$, we have
\begin{eqnarray*}
\left\{
\begin{array}{r}
a+bx+cx^2 = 0, \\
a+by+cy^2 = 0, \\
a+bz+cz^2 = 0. \\
\end{array}
\right.
\end{eqnarray*}
The coefficient matrix of this system of equations is
\begin{eqnarray*}
M_1 = \left[
\begin{array}{ccc}
1 & x & x^2 \\
1 & y & y^2 \\
1 & z & z^2 \\
\end{array}
\right]
\end{eqnarray*}
Obviously, $|M_1|=(x^2+z^2)(y+z)+(y^2+z^2)(x+z)$. Then by Lemma \ref{lem-opoly2}, $|M_1| \neq 0$. Hence, the rank of the coefficient matrix $M_1$ is $3$, which yields $a=b=c=0$ and $\bc=0$. This is contrary to the fact that $\bc$ is a minimum weight codeword in $\mathcal{C}$.

Summarizing the above discussions, $d(\mathcal{C}) \geq q+1$ has been proved. By the Singleton bound, $d(\mathcal{C}) \leq q+2$. If $d(\mathcal{C}) = q+2$, then $\mathcal{C}$ is an $[q+4, 3, q+2]$ MDS code and $\mathcal{C}^\perp$ is also an MDS code with parameters $[q+4, q+1, 4]$ which is contrary to $d(\mathcal{C}^\perp)=3$. Thus $\mathcal{C}$ is a $[q+4, 3, q+1]$ AMDS code. Besides, $\mathcal{C}$ is an NMDS code as both $\C$ and $\C^\perp$ are AMDS. By Lemma \ref{lem-N2}, the total number $A_{q+1}$ of the minimum weight codewords in $\mathcal{C}$ is equal to the total number of weight $3$ in $\mathcal{C}^\perp$. Hence $A_{q+1}=(q-1)(q+2)$. By Lemma \ref{lem-N1}, the weight enumerator of $\mathcal{C}$ directly follows.
\end{proof}

Below we give another construction of NMDS code with the same parameters $[q+4, 3, q+1]$ but different weight enumerators.
Define
\begin{eqnarray*}
G_{1,1}=\left[
\begin{array}{lllllllll}
1 & 1 & \cdots & 1 & 1 & 0 & 1 & 1 & 0\\
\alpha_1 & \alpha_2 & \cdots & \alpha_{q-1} & 0 & 0 & 0 & 1 & 1\\
\alpha_1^2 & \alpha_2^2 & \cdots & \alpha_{q-1}^2 & 0 & 1 & 1 & 0 & 1
\end{array}
\right].
\end{eqnarray*}
Obviously, $G_{1,1}$ is a $3$ by $q+4$ matrix over $\gf_q$. Let  $\mathcal{C}_1$ be the linear code over $\gf_q$ with generator matrix $G_{1,1}$. We can also derive the parameters and weight enumerators of $\mathcal{C}_1$ with a similar proof to that of Theorem \ref{thm-N4235}. The proof of the following theorem is omitted.

\begin{theorem}\label{thm-N11}
Let $m \geq 3$ be odd. Then the linear code $\mathcal{C}_{1}$ is an NMDS code over $\gf_q$ with parameters $[q+4, 3, q+1]$ and weight enumerator
\begin{eqnarray*}
A(z)=1 + \frac{(q-1)(3q+2)}{2} z^{q+1} + \frac{(q-1)(q^2-2q+6)}{2} z^{q+2} + \\
\frac{5(q-1)(q-2)}{2} z^{q+3} + \frac{(q-1)(q-2)^2}{2} z^{q+4}.
\end{eqnarray*}
\end{theorem}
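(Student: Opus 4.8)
The plan is to follow the proof of Theorem~\ref{thm-N4235} essentially verbatim, exploiting that $G_{1,1}$ agrees with $G_1$ on its first $q$ columns --- these are the ``conic'' columns $(1,x,x^2)^{\top}$ as $x$ runs over $\gf_q$, the $q$-th being $(1,0,0)^{\top}$ from $x=0$ --- and differs only in the four trailing columns, which are now $(0,0,1)^{\top}$, $(1,0,1)^{\top}$, $(1,1,0)^{\top}$, $(0,1,1)^{\top}$. First I would check $\dim(\mathcal{C}_1)=3$ by exhibiting three linearly independent columns, for instance the $q$-th, $(q+1)$-th and $(q+3)$-th, whose $3\times 3$ determinant equals $1$. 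Consequently $\dim(\mathcal{C}_1^{\perp})=q+1$.

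Next I would show $\mathcal{C}_1^{\perp}$ has parameters $[q+4,q+1,3]$. No column of $G_{1,1}$ is zero and no two columns are $\gf_q$-proportional (distinct conic columns are non-proportional, the four trailing columns are pairwise non-proportional, and a conic column cannot be proportional to a trailing one), so $d(\mathcal{C}_1^{\perp})>2$; and since $(1,0,0)^{\top}+(0,0,1)^{\top}+(1,0,1)^{\top}=\mathbf{0}$ exhibits a dependency among the $q$-th, $(q+1)$-th and $(q+2)$-th columns, we get $d(\mathcal{C}_1^{\perp})=3$. The substantial step is to count the weight-$3$ codewords of $\mathcal{C}_1^{\perp}$, equivalently the unordered triples of linearly dependent columns, each such triple contributing $q-1$ codewords. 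As in the model proof I would organize the count by how many of the three chosen columns lie among the four trailing columns (zero, one, two or three), and in each configuration evaluate the corresponding $3\times 3$ determinant as a polynomial in the conic parameters $x,y,z$, taking care of the distinguished values $x\in\{0,1\}$. The zero sets of these determinants are then counted using Lemma~\ref{lem-opoly2} (to rule out dependencies among three conic columns), Lemma~\ref{lem-opoly1} (to count pairs for which a determinant of the form $x^2+ax+\cdots$ vanishes, via a $2$-to-$1$ map), and Lemma~\ref{lem-opoly3} (to discard equations such as $x^2+x+1=0$). Because the determinants coming from the new trailing columns differ from those in Theorem~\ref{thm-N4235}, the individual counts change; I expect them to sum to $\frac{3q+2}{2}$ dependent triples, hence to $\frac{(q-1)(3q+2)}{2}$ codewords of weight~$3$. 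This determinant-by-determinant recount is the main obstacle, since the stated enumerator is pinned down entirely by getting this total right.

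I would then prove $d(\mathcal{C}_1)=q+1$ exactly as in Cases~2.1--2.4 of the model proof. Writing a putative low-weight codeword as $\bc=a\bg_1+b\bg_2+c\bg_3$, a weight at most $q$ forces at least four zero coordinates; splitting according to how many of the last four coordinates vanish yields, in each case, a linear system in $a,b,c$ whose only solution is $a=b=c=0$ (configurations touching three or more conic columns are handled by the Vandermonde-type nonvanishing in Lemma~\ref{lem-opoly2}, and the mixed configurations by Lemma~\ref{lem-opoly3}). Only the explicit trailing-column equations differ from the model proof, and they again collapse to the trivial solution, contradicting minimality. Hence $d(\mathcal{C}_1)\ge q+1$; the Singleton bound gives $d(\mathcal{C}_1)\le q+2$, and $d(\mathcal{C}_1)=q+2$ is impossible because it would force both $\mathcal{C}_1$ and $\mathcal{C}_1^{\perp}$ to be MDS, contradicting $d(\mathcal{C}_1^{\perp})=3$. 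Thus $\mathcal{C}_1$ is a $[q+4,3,q+1]$ AMDS code whose dual is also AMDS, i.e.\ NMDS.

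Finally, Lemma~\ref{lem-N2} identifies the number of minimum-weight codewords of $\mathcal{C}_1$ with the number of weight-$3$ codewords of $\mathcal{C}_1^{\perp}$, giving $A_{q+1}=\frac{(q-1)(3q+2)}{2}$. Substituting this value of $A_{n-k}=A_{q+1}$ into the second recurrence of Lemma~\ref{lem-N1} successively determines $A_{q+2}$, $A_{q+3}$ and $A_{q+4}$, which reproduces the claimed weight enumerator of $\mathcal{C}_1$.
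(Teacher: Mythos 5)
Your proposal is correct and is essentially the proof the paper intends: the paper omits the proof of this theorem, stating only that it is similar to that of Theorem \ref{thm-N4235}, and your plan is exactly that adaptation (dimension via three independent columns, dual distance $3$, a case-by-case determinant count of dependent column triples using Lemmas \ref{lem-opoly1}, \ref{lem-opoly2} and \ref{lem-opoly3}, the Singleton argument for $d(\mathcal{C}_1)=q+1$, and finally Lemmas \ref{lem-N2} and \ref{lem-N1}). Your anticipated total of $\frac{3q+2}{2}$ dependent triples is indeed what the count yields, matching $A_{q+1}=\frac{(q-1)(3q+2)}{2}$.
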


It is known that any $[n,k,n-k+1]$ MDS code over $\gf_q$ must have a unique weight enumerator. However, this fact is not true for NMDS codes.
The NMDS codes in Theorems  \ref{thm-N4235} and \ref{thm-N11} do have different weight enumerators, though they have the same parameters.
This implies that the NMDS codes in Theorems  \ref{thm-N4235} and \ref{thm-N11} are inequivalent to each other.
\subsection{NMDS code with parameters $[q+3, 3, q]$}
Define
\begin{eqnarray*}
G_2=\left[
\begin{array}{llllllll}
1 & 1 & \cdots & 1 & 1 & 0 & 1 & 0\\
\alpha_1 & \alpha_2 & \cdots & \alpha_{q-1} & 0 & 0 & 0 & 1\\
\alpha_1^2 & \alpha_2^2 & \cdots & \alpha_{q-1}^2 & 0 & 1 & 1 & 1
\end{array}
\right].
\end{eqnarray*}
Obviously, $G_2$ is a $3$ by $q+3$ matrix over $\gf_q$. Let $\mathcal{D}$ be the linear code  over $\gf_q$ with generator matrix $G_2$. Next, we determine the parameters and weight enumerators of $\mathcal{D}$.

\begin{theorem}\label{thm-N4234}
Let $m \geq 3$ be odd. Then the linear code $\mathcal{D}$ is an NMDS code over $\gf_q$ with parameters $[q+3, 3, q]$ and weight enumerator
\begin{eqnarray*}
A(z)=1 + q(q-1) z^{q} + \frac{(q-1)(q^2-q+6)}{2} z^{q+1} + \\
(q-1)(2q-3) z^{q+2} + \frac{(q-1)(q^2-3q+2)}{2} z^{q+3}.
\end{eqnarray*}
\end{theorem}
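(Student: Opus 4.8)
The plan is to mirror the three-step argument used for Theorem~\ref{thm-N4235}, taking advantage of the fact that $\mathcal{D}$ has only three appended columns. Write $\mathbf{v}_{\alpha}=(1,\alpha,\alpha^{2})^{T}$ for the first $q$ columns of $G_2$ (as $\alpha$ runs over $\gf_q$) and $\bw_1=(0,0,1)^{T}$, $\bw_2=(1,0,1)^{T}$, $\bw_3=(0,1,1)^{T}$ for the last three; the entire proof then turns on deciding which triples among these columns are $\gf_q$-linearly dependent. First I would note $\dim(\mathcal{D})=3$, since $\mathbf{v}_0=(1,0,0)^{T}$, $\bw_1$ and $\bw_3$ are independent, and hence $\dim(\mathcal{D}^{\perp})=q$.

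Next I would establish that $\mathcal{D}^{\perp}$ has parameters $[q+3,q,3]$. As no column of $G_2$ is zero and any two are independent, $d(\mathcal{D}^{\perp})>2$; since $\mathbf{v}_0+\bw_1+\bw_2=\bzero$, the code has a weight-$3$ codeword, so $d(\mathcal{D}^{\perp})=3$. To obtain $A_3^{\perp}$ I would run through all $3\times 3$ minors of $G_2$ in the style of Cases 1.1--1.15, grouped by how many chosen columns are of the form $\mathbf{v}_{\alpha}$. Three columns $\mathbf{v}_{\alpha},\mathbf{v}_{\beta},\mathbf{v}_{\gamma}$ give a nonzero determinant by the non-collinearity condition of Lemma~\ref{lem-opoly2} for $f(x)=x^{2}$. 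For two $\mathbf{v}$'s together with $\bw_1,\bw_2,\bw_3$ the determinants factor as $\alpha+\beta$, $(\alpha+\beta)(1+\alpha\beta)$ and $(\alpha+\beta)(1+\alpha+\beta)$, producing $0$, $(q-2)/2$ and $q/2$ dependent pairs respectively. For one $\mathbf{v}$ with two $\bw$'s, only $\{\mathbf{v}_0,\bw_1,\bw_2\}$ is dependent, the triple $\{\mathbf{v}_{\alpha},\bw_2,\bw_3\}$ being excluded by $\alpha^{2}+\alpha+1\neq 0$ (Lemma~\ref{lem-opoly3}), while $\{\bw_1,\bw_2,\bw_3\}$ has nonzero determinant. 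Summing gives $(q-2)/2+q/2+1=q$ dependent triples, each yielding $q-1$ weight-$3$ codewords, so $A_3^{\perp}=q(q-1)$.

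Then I would prove $d(\mathcal{D})\geq q$ by showing that a nonzero codeword $\bc=a\bg_1+b\bg_2+c\bg_3$ cannot vanish in four or more coordinates, where $\bg_1,\bg_2,\bg_3$ are the rows of $G_2$. A zero in one of the first $q$ positions is a root of $a+b\alpha+c\alpha^{2}$, of which there are at most two unless $a=b=c=0$, while the last three positions vanish precisely when $c=0$, $a=c$ and $b=c$. Organizing by how many of the last three positions are zero: if all three vanish then $a=b=c=0$; if exactly two vanish, each of the three subcases leaves the quadratic $a+b\alpha+c\alpha^{2}$ with no room for the required two further roots (the subcase $a=b=c$ again using $\alpha^{2}+\alpha+1\neq 0$); and if at most one vanishes, then at least three of the first $q$ positions would have to be roots, contradicting the nonvanishing Vandermonde determinant $M_1$ via Lemma~\ref{lem-opoly2}. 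Hence $d(\mathcal{D})\geq q$. By the Singleton bound $d(\mathcal{D})\leq q+1$, and $d(\mathcal{D})=q+1$ would make $\mathcal{D}$ MDS and force $d(\mathcal{D}^{\perp})=4$, contradicting $d(\mathcal{D}^{\perp})=3$; thus $d(\mathcal{D})=q$, $\mathcal{D}$ is AMDS, and therefore NMDS. Finally, Lemma~\ref{lem-N2} gives $A_q=A_3^{\perp}=q(q-1)$, and substituting $A_{n-k}=q(q-1)$ into the second recurrence of Lemma~\ref{lem-N1} with $n=q+3$ and $k=3$ yields the stated weight enumerator.

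The main obstacle is the exhaustive, error-free enumeration of the $3\times 3$ minors: the bookkeeping of dependent column-triples must be carried out completely and must stay consistent between the dual-distance count and the primal-distance argument, and every appeal to $\alpha^{2}+\alpha+1\neq 0$ must be justified, which is exactly the point where the hypothesis that $m$ is odd is indispensable.
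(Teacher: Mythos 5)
Your proposal is correct and follows essentially the same route as the paper: the same three-step structure (dimension, then the dual's parameters $[q+3,q,3]$ via an exhaustive case analysis of $3\times 3$ minors yielding $A_3^{\perp}=q(q-1)$, then the primal distance via vanishing-coordinate systems), the same appeals to Lemmas \ref{lem-opoly2}, \ref{lem-opoly3}, \ref{lem-N2} and \ref{lem-N1}, and the same MDS-contradiction to conclude $d(\mathcal{D})=q$. The only deviations are cosmetic and in your favor: your factorizations $(\alpha+\beta)(1+\alpha\beta)$ and $(\alpha+\beta)(1+\alpha+\beta)$ give the counts $(q-2)/2$ and $q/2$ directly, without the paper's $2$-to-$1$ argument via Lemma \ref{lem-opoly1}, and your dependency $\mathbf{v}_0+\bw_1+\bw_2=\bzero$ correctly exhibits a weight-$3$ dual codeword, whereas the paper's text at that point cites the $q$-th, $(q+1)$-th and $(q+3)$-th columns, which are in fact independent (a typo).
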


\begin{proof}
Note that the $q$-th, $q+1$-th and $q+3$-th columns of the generator matrix $G_2$ are linearly independent. Hence
 $\rank(G_2)=3$ implying $\dim(\mathcal{D})=3$.

\subsubsection*{We then prove that $\mathcal{D}^\perp$ has parameters $[q+3, q, 3]$.}

Obviously, $\dim(\mathcal{D}^\perp)=(q+3)-\dim(\mathcal{D})=q$. Since no column of $G_2$ is the zero vector and any two columns of $G_2$ are linearly independent over $\gf_q$, the minimum distance $d(\mathcal{D}^\perp) > 2$. We deduce that $\mathcal{D}^\perp$ has codewords of weight $3$ as the $q$-th, $q+1$-th and $q+3$-th columns of $G_2$ are linearly independent. Then $d(\mathcal{D}^\perp) = 3$. Now we calculate the total number of codewords of weight $3$ in $\mathcal{D}^\perp$. We need to consider the following cases.

{Case 1.1:} Let $x, y, z$ be three pairwise different elements in $\gf_q$. Consider the submatrix
\begin{eqnarray*}
M_{1,1}=\left[
\begin{array}{lll}
1 & 1 & 1 \\
x & y & z \\
x^2 & y^2 & z^2
\end{array}
\right].
\end{eqnarray*}
We have $|M_{1,1}|=(x+y)(x^2+z^2)+(x+z)(x^2+y^2)$. By Lemma \ref{lem-opoly2}, $|M_{1,1}| \neq 0$ and $\rank(M_{1,1})=3$. In conclusion, $\mathcal{D}^\perp$ has no codeword of weight $3$ whose nonzero coordinates are at the first $q$ locations.

{Case 1.2:} Let $x, y$ be two different elements in $\gf_q$. Consider the submatrix
\begin{eqnarray*}
M_{1,2}=\left[
\begin{array}{lll}
1 & 1 & 0 \\
x & y & 0 \\
x^2 & y^2 & 1
\end{array}
\right].
\end{eqnarray*}
Then we have $|M_{1,2}|=y+x$. Since $x \neq y$, $|M_{1,2}| \neq 0$. The rank of $M_{1,2}$ is $3$. Hence, $\mathcal{D}^\perp$ has no codeword of weight $3$ whose the first two nonzero coordinates are at the first $q$ locations and the rest is at the $q+1$-th location.

{Case 1.3:} Let $x, y$ be two different elements in $\gf_q$. Consider the submatrix
\begin{eqnarray*}
M_{1,3}=\left[
\begin{array}{lll}
1 & 1 & 1 \\
x & y & 0 \\
x^2 & y^2 & 1
\end{array}
\right].
\end{eqnarray*}
It is easy to deduce that $|M_{1,3}|=(y^2+1)x+(x^2+1)y$. If $x=0, y=1$ or $x=1, y=0$, then $|M_{1,3}| \neq 0$. The rank of $M_{1,3}$ is $3$. Hence, $\mathcal{D}$ does not have a codeword of weight $3$ whose coordinates are at the first, $q$-th and $q+2$-th locations. Next we count the number of different pair $(x,y)$ such that $|M_{1,3}|=0$, where $x,y \in \gf_q \setminus \{0,1\}$ . For any $x,y \in \gf_q \setminus \{0,1\}$, let $a=\frac{x^2+1}{x}$. Then $a \neq 0$. By Lemma \ref{lem-opoly3}, $a \neq 1$. Note that $z^2+az$ is $2$-to-$1$ by Lemma \ref{lem-opoly1}. Therefore, there exists another element $y \in \gf_q \setminus \{0,1\}$ such that $x^2+ax=1=y^2+ay$. For this pair $(x, y)$ we have $|M_{1,3}|=0$ and  vice versa. It follows that the number of distinct $(x,y) \in \gf_q \setminus \{0,1\}$ such that $|M_{1,3}|=0$ is equal to $(q-2)/2$. Consequently, the number of codewords of weight $3$ in $\mathcal{D}^\perp$  whose first two nonzero coordinates are at the first $q$ locations (expect the first and $q$-th locations) and the rest is at the $q+2$-th location is equal to $(q-2)(q-1)/2$.

{Case 1.4:} Let $x, y$ be two distinct elements in $\gf_q$. Consider the submatrix
\begin{eqnarray*}
M_{1,4}=\left[
\begin{array}{lll}
1 & 1 & 0 \\
x & y & 1 \\
x^2 & y^2 & 1
\end{array}
\right].
\end{eqnarray*}
It is easy to deduce that $|M_{1,4}|=x^2+y^2+x+y$. Now we calculate the number of $(x,y)$ satisfying $|M_{1,4}|=0$. Let $|M_{1,4}|=x^2+y^2+x+y=0$ which is equivalent to $(x+y)^2=x+y$. Since $x \neq y$, this equation can be simplified to $x+y=1$. We deduce that the number of different $(x,y)$ such that $|M_{1,4}|=0$ is equal to $q/2$. In conclusion, the number of codewords of weight $3$ in $\mathcal{D}^\perp$ whose first two nonzero coordinates are at the first $q$ locations and the rest is at the $q+3$-th location is equal to $q(q-1)/2$.

{Case 1.5:} Let $x$ be an element in $\gf_q$. Consider the submatrix
\begin{eqnarray*}
M_{1,5}=\left[
\begin{array}{lll}
1 & 0 & 1 \\
x & 0 & 0 \\
x^2 & 1 & 1
\end{array}
\right].
\end{eqnarray*}
It is easy to derive $|M_{1,5}|=x$. If $x \neq 0$, then $\rank(M_{1,5})=3$. Hence $\mathcal{D}^\perp$ does not have a codeword of weight $3$ whose first nonzero coordinate is at the first $q-1$ locations and the other nonzero coordinates are at the last two locations. If $x=0$, then $\rank(M_{1,5})=2$. Consequently, the number of codewords of weight $3$ in $\mathcal{D}^\perp$  whose nonzero coordinates are at the $q$-th, $q+1$-th, and $q+2$-th locations is equal to $q-1$.

{Case 1.6:} Let $x$ be an element in $\gf_q$. Consider the submatrix
\begin{eqnarray*}
M_{1,6}=\left[
\begin{array}{lll}
1 & 0 & 0 \\
x & 0 & 1 \\
x^2 & 1 & 1
\end{array}
\right].
\end{eqnarray*}
It is easy to prove that $|M_{1,6}|=1$ and $\rank(M_{1,6})=3$. Hence, $\mathcal{D}^\perp$ has no codeword of weight $3$ whose first nonzero coordinate is at the first $q$ locations and the other nonzero coordinates are at the $q+1$-th and $q+3$-th locations.

{Case 1.7:} Let $x$ be an element in $\gf_q$. Consider the submatrix
\begin{eqnarray*}
M_{1,7}=\left[
\begin{array}{lll}
1 & 1 & 0 \\
x & 0 & 1 \\
x^2 & 1 & 1
\end{array}
\right].
\end{eqnarray*}
 We have $|M_{1,7}|=x^2+x+1$. By Lemma \ref{lem-opoly3}, $|M_{1,7}| \neq 0$ and  $\rank(M_{1,7})=3$. Hence, $\mathcal{D}^\perp$ has no codeword of weight $3$ whose first nonzero coordinate is at the first $q$ locations and the other two nonzero coordinates are at the last two locations.

{Case 1.8:} Let $x$ be an element in $\gf_q$. Consider the submatrix
\begin{eqnarray*}
M_{1,8}=\left[
\begin{array}{lll}
0 & 1 & 0 \\
0 & 0 & 1 \\
1 & 1 & 1
\end{array}
\right].
\end{eqnarray*}
We have $|M_{1,8}|=1$. Then the rank of $M_{1,8}$ is $3$. Hence, $\mathcal{D}^\perp$ has no codeword of weight $3$ whose nonzero coordinates are at the last three locations.

Summarizing the above eight cases, we deduce that the total number of codewords of weight $3$ in $\mathcal{D}^\perp$ is $q(q-1)$.

\subsubsection*{We finally prove that the minimum distance of $\mathcal{D}$ is $q$}

Assume that $d(\mathcal{D}) \leq q-1=q+3-4$. Let $\bc=a \bg_1 + b \bg_2 + c \bg_3$ be a codeword with the minimum weight in $\mathcal{D}$, where $\bg_1$, $\bg_2$ and $\bg_3$ respectively represent the first, second and third rows of $G_2$. Then at least four coordinates are zero in $\bc$. Consider the following three cases.

{Case 2.1:} Assume that the last three coordinates in $\bc$ are zero. Then there exists an element $x$ in $\gf_q$ such that
\begin{eqnarray*}
\left\{
\begin{array}{r}
a+bx+cx^2 = 0, \\
a+c=0, \\
b+c=0,  \\
c=0. \\
\end{array}
\right.
\end{eqnarray*}
We deduce that $a=b=c=0$ and $\bc=0$. This is contrary to the fact that $\bc$ is a minimum weight codeword in $\mathcal{D}$.

{Case 2.2:} Assume that two of the last three coordinates in $\bc$ are zero. Then there exist two different elements $x$ and $y$ in $\gf_q$ such that
\begin{eqnarray*}
\left\{
\begin{array}{r}
a+bx+cx^2 = 0, \\
a+by+cy^2 = 0, \\
c=0, \\
a+c=0,  \\
\end{array}
\right.
\mbox{ or }
\left\{
\begin{array}{r}
a+bx+cx^2 = 0, \\
a+by+cy^2 = 0, \\
c=0, \\
b+c=0,  \\
\end{array}
\right.
\mbox{ or }
\left\{
\begin{array}{r}
a+bx+cx^2 = 0, \\
a+by+cy^2 = 0, \\
a+c=0, \\
b+c=0.  \\
\end{array}
\right.
\end{eqnarray*}
We deduce that $a=b=c=0$ and $\bc=0$ by Lemma \ref{lem-opoly3}. This is contrary to the fact that $\bc$ is a minimum weight codeword in $\mathcal{D}$.

{Case 2.3:} Assume that at most one of the last three coordinates in $\bc$ is zero. Then there exist three pairwise different elements $x, y, z$ in $\gf_q$ such that
\begin{eqnarray*}
\left\{
\begin{array}{r}
a+bx+cx^2 = 0, \\
a+by+cy^2 = 0, \\
a+bz+cz^2 = 0. \\
\end{array}
\right.
\end{eqnarray*}
The coefficient matrix for this system of equations is
\begin{eqnarray*}
M_1 = \left[
\begin{array}{ccc}
1 & x & x^2 \\
1 & y & y^2 \\
1 & z & z^2 \\
\end{array}
\right]
\end{eqnarray*}
Obviously, $|M_1|=(x^2+z^2)(y+z)+(y^2+z^2)(x+z)$. By Lemma \ref{lem-opoly2}, $|M_1| \neq 0$. Hence, the rank of the coefficient matrix $M_1$ is $3$ implying $a=b=c=0$ and $\bc=0$. This is contrary to the fact that $\bc$ is a minimum weight codeword in $\mathcal{D}$.

Summarizing the above discussions, $d(\mathcal{D}) \geq q$ has been proved. By the Singleton bound, $d(\mathcal{D}) \leq q+1$. If $d(\mathcal{D}) = q+1$, then $\mathcal{D}$ is a $[q+3, 3, q+1]$ MDS code and $\mathcal{D}^\perp$ is also an MDS code with parameters $[q+3, q, 4]$, which is contrary to $d(\mathcal{D}^\perp)=3$. Finally we have $d(\mathcal{D})=q$. Thus, $\mathcal{D}$ is a $[q+3, 3, q]$ NMDS code. By Lemma \ref{lem-N2}, the total number $A_{q}$ of the minimum weight codewords in $\mathcal{D}$ is equal to the total number of weight $3$ in $\mathcal{D}^\perp$. Hence $A_{q}=q(q-1)$. The weight enumerator of $\mathcal{D}$ follows from Lemma \ref{lem-N1}.
\end{proof}

Below we give two  other  constructions of NMDS code with the same parameters $[q+3, 3, q]$ but different weight enumerators.
Define
\begin{eqnarray*}
G_{2,1}=\left[
\begin{array}{llllllll}
1 & 1 & \cdots & 1 & 1 & 0 & 0 & 1\\
\alpha_1 & \alpha_2 & \cdots & \alpha_{q-1} & 0 & 1 & 0 & 1\\
\alpha_1^2 & \alpha_2^2 & \cdots & \alpha_{q-1}^2 & 0 & 0 & 1 & 0
\end{array}
\right].
\end{eqnarray*}
Obviously, $G_{2,1}$ is a $3$ by $q+3$ matrix over $\gf_q$. Let $\mathcal{D}_{1}$ be the linear code  over $\gf_q$ with  generator matrix $G_{2,1}$.

 With a similar proof to that of Theorem \ref{thm-N4234}, we can derive the parameters and weight enumerator of $\mathcal{D}_{1}$ in the following theorem.

\begin{theorem}\label{thm-N12}
Let $m \geq 2$ be an integer. Then the linear code $\mathcal{D}_{1}$ is an NMDS code over $\gf_q$ with parameters $[q+3, 3, q]$ and weight enumerator
\begin{eqnarray*}
A(z)=1 + \frac{(q-1)(q+2)}{2} z^{q} + \frac{q(q-1)(q+2)}{2} z^{q+1} + \\
\frac{q(q-1)}{2} z^{q+2} + \frac{q(q-1)(q-2)}{2} z^{q+3}.
\end{eqnarray*}
\end{theorem}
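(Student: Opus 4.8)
The plan is to follow the three-stage template of the proof of Theorem~\ref{thm-N4234}: first fix the dimension, then show that $\mathcal{D}_1^\perp$ has parameters $[q+3,q,3]$ together with the exact count of its minimum-weight codewords, then establish $d(\mathcal{D}_1)=q$, and finally invoke Lemmas~\ref{lem-N2} and~\ref{lem-N1} to pin down the weight enumerator. Writing $\bc=a\bg_1+b\bg_2+c\bg_3$ for a generic codeword, where $\bg_1,\bg_2,\bg_3$ are the rows of $G_{2,1}$, the first $q$ coordinates of $\bc$ are the evaluations $a+b\alpha+c\alpha^2$ as $\alpha$ ranges over $\gf_q$ (the column $(1,0,0)^{\mathrm{T}}$ at position $q$ being the value at $\alpha=0$), while the last three coordinates are $b$, $c$ and $a+b$. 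Since the columns at positions $q,q+1,q+2$ are $(1,0,0)^{\mathrm{T}},(0,1,0)^{\mathrm{T}},(0,0,1)^{\mathrm{T}}$, they are independent and $\dim(\mathcal{D}_1)=3$, whence $\dim(\mathcal{D}_1^\perp)=q$.

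For the dual, I would first note that no column of $G_{2,1}$ is zero and that any two columns are $\gf_q$-independent, so $d(\mathcal{D}_1^\perp)>2$; a single dependent triple (e.g.\ $(1,0,0)^{\mathrm{T}}+(0,1,0)^{\mathrm{T}}=(1,1,0)^{\mathrm{T}}$) then forces $d(\mathcal{D}_1^\perp)=3$. The core computation is to count the weight-$3$ codewords, equivalently the triples of columns of rank $2$, each such triple contributing $q-1$ codewords. I would organise the $3\times3$ minors by how many of the three extra columns $(0,1,0)^{\mathrm{T}},(0,0,1)^{\mathrm{T}},(1,1,0)^{\mathrm{T}}$ they involve. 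Three ``curve'' columns $(1,x,x^2)^{\mathrm{T}}$ are independent by Lemma~\ref{lem-opoly2} (valid since $x^2$ is an oval polynomial for all $m\ge2$); two curve columns together with $(0,1,0)^{\mathrm{T}}$, respectively $(0,0,1)^{\mathrm{T}}$, give minors $(x+y)^2$ and $x+y$, both nonzero. The only non-trivial family comes from two curve columns with $(1,1,0)^{\mathrm{T}}$, whose minor factors as $(x+y)(x+y+xy)$ and vanishes exactly on the $(q-2)/2$ unordered pairs with $(x+1)(y+1)=1$. Finally, among triples with two extra columns, exactly the two sporadic dependencies at $\alpha=0$ (with $(0,1,0)^{\mathrm{T}},(1,1,0)^{\mathrm{T}}$) and at $\alpha=1$ (with $(0,0,1)^{\mathrm{T}},(1,1,0)^{\mathrm{T}}$) occur. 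Summing yields $(q-2)/2+1+1=(q+2)/2$ rank-$2$ triples, hence $(q-1)(q+2)/2$ minimum-weight codewords in $\mathcal{D}_1^\perp$.

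To show $d(\mathcal{D}_1)=q$ I would assume a nonzero codeword with at least four zero coordinates and split on how many of the last three coordinates $b,c,a+b$ vanish. If all three vanish then $a=b=c=0$ immediately. If exactly two vanish, then the quadratic $c\alpha^2+b\alpha+a$ must have at least two distinct roots while simultaneously being a nonzero constant, a scalar multiple of $\alpha^2$, or a scalar multiple of $\alpha+1$ (according to the subcase), each of which is impossible unless $a=b=c=0$. If at most one vanishes, the quadratic has at least three roots and must be identically zero. The decisive point is that none of these degenerate quadratics is $\alpha^2+\alpha+1$, so, unlike in Theorem~\ref{thm-N4234}, Lemma~\ref{lem-opoly3} is never needed, which is exactly why the hypothesis weakens to $m\ge2$. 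The Singleton bound gives $d(\mathcal{D}_1)\le q+1$, and $d(\mathcal{D}_1)=q+1$ is excluded because it would make $\mathcal{D}_1$ MDS with an MDS dual of minimum distance $4$, contradicting $d(\mathcal{D}_1^\perp)=3$; hence $d(\mathcal{D}_1)=q$ and $\mathcal{D}_1$ is NMDS.

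I expect the main obstacle to be the bookkeeping in the minor enumeration, in particular verifying that the family attached to $(1,1,0)^{\mathrm{T}}$ contributes precisely $(q-2)/2$ pairs and that the only remaining dependencies are the two sporadic ones, so that the totals do not double-count triples involving $\alpha=0$ or $\alpha=1$. Once $A_q=(q-1)(q+2)/2$ is secured via Lemma~\ref{lem-N2}, the full enumerator follows by substituting $A_{n-k}=A_q$ into the recurrences of Lemma~\ref{lem-N1} for $s=1,2,3$, which is a routine calculation.
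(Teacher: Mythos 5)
Your proposal is correct and takes essentially the same approach as the paper, which omits the proof of this theorem and points to the template of Theorem~\ref{thm-N4234}: the same three-stage argument (dimension, counting weight-$3$ dual codewords via rank-$2$ column triples, minimum distance via roots of the quadratic $a+b\alpha+c\alpha^2$, then Lemmas~\ref{lem-N2} and~\ref{lem-N1}) that you carry out. Your enumeration --- the $(q-2)/2$ pairs with $(x+1)(y+1)=1$ attached to the column $(1,1,0)^{\mathrm{T}}$ plus the two sporadic dependent triples --- correctly yields $A_q=(q-1)(q+2)/2$, consistent with the stated enumerator, and your observation that Lemma~\ref{lem-opoly3} is never needed matches the paper's own remark explaining why the hypothesis weakens to $m\geq 2$.
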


We remark that the NMDS code in Theorem \ref{thm-N12} has the same weight enumerator as that of the NMDS code in \cite[Theorem 8]{Wqiu} for $f(x)=x^2$.
These two NMDS codes have different generator matrixes. It is open whether they are equivalent to each other.  Besides, Theorem \ref{thm-N12} holds for any integer $m\geq 2$ as its proof dose not rely on Lemma \ref{lem-opoly3}.

Define
\begin{eqnarray*}
G_{2,2}=\left[
\begin{array}{llllllll}
1 & 1 & \cdots & 1 & 1 & 1 & 0 & 1\\
\alpha_1 & \alpha_2 & \cdots & \alpha_{q-1} & 0 & 0 & 1 & 1\\
\alpha_1^2 & \alpha_2^2 & \cdots & \alpha_{q-1}^2 & 0 & 1 & 1 & 0
\end{array}
\right].
\end{eqnarray*}
Obviously, $G_{2,2}$ is a $3$ by $q+3$ matrix over $\gf_q$. Let $\mathcal{D}_{2}$ be the linear code over $\gf_q$ with generator matrix $G_{2,2}$.

 With a similar proof to that of Theorem \ref{thm-N4234}, we can derive the parameters and weight enumerator of $\mathcal{D}_{2}$ in the following theorem.

\begin{theorem}\label{thm-N13}
Let $m \geq 3$ be odd. Then the linear code $\mathcal{D}_{2}$ is an NMDS code over $\gf_q$ with parameters $[q+3, 3, q]$ and weight enumerator
\begin{eqnarray*}
A(z)=1 + \frac{(q-1)(3q-2)}{2} z^{q} + \frac{(q-1)(q^2-4q+12)}{2} z^{q+1} + \\
\frac{(q-1)(7q-12)}{2} z^{q+2} + \frac{(q-1)(q-2)^2}{2} z^{q+3}.
\end{eqnarray*}
\end{theorem}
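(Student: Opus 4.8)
The plan is to follow the three-stage structure of the proof of Theorem~\ref{thm-N4234}, adapting each determinant computation to the three special columns $(1,0,1)^{T}$, $(0,1,1)^{T}$, $(1,1,0)^{T}$ occupying the last three positions of $G_{2,2}$. First I would record that the $q$-th, $(q+1)$-th and $(q+2)$-th columns, namely $(1,0,0)^{T}$, $(1,0,1)^{T}$, $(0,1,1)^{T}$, have nonzero determinant, so $\rank(G_{2,2})=3$ and $\dim(\mathcal{D}_2)=3$. The remaining work divides into (i) proving $\mathcal{D}_2^{\perp}$ has parameters $[q+3,q,3]$ together with the exact count of its weight-$3$ codewords, (ii) proving $d(\mathcal{D}_2)=q$, and (iii) assembling the enumerator from Lemmas~\ref{lem-N2} and~\ref{lem-N1}.

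For stage (i) I would first observe that $G_{2,2}$ has no zero column and no two proportional columns, so $d(\mathcal{D}_2^{\perp})>2$; the identity $(1,0,1)^{T}+(0,1,1)^{T}+(1,1,0)^{T}=\bzero$ over $\gf_q$ (characteristic $2$) then exhibits weight-$3$ codewords and pins down $d(\mathcal{D}_2^{\perp})=3$. A set of three columns of rank $2$ has a unique dependency, and since no two columns are proportional all three of its coefficients are nonzero, so each such triple contributes exactly $q-1$ weight-$3$ codewords. I would enumerate the rank-$2$ triples by the number of columns taken from the first $q$ (conic) columns $(1,\alpha,\alpha^{2})^{T}$: three conic columns are always independent by Lemma~\ref{lem-opoly2}; one conic column together with any two of the special columns gives a determinant equal to $x^{2}+x+1$, which is nonzero by Lemma~\ref{lem-opoly3} (this is where $m\ge 3$ odd enters); and the single all-special triple is the dependency already found.

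The main obstacle is the ``two conic plus one special'' stratum, which splits into three genuinely different determinant conditions. With the special column $(1,0,1)^{T}$ the determinant factors as $(x+y)(xy+1)$, vanishing for $x\ne y$ exactly when $xy=1$, giving $(q-2)/2$ unordered pairs; with $(0,1,1)^{T}$ it equals $(x+y)^{2}+(x+y)$, vanishing for $x\ne y$ exactly when $x+y=1$, giving $q/2$ pairs; and with $(1,1,0)^{T}$ it factors as $(x+y)(x+y+xy)$, vanishing for $x\ne y$ exactly when $(x+1)(y+1)=1$, again giving $(q-2)/2$ pairs. Multiplying each pair count by $q-1$ and adding the $q-1$ codewords from the all-special triple yields the total $(q-1)\big(\tfrac{q-2}{2}+\tfrac{q}{2}+\tfrac{q-2}{2}+1\big)=\tfrac{(q-1)(3q-2)}{2}$, which is the value of $A_{q}$ predicted by the claimed enumerator.

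Finally, for stage (ii) I would suppose $\bc=a\bg_1+b\bg_2+c\bg_3$ is a nonzero codeword of weight at most $q-1$, hence with at least four zero coordinates, and split on how many zeros lie among the three special positions, where $\bc$ takes the values $a+c$, $b+c$, $a+b$. If at least two special positions vanish then $a=b=c$ in characteristic $2$, forcing all three special values to vanish and reducing the conic zeros to roots of $\alpha^{2}+\alpha+1$, of which there are none by Lemma~\ref{lem-opoly3}; if at most one special position vanishes then at least three conic positions vanish, and the resulting $3\times3$ Vandermonde-type system has only the trivial solution by Lemma~\ref{lem-opoly2}. Either branch gives $\bc=\bzero$, a contradiction, so $d(\mathcal{D}_2)\ge q$; the codeword with $a=b=c\ne0$ has weight exactly $q$, and the Singleton bound together with $d(\mathcal{D}_2^{\perp})=3$ rules out $d(\mathcal{D}_2)=q+1$, so $\mathcal{D}_2$ is a $[q+3,3,q]$ NMDS code. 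Then Lemma~\ref{lem-N2} gives $A_q=\tfrac{(q-1)(3q-2)}{2}$ and Lemma~\ref{lem-N1} converts this into the stated weight enumerator.
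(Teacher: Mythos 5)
Your proposal is correct and takes essentially the same approach as the paper: the paper omits this proof, stating it is analogous to that of Theorem \ref{thm-N4234}, and your three-stage argument (counting weight-$3$ dual codewords via determinants of column triples, proving $d(\mathcal{D}_2)=q$ by splitting on how many of the special coordinates $a+c$, $b+c$, $a+b$ vanish, then invoking Lemmas \ref{lem-N2} and \ref{lem-N1}) is precisely that analogue with the determinants correctly adapted to $G_{2,2}$. Your computations check out: the factorizations $(x+y)(xy+1)$, $(x+y)^2+(x+y)$, $(x+y)(xy+x+y)$ give $\frac{q-2}{2}+\frac{q}{2}+\frac{q-2}{2}$ vanishing pairs, which together with the all-special dependency yields $A_3^\perp=\frac{(q-1)(3q-2)}{2}$, and Lemma \ref{lem-N1} then reproduces the stated enumerator.
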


Note that the NMDS codes in Theorems \ref{thm-N4234}, \ref{thm-N12} and \ref{thm-N13} have different weight enumerators for odd $m \geq 3$, though they have the same parameters. Hence they are pairwise inequivalent to each other. Besides, the NMDS codes in Theorems \ref{thm-N4234} and \ref{thm-N13} are inequivalent to the NMDS code in \cite[Theorem 8]{Wqiu} due to different weight enumerators.
\subsection{NMDS codes with parameters $[q+1, 3, q-2]$}
Define
\begin{eqnarray*}
G_3=\left[
\begin{array}{llllll}
1 & 1 & \cdots & 1 & 1 & 0 \\
\alpha_1 & \alpha_2 & \cdots & \alpha_{q-1} & 0 & 1 \\
\alpha_1^2 & \alpha_2^2 & \cdots & \alpha_{q-1}^2 & 0 & 1
\end{array}
\right].
\end{eqnarray*}
Obviously, $G_3$ is a $3$ by $q+1$ matrix over $\gf_q$. Let $\mathcal{E}$  be  the linear code over $\gf_q$ with generator matrix $G_3$. Next, we calculate the parameters and weight enumerator of $\mathcal{E}$.

\begin{theorem}\label{thm-N1235}
Let $m \geq 2$ be an integer. Then the linear code $\mathcal{E}$ is an NMDS code over $\gf_q$ with parameters $[q+1, 3, q-2]$ and weight enumerator
\begin{eqnarray*}
A(z)=1 + \frac{q(q-1)}{2}z^{q-2} + \frac{q(q-1)(q-2)}{2} z^{q-1} + \\
\frac{(q-1)(5q+2)}{2} z^{q} + \frac{q(q-1)(q-2)}{2} z^{q+1}.
\end{eqnarray*}
\end{theorem}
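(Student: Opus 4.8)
The plan is to follow the three-stage template already used for Theorems \ref{thm-N4235} and \ref{thm-N4234}: first fix $\dim(\mathcal{E})=3$, then identify $\mathcal{E}^\perp$ as a $[q+1,q-2,3]$ code by counting its minimum weight codewords, and finally pin down $d(\mathcal{E})=q-2$, so that both $\mathcal{E}$ and $\mathcal{E}^\perp$ are AMDS and hence $\mathcal{E}$ is NMDS. The dimension is immediate: for $m\geq 2$ we have $q-1\geq 3$, so three columns $(1,\alpha_i,\alpha_i^2)$ with distinct nonzero $\alpha_i$ form a Vandermonde block of rank $3$, giving $\rank(G_3)=3$. The structural simplification compared with the earlier constructions is that the first $q$ columns of $G_3$ run over the full conic $\{(1,\alpha,\alpha^2):\alpha\in\gf_q\}$ (position $q$ being the point $\alpha=0$), while there is only a \emph{single} extra column $(0,1,1)$.

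For $\mathcal{E}^\perp$ I would first note that no column of $G_3$ is zero and any two columns are $\gf_q$-independent (two conic points give a nonzero $2\times2$ Vandermonde minor, and a conic point together with $(0,1,1)$ gives minor $1$), so $d(\mathcal{E}^\perp)>2$. A weight-$3$ codeword of $\mathcal{E}^\perp$ corresponds exactly to a linearly dependent triple of columns, and since all pairs are independent, each such triple yields $q-1$ scalar-multiple codewords; thus the count reduces to enumerating dependent triples. Only two types occur: three conic columns, whose $3\times 3$ determinant is the nonzero Vandermonde product (equivalently nonsingular by Lemma \ref{lem-opoly2}); and two conic columns together with $(0,1,1)$, whose determinant I compute to be $(\alpha+\beta)+(\alpha+\beta)^2$. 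Writing $s=\alpha+\beta\neq 0$, this vanishes iff $s=1$, i.e. $\beta=\alpha+1$; since $\alpha\mapsto\alpha+1$ is a fixed-point-free involution of $\gf_q$, there are exactly $q/2$ such unordered pairs. Hence there are $q/2$ dependent triples, $d(\mathcal{E}^\perp)=3$, and the number of weight-$3$ codewords is $(q-1)\cdot q/2=q(q-1)/2$.

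To obtain $d(\mathcal{E})$, write a codeword as $\bc=a\bg_1+b\bg_2+c\bg_3$, so that its coordinate at a conic position is $p(\alpha)=a+b\alpha+c\alpha^2$ and its coordinate at the last position is $b+c$. I would assume for contradiction that $\bc\neq\bzero$ has weight $\leq q-3=(q+1)-4$, i.e. at least four vanishing coordinates. A nonzero polynomial $p$ of degree $\leq 2$ has at most two roots in $\gf_q$, so at most two of the first $q$ coordinates vanish; together with the single last coordinate this allows at most three zeros, forcing $a=b=c=0$, a contradiction. Thus $d(\mathcal{E})\geq q-2$. The Singleton bound gives $d(\mathcal{E})\leq q-1$, and $d(\mathcal{E})=q-1$ would make $\mathcal{E}$ MDS with an MDS dual of minimum distance $4$, contradicting $d(\mathcal{E}^\perp)=3$; hence $d(\mathcal{E})=q-2$ and $\mathcal{E}$ is NMDS.

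Finally, Lemma \ref{lem-N2} identifies $A_{q-2}$ with the number of weight-$3$ codewords of $\mathcal{E}^\perp$, giving $A_{q-2}=q(q-1)/2$, and substituting this into the recurrence of Lemma \ref{lem-N1} with $n=q+1$ and $k=3$ produces the remaining three coefficients and hence the stated $A(z)$. I expect the only real bookkeeping obstacle to be the weight-$3$ enumeration, but here it is genuinely short because there is a single non-conic column, so the case analysis collapses to the two types above. Note also that, unlike Theorems \ref{thm-N4235} and \ref{thm-N4234}, nothing in this argument invokes Lemma \ref{lem-opoly3}; this is precisely why the conclusion holds for all $m\geq 2$ rather than only for odd $m\geq 3$.
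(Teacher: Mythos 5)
Your proposal is correct and follows essentially the same route as the paper's own proof: the same two types of column triples (three conic columns, handled via the nonvanishing Vandermonde determinant, and two conic columns plus $(0,1,1)$, whose determinant $(x+y)^2+(x+y)$ vanishes exactly when $x+y=1$, giving $q/2$ pairs and hence $A_3^\perp=q(q-1)/2$), the same Singleton-bound/MDS-contradiction step to pin down $d(\mathcal{E})=q-2$, and the same final appeal to Lemmas \ref{lem-N2} and \ref{lem-N1}. The only cosmetic differences are your explicit dependent-triple bookkeeping with the factor $q-1$ for scalar multiples and your root-counting phrasing of the minimum-distance argument, which the paper carries out with a Vandermonde coefficient matrix instead.
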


\begin{proof}
We first prove $\dim(\mathcal{E})=3$. Let $\bg_1$, $\bg_2$ and $\bg_3$ represent the first, second and third rows of $G_3$. Assume that there exist three elements $a$, $b$ and $c$ in $\gf_q$ such that $a \bg_1 + b \bg_2 + c \bg_3 =0$, where at least one of the elements in $\{a,b,c\}$ is nonzero. Then we have
\begin{eqnarray*}
\left\{
\begin{array}{l}
a=0, \\
b+c=0, \\
a+bx+cx^2 = 0 \mbox{ for all } x \in \gf_q^*. \\
\end{array}
\right.
\end{eqnarray*}
It is easy to derive that $a=b=c=0$ and $\dim(\mathcal{E})=3$.

\subsubsection*{We then prove that $\mathcal{E}^\perp$ has parameters $[q+1, q-2, 3]$.}

Obviously, $\dim(\mathcal{E}^\perp)=(q+1)-\dim(\mathcal{E})=q-2$. It is easy to find that  no column of $G_3$ is the zero vector and any two columns of $G_3$ are $\gf_q$-linearly independent. Then the minimum distance $d(\mathcal{E}^\perp) > 2$. We also find the first, $q$-th, $q+1$-th columns of $G_3$ are linearly dependent, which means that $\mathcal{E}^\perp$ has codewords of weight $3$. Then we have the minimum distance $d(\mathcal{E}^\perp)=3$. Now we calculate the total number of codewords of weight $3$ in $\mathcal{E}^\perp$. We need to consider the following two cases.

{Case 1.1:} Let $x, y, z$ be three pairwise different elements in $\gf_q$. Consider the submatrix
\begin{eqnarray*}
M_{1,1}=\left[
\begin{array}{lll}
1 & 1 & 1 \\
x & y & z \\
x^2 & y^2 & z^2
\end{array}
\right].
\end{eqnarray*}
We have $|M_{1,1}|=(x+y)(x^2+z^2)+(x+z)(x^2+y^2)$. By Lemma \ref{lem-opoly2}, $|M_{1,1}| \neq 0$. Hence, the rank of $M_{1,1}$ is three. In conclusion, $\mathcal{E}^\perp$ has no codeword of weight $3$ whose nonzero coordinates are at the first $q$ locations.

{Case 1.2:} Let $x, y$ be two distinct elements in $\gf_q$. Consider the submatrix
\begin{eqnarray*}
M_{1,2}=\left[
\begin{array}{lll}
1 & 1 & 0 \\
x & y & 1 \\
x^2 & y^2 & 1
\end{array}
\right].
\end{eqnarray*}
It is easy to deduce that $|M_{1,2}|=x^2+y^2+x+y$. Now we calculate the number of $(x,y)$ such that $|M_{1,2}|=0$. Let $|M_{1,2}|=x^2+y^2+x+y=0$ which is equivalent to $(x+y)^2=x+y$. Since $x \neq y$, then this equation  can be simplified to $x+y=1$. We then deduce that the number of different $(x,y)$ such that $|M_{1,2}|=0$ is equal to $q/2$. In conclusion, the number of codewords of weight $3$ in $\mathcal{E}^\perp$ whose first two nonzero coordinates are at the first $q$ locations and the rest is at the $q+1$-th location is equal to $q(q-1)/2$.

Summarizing the above cases, the total number of codewords of weight $3$ in $\mathcal{E}^\perp$ is $q(q-1)/2$.

\subsubsection*{We finally prove that the minimum distance of $\mathcal{E}$ is $q-2$}

Assume that $d(\mathcal{E}) \leq q-3=q+1-4$ and let $\bc=a \bg_1 + b \bg_2 + c \bg_3$ be a codeword with the minimum weight in $\mathcal{E}$, where $\bg_1$, $\bg_2$ and $\bg_3$ respectively represent the first, second and third rows of $G_3$. It can be deduced that there exist at least four coordinates  in $\bc$ are zero. Then we can find that there exist at least three coordinates at the first $q$ locations in $\bc$. Let $x, y, z$ be three  pairwise different elements in $\gf_q$, we have
\begin{eqnarray*}
\left\{
\begin{array}{r}
a+bx+cx^2 = 0, \\
a+by+cy^2 = 0, \\
a+bz+cz^2 = 0. \\
\end{array}
\right.
\end{eqnarray*}
The coefficient matrix of this system of equations is
\begin{eqnarray*}
M_1 = \left[
\begin{array}{ccc}
1 & x & x^2 \\
1 & y & y^2 \\
1 & z & z^2 \\
\end{array}
\right]
\end{eqnarray*}
Obviously, $|M_1|=(x^2+z^2)(y+z)+(y^2+z^2)(x+z)$. Then by Lemma \ref{lem-opoly2}, $|M_1| \neq 0$. Hence, the rank of the coefficient matrix $M_1$ is $3$, which yields $a=b=c=0$ and $\bc=0$. This is contrary to the fact that $\bc$ is a minimum weight codeword in $\mathcal{E}$.

Then $d(\mathcal{E}) \geq q-2$ has been proved. By the Singleton bound, $d(\mathcal{C}) \leq q-1$. If $d(\mathcal{E}) = q-1$, then $\mathcal{E}$ is an $[q+1, 3, q-1]$ MDS code and $\mathcal{E}^\perp$ is also an MDS code with parameters $[q+1, q-2, 4]$ which is contrary to $d(\mathcal{E}^\perp)=3$. Thus $\mathcal{E}$ is a $[q+1, 3, q-2]$ AMDS code. Besides, $\mathcal{E}$ is an NMDS code as both $\mathcal{E}$ and $\mathcal{E}^\perp$ are AMDS. By Lemma \ref{lem-N2}, the total number $A_{q-2}$ of the minimum weight codewords in $\mathcal{E}$ is equal to the total number of weight $3$ in $\mathcal{E}^\perp$. Hence $A_{q-2}=\frac{q(q-1)}{2}$. By Lemma \ref{lem-N1}, the weight enumerator of $\mathcal{E}$ directly follows.
\end{proof}

Below we give two  other   constructions of NMDS codes with the same parameters $[q+1, 3, q-2]$ but different weight enumerators.
Define
\begin{eqnarray*}
G_{3,1}=\left[
\begin{array}{llllll}
1 & 1 & \cdots & 1 & 1 & 0 \\
\alpha_1 & \alpha_2 & \cdots & \alpha_{q-1} & 1 & 1 \\
\alpha_1^2 & \alpha_2^2 & \cdots & \alpha_{q-1}^2 & 0 & 1
\end{array}
\right].
\end{eqnarray*}
Obviously, $G_{3,1}$ is a $3$ by $q+1$ matrix over $\gf_q$. Let $\mathcal{E}_1$  be  the linear code over $\gf_q$ with generator matrix $G_{3,1}$.

With a similar proof to that of Theorem \ref{thm-N1235}, we can derive the parameters and weight enumerator of $\mathcal{E}_1$  in the following.

\begin{theorem}\label{thm-N1234}
Let $m \geq 3$ be odd. Then the linear code $\mathcal{E}_1$ is an NMDS code over $\gf_q$ with parameters $[q+1, 3, q-2]$ and weight enumerator
\begin{eqnarray*}
A(z)=1 + (q-1)(q-2)z^{q-2} + \frac{(q-1)(q^2-5q+12)}{2} z^{q-1} + \\
(q-1)(4q-5) z^{q} + \frac{(q-1)(q^2-3q+4)}{2} z^{q+1}.
\end{eqnarray*}
\end{theorem}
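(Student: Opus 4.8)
The plan is to follow the three-step template used for Theorem~\ref{thm-N1235}: first show $\dim(\mathcal{E}_1)=3$; then prove that $\mathcal{E}_1^\perp$ is a $[q+1,q-2,3]$ code and count its weight-$3$ codewords exactly; and finally show $d(\mathcal{E}_1)=q-2$. Once both $\mathcal{E}_1$ and $\mathcal{E}_1^\perp$ are seen to be AMDS, the code is NMDS, so Lemma~\ref{lem-N2} transfers the count of weight-$3$ dual words to $A_{q-2}$, and Lemma~\ref{lem-N1} then pins down the whole weight enumerator. The only genuinely new feature compared with $\mathcal{E}$ is that the last two columns of $G_{3,1}$, namely $(1,1,0)^\top$ and $(0,1,1)^\top$, are \emph{both} non-evaluation columns, whereas $G_3$ had only one such column; this alters the case analysis in two places.

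For the dimension I would set $a\bg_1+b\bg_2+c\bg_3=\bzero$ and read off $a+b=0$ and $b+c=0$ from the last two columns, giving $a=b=c$; evaluating the resulting relation $a(1+\alpha+\alpha^2)=0$ at $\alpha=1$ (where $1+1+1=1\ne 0$ in characteristic $2$) forces $a=b=c=0$. For the dual, no column is zero and any two columns are $\gf_q$-independent, so $d(\mathcal{E}_1^\perp)>2$, while a dependent triple exists, giving $d(\mathcal{E}_1^\perp)=3$. Counting weight-$3$ dual words reduces to counting linearly dependent triples of columns of $G_{3,1}$. Three evaluation columns are always independent (Vandermonde, via Lemma~\ref{lem-opoly2}). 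Two evaluation columns $(1,x,x^2),(1,y,y^2)$ together with $(1,1,0)$ have determinant $(x+y)(xy+x+y)$, vanishing iff $(x+1)(y+1)=1$; together with $(0,1,1)$ the determinant is $(x+y)(x+y+1)$, vanishing iff $x+y=1$. Each relation defines a fixed-point-free involution on $\gf_q\setminus\{0,1\}$, yielding $(q-2)/2$ unordered pairs, and each such rank-$2$ triple contributes $q-1$ weight-$3$ codewords, i.e.\ $(q-1)(q-2)/2$ per case. A single evaluation column with both $(1,1,0)$ and $(0,1,1)$ gives determinant $x^2+x+1\ne 0$ by Lemma~\ref{lem-opoly3}, contributing nothing. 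Summing, the dual has exactly $(q-1)(q-2)$ words of weight $3$, which is the claimed $A_{q-2}$.

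The main obstacle is the lower bound $d(\mathcal{E}_1)\ge q-2$. Writing $\bc=a\bg_1+b\bg_2+c\bg_3$ and $g(X)=cX^2+bX+a$, the first $q-1$ coordinates are $g(\alpha)$ for $\alpha\in\gf_q^*$, while the last two are $a+b$ and $b+c$. Assuming $d(\mathcal{E}_1)\le q-3$ forces at least four zero coordinates. If at most one of the two special coordinates vanishes, then at least three evaluation coordinates vanish, so the quadratic $g$ has three roots and is identically zero, whence $\bc=\bzero$. The new case is when \emph{both} special coordinates vanish: then $a+b=b+c=0$ yields $a=b=c$, so $g(X)=a(X^2+X+1)$, and since $X^2+X+1$ has no root in $\gf_q$ for odd $m\ge 3$ (Lemma~\ref{lem-opoly3}), either $a=0$ and $\bc=\bzero$, or no evaluation coordinate vanishes, contradicting the existence of four zeros. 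Thus every possibility forces $\bc=\bzero$, so $d(\mathcal{E}_1)\ge q-2$; the Singleton bound then rules out $d=q-1$, since that would make $\mathcal{E}_1^\perp$ an MDS code with $d(\mathcal{E}_1^\perp)=4\ne 3$. Hence $\mathcal{E}_1$ is NMDS with $d=q-2$, and the stated weight enumerator follows from Lemmas~\ref{lem-N2} and~\ref{lem-N1}.
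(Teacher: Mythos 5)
Your proposal is correct and follows essentially the same route as the paper: the paper omits this proof, pointing to the template of Theorem~\ref{thm-N1235} (dimension, dual parameters with a case-by-case count of weight-$3$ dual codewords via $3\times 3$ determinants, lower bound on $d(\mathcal{E}_1)$ from the vanishing pattern of $g(X)=cX^2+bX+a$, then Lemmas~\ref{lem-N2} and~\ref{lem-N1}), and your case analysis correctly adapts it to the two non-evaluation columns $(1,1,0)^\top$ and $(0,1,1)^\top$, yielding $A_{q-2}=(q-1)(q-2)$ as claimed. Your only cosmetic deviation is counting the vanishing pairs via the explicit fixed-point-free involutions $(x+1)(y+1)=1$ and $y=x+1$ on $\gf_q\setminus\{0,1\}$, where the paper's analogous cases (e.g.\ in Theorem~\ref{thm-N2234}) invoke the $2$-to-$1$ property of Lemma~\ref{lem-opoly1}; both give $(q-2)/2$ pairs.
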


We remark that the NMDS code in Theorem \ref{thm-N1234} has the same weight enumerator as that of the NMDS code in \cite[Theorem 10]{Wqiu} for $f(x)=x^2$.
These two NMDS codes have different generator matrixes. It is open whether they are equivalent to each other. Besides, Theorem \ref{thm-N1234} holds for any odd integer $m \geq 3$ as its proof relies on Lemma \ref{lem-opoly3}.

Define
\begin{eqnarray*}
G_{3,2}=\left[
\begin{array}{llllll}
1 & 1 & \cdots & 1 & 1 & 1 \\
\alpha_1 & \alpha_2 & \cdots & \alpha_{q-1} & 0 & 0 \\
\alpha_1^2 & \alpha_2^2 & \cdots & \alpha_{q-1}^2 & 0 & 1
\end{array}
\right].
\end{eqnarray*}
Obviously, $G_{3,2}$ is a $3$ by $q+1$ matrix over $\gf_q$. Let $\mathcal{E}_2$  be  the linear code over $\gf_q$ with generator matrix $G_{3,2}$.

With a similar proof to that of Theorem \ref{thm-N1235}, we can determine the parameters and weight enumerator of $\mathcal{E}_2$  in the following.

\begin{theorem}\label{thm-N17}
Let $m \geq 2$ be an integer. Then the linear code $\mathcal{E}_2$ is an NMDS code over $\gf_q$ with parameters $[q+1, 3, q-2]$ and weight enumerator
\begin{eqnarray*}
A(z)=1 + \frac{(q-1)(q-2)}{2}z^{q-2} + \frac{(q-1)(q^2-2q+6)}{2} z^{q-1} + \\
\frac{(q-1)(5q-4)}{2} z^{q} + \frac{(q-1)(q^2-2q+2)}{2} z^{q+1}.
\end{eqnarray*}
\end{theorem}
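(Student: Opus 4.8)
The plan is to follow exactly the template established by the proof of Theorem \ref{thm-N1235}, since $\mathcal{E}_2$ is a length-$(q+1)$, dimension-$3$ code of the same shape, differing only in the three appended columns. First I would verify $\dim(\mathcal{E}_2)=3$ by exhibiting three linearly independent columns of $G_{3,2}$: the $q$-th column $(1,0,0)^\top$, the $(q-1)$-th column (some $(1,\alpha_{q-1},\alpha_{q-1}^2)^\top$), and the $(q+1)$-th column $(1,0,1)^\top$ already span a $3$-dimensional space, so $\rank(G_{3,2})=3$.

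Next I would show $\mathcal{E}_2^\perp$ has parameters $[q+1,q-2,3]$. The dimension is $(q+1)-3=q-2$ automatically. For the distance, I would check that no column is zero and no two columns are $\gf_q$-proportional, giving $d(\mathcal{E}_2^\perp)>2$; then exhibit three dependent columns to force $d(\mathcal{E}_2^\perp)=3$. The bulk of the work is counting weight-$3$ codewords in $\mathcal{E}_2^\perp$, i.e.\ counting triples of linearly dependent columns. I would organize this by how many of the dependent columns lie among the first $q-1$ "curve" columns $(1,\alpha_i,\alpha_i^2)^\top$ versus among the last two appended columns $(1,0,0)^\top$ and $(1,0,1)^\top$. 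The all-curve case is handled exactly as Case~1.1, where Lemma \ref{lem-opoly2} shows the Vandermonde-type determinant $(x+y)(x^2+z^2)+(x+z)(x^2+y^2)$ is nonzero, so no such dependency exists. The remaining cases involve one or two appended columns; each reduces to a $3\times 3$ determinant in one or two variables $x,y\in\gf_q$, and I would count its zeros. The crucial count should be a case yielding a determinant of the form $x+y=1$ (giving $q/2$ pairs, as in Case~1.2 of Theorem \ref{thm-N1235}) together with one or more "split" columns producing $q-1$ codewords each; summing these contributions should give the total $A_3^\perp = (q-1)(q-2)/2$, matching the stated coefficient of $z^{q-2}$ in the weight enumerator.

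I would then prove $d(\mathcal{E}_2)=q-2$ by the same argument as in Theorem \ref{thm-N1235}: a minimum-weight codeword with $d(\mathcal{E}_2)\le q-3$ would have at least four zero coordinates, hence at least three zeros among the first $q-1$ curve positions, forcing a $3\times 3$ Vandermonde system
\begin{eqnarray*}
\left\{
\begin{array}{r}
a+bx+cx^2 = 0, \\
a+by+cy^2 = 0, \\
a+bz+cz^2 = 0 \\
\end{array}
\right.
\end{eqnarray*}
with pairwise-distinct $x,y,z$, whose coefficient determinant is nonzero by Lemma \ref{lem-opoly2}, forcing $\bc=\bzero$, a contradiction. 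Hence $d(\mathcal{E}_2)\ge q-2$; the Singleton bound gives $d(\mathcal{E}_2)\le q-1$, and $d(\mathcal{E}_2)=q-1$ would make $\mathcal{E}_2$ MDS with $\mathcal{E}_2^\perp$ also MDS of distance $4$, contradicting $d(\mathcal{E}_2^\perp)=3$. Thus $\mathcal{E}_2$ is a $[q+1,3,q-2]$ NMDS code. Finally, Lemma \ref{lem-N2} gives $A_{q-2}=A_3^\perp=(q-1)(q-2)/2$, and feeding this value into the recurrence of Lemma \ref{lem-N1} yields the full weight enumerator.

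The main obstacle will be the exhaustive weight-$3$ count for $\mathcal{E}_2^\perp$: I must enumerate every configuration of three dependent columns involving the specific appended vectors $(1,0,0)^\top$ and $(1,0,1)^\top$, compute each determinant correctly, and be careful about overlap and double-counting across cases (the proof of Theorem \ref{thm-N1235} was short precisely because its appended columns were simple, but here I must confirm each sub-case contributes what the final coefficient $(q-1)(q-2)/2$ demands). A subtlety worth flagging is that Theorem \ref{thm-N17} claims validity for all $m\ge 2$, so unlike the odd-$m$ results the counting must \emph{not} invoke Lemma \ref{lem-opoly3}; I would verify that each determinant arising here factors through conditions like $x\ne y$ or $x+y=1$ that hold over $\gf_{2^m}$ for every $m\ge 2$, rather than through the $f(x)+x+1\ne 0$ property that needs $m$ odd.
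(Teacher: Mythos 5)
Your outline is exactly the paper's own route: the paper prints no separate proof of Theorem \ref{thm-N17}, saying only that it follows "with a similar proof to that of Theorem \ref{thm-N1235}", and your plan (dimension via three independent columns, dual distance $3$ with an exhaustive weight-$3$ count, the Vandermonde argument for $d(\mathcal{E}_2)\ge q-2$, Singleton plus $d(\mathcal{E}_2^{\perp})=3$ to exclude the MDS case, then Lemmas \ref{lem-N2} and \ref{lem-N1}) is precisely that template. Your closing remark that the argument must avoid Lemma \ref{lem-opoly3} so that the result holds for every $m\ge 2$ is also correct and is the reason this theorem, like Theorem \ref{thm-N12}, is not restricted to odd $m$.

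However, your predicted case analysis for the weight-$3$ count is wrong, and as written it is arithmetically inconsistent with the coefficient you are trying to match. For $G_{3,2}$ the first $q$ columns are \emph{all} curve points $(1,x,x^2)^{\top}$, $x\in\gf_q$ (the column $(1,0,0)^{\top}$ is the point $x=0$, not a special appended column), and the single genuinely new column is $(1,0,1)^{\top}$. Three curve columns are never dependent (Vandermonde); one curve column together with $(1,0,0)^{\top}$ and $(1,0,1)^{\top}$ gives determinant $x\neq 0$; and two curve columns $x\neq y$ together with $(1,0,1)^{\top}$ give determinant $(x+y)(1+xy)$, which vanishes iff $xy=1$. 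So the \emph{only} contribution comes from the $(q-2)/2$ pairs $\{x,x^{-1}\}$ with $x\notin\{0,1\}$, yielding $A_3^{\perp}=(q-1)(q-2)/2$ exactly. Your predicted structure --- a case with $x+y=1$ giving $q/2$ pairs plus one or more further cases contributing $q-1$ codewords each --- would sum to at least $(q-1)(q+2)/2>(q-1)(q-2)/2$; the condition $x+y=1$ belongs to the column $(0,1,1)^{\top}$ of $G_3$ in Theorem \ref{thm-N1235}, which $G_{3,2}$ does not contain. A second, smaller slip: in the minimum-distance step, four zero coordinates guarantee only \emph{two} zeros among the first $q-1$ positions; you must count the first $q$ positions (again because $(1,0,0)^{\top}$ is the curve point at $x=0$) to obtain the three equations of your Vandermonde system, exactly as the paper does in Theorem \ref{thm-N1235}. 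Both slips are repairable within your own method --- your stated policy of checking every sub-case against the target total would expose them --- but the proposal as written does not yet produce a correct count.
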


The NMDS codes in Theorems \ref{thm-N1235}, \ref{thm-N1234} and \ref{thm-N17} have different weight enumerators, though they have the same parameters. Hence they are pairwise inequivalent to each other. Besides, the NMDS codes in Theorems \ref{thm-N1235} and \ref{thm-N17} are inequivalent to the NMDS code in \cite[Theorem 10]{Wqiu} because of different weight enumerators.

\subsection{NMDS code with parameters $[q+2, 3, q-1]$}

In this subsection, we first introduce the definition of the extended code of a linear code. Let $G$ be the generator matrix of a linear code $\mathcal{C}$. Define a matrix $\bar{G}$  by adding a column  to $G$ such that the sum of the elements of each row of $\bar{G}$ is $0$. The linear code  with generator matrix $\bar{G}$ is called the extended code of $\mathcal{C}$. The extended code of $\mathcal{C}$ is denoted as $\bar{\mathcal{C}}$.

 In the following, we study the extended code of $\mathcal{E}_1$ in Theorem \ref{thm-N1234}.

\begin{theorem}\label{thm-N2234}
Let $m \geq 3$ be odd. Then the extended code $\bar{\mathcal{E}_1}$ is an NMDS code over $\gf_q$ with parameters $[q+2, 3, q-1]$ and weight enumerator
\begin{eqnarray*}
A(z)=1 + (q-1)^{2} z^{q-1} + \frac{(q-1)(q^2-3q+8)}{2} z^{q} + \\
3(q-1)^{2} z^{q+1} + \frac{(q-1)(q^2-3q+2)}{2} z^{q+2}.
\end{eqnarray*}
\end{theorem}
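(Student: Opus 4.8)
The plan is to mirror the structure of the proof of Theorem~\ref{thm-N1235}, but now applied to the extended code $\bar{\mathcal{E}_1}$. First I would write down the generator matrix $\bar{G}_{3,1}$ explicitly. Recall that $G_{3,1}$ has rows $\bg_1=(1,\ldots,1,1,0)$, $\bg_2=(\alpha_1,\ldots,\alpha_{q-1},1,1)$, $\bg_3=(\alpha_1^2,\ldots,\alpha_{q-1}^2,0,1)$. The extension appends one column $(c_1,c_2,c_3)^{T}$ so that each row of $\bar{G}_{3,1}$ sums to $0$ over $\gf_q$. Since we work in characteristic $2$, the row sum equals the sum of the existing entries, so $c_1=\sum_{i}1+1=\sum_{c\in\gf_q}1=q\cdot 1=0$ (as $q$ is even), giving $c_1=0$; similarly $c_2=\sum_{c\in\gf_q}c+1+1=0$ and $c_3=\sum_{c\in\gf_q}c^2+1=0$, using $\sum_{c\in\gf_q}c=0$ and $\sum_{c\in\gf_q}c^2=0$. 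I would recompute these coordinate sums carefully, since the exact appended column governs the whole subsequent case analysis. Because $\bar{G}_{3,1}$ has $q+2$ columns and still rank $3$ (the same three independent columns of $G_{3,1}$ survive), $\dim(\bar{\mathcal{E}_1})=3$ is immediate.

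Next I would establish $d(\bar{\mathcal{E}_1}^{\perp})=3$ and count its weight-$3$ codewords. The argument is exactly the determinant/rank case analysis seen throughout the excerpt: no column of $\bar{G}_{3,1}$ is zero and no two columns are proportional (so $d^{\perp}>2$), while some triple of columns is dependent (so $d^{\perp}=3$). I would then enumerate all $\binom{q+2}{3}$ column triples, grouped by how many of the chosen columns lie among the first $q-1$ ``$\alpha$-columns'' versus the three appended columns (the $q$-th, $(q+1)$-th, and new $(q+2)$-th positions). For each group I would form the $3\times 3$ submatrix, compute its determinant, and invoke Lemma~\ref{lem-opoly2} for the three-$\alpha$-column case, Lemma~\ref{lem-opoly1} and Lemma~\ref{lem-opoly3} for the mixed cases giving solvable quadratics, and direct evaluation for the all-appended-column case. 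Summing the counts yields $A_3^{\perp}$, which by Lemma~\ref{lem-N2} equals $A_{q-1}$, the number of minimum-weight codewords of $\bar{\mathcal{E}_1}$; I expect this to come out to $(q-1)^2$ to match the claimed weight enumerator.

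To pin down $d(\bar{\mathcal{E}_1})=q-1$, I would argue as in the final part of Theorem~\ref{thm-N1235}: assume a codeword $\bc=a\bg_1+b\bg_2+c\bg_3$ of weight $\le (q+2)-4=q-2$, so at least four coordinates vanish, forcing at least three vanishing coordinates among the first $q-1$ positions. These give a $3\times 3$ Vandermonde-type system with coefficient matrix $M_1$ of nonzero determinant by Lemma~\ref{lem-opoly2}, yielding $a=b=c=0$, a contradiction; hence $d(\bar{\mathcal{E}_1})\ge q-1$. The Singleton bound gives $d\le q$, and $d=q$ would force $\bar{\mathcal{E}_1}$ MDS with an MDS dual of minimum distance $4$, contradicting $d^{\perp}=3$; so $d=q-1$ and the code is AMDS. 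Since the dual is also AMDS, $\bar{\mathcal{E}_1}$ is NMDS, and feeding $A_{q-1}=(q-1)^2$ and $A_{q-1}^{\perp}=A_3^{\perp}=(q-1)^2$ into the recurrences of Lemma~\ref{lem-N1} produces the stated weight enumerator.

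\emph{The main obstacle} will be the weight-$3$ dual count: the appended column changes which column triples become dependent, so the case list differs from that of $\mathcal{E}_1$, and each mixed case requires correctly recognizing when the resulting quadratic $z^2+az=$ (constant) has two solutions in $\gf_q\setminus\{0,1\}$ versus none, which is precisely where Lemmas~\ref{lem-opoly1} and \ref{lem-opoly3} and the oddness of $m$ enter. Bookkeeping the contributions so that they total exactly $(q-1)^2$ (rather than double-counting triples that straddle several cases) is the delicate part; everything else is routine determinant evaluation and an application of Lemma~\ref{lem-N1}.
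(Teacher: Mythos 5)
Your plan follows the same route as the paper (extend the generator matrix, count weight-$3$ dual codewords by a column-triple case analysis, then determine the minimum distance and feed $A_3^{\perp}$ into Lemmas \ref{lem-N2} and \ref{lem-N1}), but two concrete steps are wrong as written. First, the appended column: requiring each row of the extended matrix to sum to zero gives $c_1=0$ and $c_2=0$, but $c_3=\sum_{x\in\gf_q}x^2+0+1=0+1=1$, not $0$ --- your own formula evaluates to $1$, and the slip is in your final equality. This is not cosmetic: with a zero column $(0,0,0)^T$ appended, every codeword of $\bar{\mathcal{E}_1}$ would vanish at the new coordinate, the dual code would contain a weight-$1$ codeword, and the theorem would be false. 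The correct extended matrix has last three columns $(1,1,0)^T$, $(0,1,1)^T$, $(0,0,1)^T$, and the whole enumeration of weight-$3$ dual codewords (hence the value $(q-1)^2$) depends on this.

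Second, your lower bound $d(\bar{\mathcal{E}_1})\ge q-1$ has a gap. A codeword of weight at most $q-2$ in a code of length $q+2$ has at least four zero coordinates, but this forces at least \emph{one}, not at least three, zeros among the first $q-1$ positions: the other three zeros may all sit at the three appended positions. So the Vandermonde-type argument via Lemma \ref{lem-opoly2} only disposes of the case in which at most one of the last three coordinates vanishes. You must separately treat the cases where all three, or exactly two, of the last three coordinates vanish; these lead to small linear systems such as $a+bx+cx^2=0$, $a+b=0$, $b+c=0$, $c=0$, and the two-zero subcases need Lemma \ref{lem-opoly3} (this is where oddness of $m$ enters again). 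The paper performs exactly this three-case split; alternatively it first invokes the standard fact $d(\bar{\mathcal{E}_1})\in\{d(\mathcal{E}_1),d(\mathcal{E}_1)+1\}=\{q-2,q-1\}$ and rules out $q-2$, but either way the missing cases must be handled, and your Singleton-bound framing does not remove the need for them.
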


\begin{proof}
It is well known that $\mathop{\sum}_{x \in \gf_q} x =0$. Since $f(x)=x^2$ is a permutation polynomial over $\gf_q$, we have $\mathop{\sum}_{x \in \gf_q} x^2=0$. Then the extended code $\bar{\mathcal{E}_1}$ has generator matrix
\begin{eqnarray*}
\bar{G_3}=\left[
\begin{array}{lllllll}
1 & 1 & \cdots & 1 & 1 & 0 & 0 \\
\alpha_1 & \alpha_2 & \cdots & \alpha_{q-1} & 1 & 1 & 0 \\
\alpha_1^2 & \alpha_2^2 & \cdots & \alpha_{q-1}^2 & 0 & 1 & 1
\end{array}
\right].
\end{eqnarray*}

By definition, $\bar{\mathcal{E}_1}$ has length $q+2$ and dimension $3$. Then the dual code $\bar{\mathcal{E}_1}^\perp$ of $\bar{\mathcal{E}_1}$ has length $q+2$ and dimension $q-1$. We now prove the minimum distance of $\bar{\mathcal{E}_1}^\perp$ is $3$. We can find the first, $q$-th, $q+2$-th columns of $\bar{G_3}$ are linearly dependent, which means that $\bar{\mathcal{E}_1}^\perp$ has codewords of weight $3$. Then we have the minimum distance $d(\bar{\mathcal{E}_1}^\perp) \leq 3$. It is easy to deduce that any two columns of $\bar{G_3}$ are linearly independent, then we have $d(\bar{\mathcal{E}_1}^\perp) > 2$. Thus, $d(\bar{\mathcal{E}_1}^\perp)=3$. We then calculate the total number of codewords with weight $3$ in $\bar{\mathcal{E}_1}^\perp$ in the following cases.

{Case 1.1:} Let $x, y, z$ be three pairwise different elements in $\gf_q^*$. Consider the submatrix
\begin{eqnarray*}
M_{1,1}=\left[
\begin{array}{lll}
1 & 1 & 1 \\
x & y & z \\
x^2 & y^2 & z^2
\end{array}
\right].
\end{eqnarray*}
Then we have $|M_{1,1}|=(x+y)(x^2+z^2)+(x+z)(x^2+y^2)$. By Lemma \ref{lem-opoly2}, $|M_{1,1}| \neq 0$. Hence, the rank of $M_{1,1}$ is three. In conclusion, $\bar{\mathcal{E}_1}^\perp$ has no codeword of weight $3$ whose nonzero coordinates are at the first $q-1$ locations.

{Case 1.2:} Let $x, y$ be two different elements in $\gf_q^*$. Consider the submatrix
\begin{eqnarray*}
M_{1,2}=\left[
\begin{array}{lll}
1 & 1 & 1 \\
x & y & 1 \\
x^2 & y^2 & 0
\end{array}
\right].
\end{eqnarray*}
Then we have $|M_{1,2}|=(x+1)y^2+(y+1)x^2$. Choose any $y \in \gf_q \setminus \{0,1\}$. Define $a=y^2/(y+1)$. Then $a \neq 0$. By Lemma \ref{lem-opoly3}, $a \neq 1$. Note that $y^2+ay=a$. By Lemma \ref{lem-opoly1}, $z^2+az$ is $2$-to-$1$. Therefore, there exists another element $x \in \gf_q^*$ such that $x^2+ax=a$. For this pair $(x, y)$ we have $|M_{1,2}|= 0$ and vice versa. Hence, the number of different $(x,y)$ in $\gf_q^*$ such that $|M_{1,2}|=0$ is equal to $(q-2)/2$. Consequently, the number of codewords of weight $3$ in $\bar{\mathcal{E}_1}^\perp$  whose two nonzero coordinates are at the first $q-1$ locations (expect the first location) and the other nonzero coordinate is at the $q$-th location is equal to $(q-2)(q-1)/2$.

{Case 1.3:} Let $x, y$ be two different elements in $\gf_q^*$. Consider the submatrix
\begin{eqnarray*}
M_{1,3}=\left[
\begin{array}{lll}
1 & 1 & 0 \\
x & y & 1 \\
x^2 & y^2 & 1
\end{array}
\right].
\end{eqnarray*}
It is easy to deduce that $|M_{1,3}|=x^2+y^2+x+y$. Now we calculate the number of $(x,y)$ satisfying $|M_{1,3}|=0$. Let $|M_{1,3}|=x^2+y^2+x+y=0$ which is equivalent to $(x+y)^2=x+y$. Since $x \neq y$, this equation can be simplified to $x+y=1$. Then we deduce that the total number of different $x$ and $y$ in $\gf_q^*$ such that $|M_{1,3}|=0$ is equal to $(q-2)/2$ as $x\neq 1$. In conclusion, the number of codewords of weight $3$ in $\bar{\mathcal{E}_1}^\perp$  whose first two nonzero coordinates are at the first $q-1$ locations (expect the first location) and the rest is at the $q+1$-th location is equal to $(q-2)(q-1)/2$.

{Case 1.4:} Let $x, y$ be two different elements in $\gf_q^*$. Consider the submatrix
\begin{eqnarray*}
M_{1,4}=\left[
\begin{array}{lll}
1 & 1 & 0 \\
x & y & 0 \\
x^2 & y^2 & 1
\end{array}
\right].
\end{eqnarray*}
It is easy to deduce that $|M_{1,4}|=y+x$. Since $x \neq y$, $|M_{1,4}| \neq 0$. Hence, $\bar{\mathcal{E}_1}^\perp$ has no codeword of weight $3$ whose first two nonzero coordinates are at the first $q-1$ locations and the rest is at the $q+2$-th location.

{Case 1.5:} Let $x$ be an element in $\gf_q^*$. Consider the submatrix
\begin{eqnarray*}
M_{1,5}=\left[
\begin{array}{lll}
1 & 1 & 0 \\
x & 1 & 1 \\
x^2 & 0 & 1
\end{array}
\right].
\end{eqnarray*}
It is clear that $|M_{1,5}|=x^2+x+1$. By Lemma \ref{lem-opoly3}, $|M_{1,5}| \neq 0$.  Hence, $\bar{\mathcal{E}_1}^\perp$ does not have a codeword of weight $3$ whose first nonzero coordinate is at the first $q-1$ locations and the other two nonzero coordinates are at the $q$-th and $q+1$-th locations.

{Case 1.6:} Let $x$ be an element in $\gf_q^*$. Consider the submatrix
\begin{eqnarray*}
M_{1,6}=\left[
\begin{array}{lll}
1 & 1 & 0 \\
x & 1 & 0 \\
x^2 & 0 & 1
\end{array}
\right].
\end{eqnarray*}
 Clearly, $|M_{1,6}|=1+x$. $|M_{1,6}| = 0$ if and only if $x=1$. If $x=1$, the rank of $M_{1,6}$ is $2$. Consequently, the number of codewords of weight $3$ in $\bar{\mathcal{E}_1}^\perp$ whose nonzero coordinates are at the first, $q$-th and $q+2$-th locations is equal to $q-1$.

{Case 1.7:} Let $x$ be an element in $\gf_q^*$. Consider the submatrix
\begin{eqnarray*}
M_{1,7}=\left[
\begin{array}{lll}
1 & 0 & 0 \\
x & 1 & 0 \\
x^2 & 1 & 1
\end{array}
\right].
\end{eqnarray*}
 Note that $|M_{1,7}|=1$. Hence, $\bar{\mathcal{E}_1}^\perp$ has no codeword of weight $3$ whose first nonzero coordinate is at the first $q-1$ locations and the other nonzero coordinates are at the last two locations.

{Case 1.8:} Consider the submatrix
\begin{eqnarray*}
M_{1,8}=\left[
\begin{array}{lll}
1 & 0 & 0 \\
1 & 1 & 0 \\
0 & 1 & 1
\end{array}
\right].
\end{eqnarray*}
Clearly, $|M_{1,8}|=1$.  Hence, $\bar{\mathcal{E}_1}^\perp$ has no codeword of weight $3$ whose nonzero coordinates are at the last three locations.

Summarizing the above eight cases, the total number of codewords of weight $3$ in $\bar{\mathcal{E}_1}^\perp$ is $(q-1)^2$.

Finally, we prove that the minimum distance of $\bar{\mathcal{E}_1}$ is $q-1$. By definition, $d(\bar{\mathcal{E}_1})=d(\mathcal{E}_1)=q-2$ or $d(\bar{\mathcal{E}_1})=d(\mathcal{E}_1)+1=q-1$. Assume that $d(\bar{\mathcal{E}_1})=q-2$. Let $\bc=a \bg_1 + b \bg_2 + c \bg_3$ be a codeword with weight $q-2$ in $\bar{\mathcal{E}_1}$, where $\bg_1$, $\bg_2$ and $\bg_3$ respectively represent the first, second and third rows of $\bar{G_3}$. Then $\bc$ has four zero coordinates. Consider the following cases.

{Case 2.1:}Assume that the last three coordinates in $\bc$ are zero. Then there exists an element $x$ in $\gf_q^*$ such that
\begin{eqnarray*}
\left\{
\begin{array}{r}
a+bx+cx^2 = 0, \\
a+b=0, \\
b+c=0,  \\
c=0. \\
\end{array}
\right.
\end{eqnarray*}
We deduce that $a=b=c=0$ and $\bc=0$. This is contrary to the fact that $\bc$ is a minimum weight codeword in $\bar{\mathcal{E}_1}$.

{Case 2.2:} Assume that two of the last three coordinates in $\bc$ are zero. Then there exist two different elements $x$ and $y$ in $\gf_q^*$ such that
\begin{eqnarray*}
\left\{
\begin{array}{r}
a+bx+cx^2 = 0, \\
a+by+cy^2 = 0, \\
a+b=0, \\
b+c=0,  \\
\end{array}
\right.
\mbox{ or }
\left\{
\begin{array}{r}
a+bx+cx^2 = 0, \\
a+by+cy^2 = 0, \\
a+b=0, \\
c=0,  \\
\end{array}
\right.
\mbox{ or }
\left\{
\begin{array}{r}
a+bx+cx^2 = 0, \\
a+by+cy^2 = 0, \\
b+c=0,  \\
c=0.
\end{array}
\right.
\end{eqnarray*}
We deduce that $a=b=c=0$ and $\bc=0$ by Lemma \ref{lem-opoly3}. This is contrary to the fact that $\bc$ is a minimum weight codeword in $\bar{\mathcal{E}_1}$.

{Case 2.3:} Assume that at most one of the last three coordinates in $\bc$ is zero. Then there exist three pairwise different elements $x, y, z$  in $\gf_q^*$ such that
\begin{eqnarray*}
\left\{
\begin{array}{r}
a+bx+cx^2 = 0, \\
a+by+cy^2 = 0, \\
a+bz+cz^2 = 0. \\
\end{array}
\right.
\end{eqnarray*}
The coefficient matrix for this system of equations is
\begin{eqnarray*}
M_1 = \left[
\begin{array}{ccc}
1 & x & x^2 \\
1 & y & y^2 \\
1 & z & z^2 \\
\end{array}
\right]
\end{eqnarray*}
Obviously, $|M_1|=(x^2+z^2)(y+z)+(y^2+z^2)(x+z)$. Then by Lemma \ref{lem-opoly2}, $|M_1| \neq 0$. Hence, $a=b=c=0$ and $\bc=0$. This is contrary to the fact that $\bc$ is a minimum weight codeword in $\bar{\mathcal{E}_1}$.

Summarizing the above discussions, we have $d(\bar{\mathcal{E}_1})=d(\mathcal{E}_1)+1=q-1$. Hence $\bar{\mathcal{E}_1}$ is an NMDS code with parameters $[q+2, 3, q-1]$. By Lemma \ref{lem-N2}, the total number $A_{q-1}$ of the minimum weight codewords in $\bar{\mathcal{E}_1}$ is equal to the total number of weight $3$ in $\bar{\mathcal{E}_1}^\perp$. Hence $A_{q-1}=(q-1)^2$. Then by Lemma \ref{lem-N1}, the weight enumerator of $\mathcal{E}_1$ is obtained.
\end{proof}

Below we give other constructions of NMDS codes with the same parameters $[q+2, 3, q-1]$ but different weight enumerators.

Define
\begin{eqnarray*}
G_{4,1}=\left[
\begin{array}{lllllll}
1 & 1 & \cdots & 1 & 1 & 1 & 0\\
\alpha_1 & \alpha_2 & \cdots & \alpha_{q-1} & 0 & 1 & 1\\
\alpha_1^2 & \alpha_2^2 & \cdots & \alpha_{q-1}^2 & 1 & 0 & 1
\end{array}
\right].
\end{eqnarray*}
Obviously, $G_{4,1}$ is a $3$ by $q+2$ matrix over $\gf_q$. Let  $\mathcal{F}_{1}$  be the linear code over $\gf_q$ with generator matrix $G_{4,1}$.

The parameters and weight enumerator of $\mathcal{F}_{1}$ are given in the following theorem.
\begin{theorem}\label{thm-N14}
Let $m \geq 3$ be odd. Then the linear code $\mathcal{F}_{1}$ is an NMDS code over $\gf_q$ with parameters $[q+2, 3, q-1]$ and weight enumerator
\begin{eqnarray*}
A(z)=1 + \frac{(q-1)(3q-4)}{2} z^{q-1} + \frac{(q-1)(q^2-6q+14)}{2} z^{q} + \\
\frac{3(q-1)(3q-4)}{2} z^{q+1} + \frac{(q-1)(q-2)^2}{2} z^{q+2}.
\end{eqnarray*}
\end{theorem}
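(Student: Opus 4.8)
The plan is to argue exactly as in the proof of Theorem~\ref{thm-N2234}, working directly with $G_{4,1}$ rather than viewing $\mathcal{F}_1$ as an extended code. Let $\bg_1,\bg_2,\bg_3$ denote the rows of $G_{4,1}$ and abbreviate its three non-evaluation columns by $(1,0,1)^{\top}$, $(1,1,0)^{\top}$, $(0,1,1)^{\top}$. First I would note $\dim(\mathcal{F}_1)=3$: any three evaluation columns $(1,x,x^2)^{\top}$ with pairwise-distinct $x\in\gf_q^*$ form a Vandermonde matrix that is nonsingular by Lemma~\ref{lem-opoly2}, so $\rank(G_{4,1})=3$. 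Then I would show $\mathcal{F}_1^{\perp}$ has parameters $[q+2,q-1,3]$: no column of $G_{4,1}$ is zero and no two are proportional, giving $d(\mathcal{F}_1^{\perp})>2$, while the three non-evaluation columns sum to $0$ in characteristic two, so they are dependent and yield a weight-$3$ word; hence $d(\mathcal{F}_1^{\perp})=3$.

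The core of the proof is counting the weight-$3$ words of $\mathcal{F}_1^{\perp}$, that is, the dependent triples of columns of $G_{4,1}$, each of which contributes exactly $q-1$ codewords (the dependency has full support because any two columns are independent). I would split according to how many non-evaluation columns are used. Three evaluation columns give a nonsingular Vandermonde matrix (Lemma~\ref{lem-opoly2}) and contribute nothing. For two evaluation columns and one non-evaluation column, the $3\times3$ determinant factors as $(x+y)(xy+1)$, $(x+y)(xy+x+y)$, or $(x+y)(x+y+1)$; since $x\neq y$, the vanishing locus is $xy=1$, $(x+1)(y+1)=1$, or $x+y=1$ respectively, and each gives exactly $(q-2)/2$ unordered pairs in $\gf_q^*$ (counted directly, or through the $2$-to-$1$ property of Lemma~\ref{lem-opoly1}). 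For one evaluation column and two non-evaluation columns, every determinant equals $x^2+x+1$, which is nonzero by Lemma~\ref{lem-opoly3}; this is the step that forces $m$ to be odd, and it makes these cases contribute nothing. The three non-evaluation columns supply one further dependent triple. Summing gives $\tfrac{3(q-2)}{2}+1=\tfrac{3q-4}{2}$ dependent triples, so by Lemma~\ref{lem-N2} the number of minimum-weight words is $A_{q-1}=\tfrac{(q-1)(3q-4)}{2}$.

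To determine $d(\mathcal{F}_1)$ I would assume $d(\mathcal{F}_1)\le q-2=(q+2)-4$, so a minimum-weight word $\bc=a\bg_1+b\bg_2+c\bg_3$ has at least four zero coordinates, and derive $\bc=0$. The decisive observation is that the three linear forms $a+c$, $a+b$, $b+c$ governing the non-evaluation coordinates sum to zero, so the number of vanishing non-evaluation coordinates is $0$, $1$, or $3$. If all three vanish then $a=b=c$, and a further vanishing evaluation coordinate would force $1+x+x^2=0$, impossible by Lemma~\ref{lem-opoly3}, whence $a=b=c=0$. If at most one vanishes, then at least three evaluation coordinates vanish and the resulting $3\times3$ Vandermonde system (Lemma~\ref{lem-opoly2}) again forces $a=b=c=0$. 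Either way $\bc=0$, a contradiction, so $d(\mathcal{F}_1)\ge q-1$. The Singleton bound gives $d(\mathcal{F}_1)\le q$, and $d(\mathcal{F}_1)=q$ would make $\mathcal{F}_1$ MDS with an MDS dual of minimum distance $4$, contradicting $d(\mathcal{F}_1^{\perp})=3$; hence $d(\mathcal{F}_1)=q-1$. Thus $\mathcal{F}_1$ is AMDS with AMDS dual, i.e. NMDS, and substituting $A_{q-1}=\tfrac{(q-1)(3q-4)}{2}$ into the recurrences of Lemma~\ref{lem-N1} produces the stated weight enumerator.

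I expect the bookkeeping in the dual weight-$3$ enumeration to be the main obstacle: one must keep the case split exhaustive and disjoint, factor each $3\times3$ determinant correctly, and count the solutions of each condition. The genuinely new feature relative to Theorem~\ref{thm-N2234} is the linear dependence among the three non-evaluation columns, which simultaneously produces the extra dependent triple in the count and creates the $a=b=c$ branch in the minimum-distance argument; it is precisely this branch that makes Lemma~\ref{lem-opoly3}, and hence the hypothesis that $m$ is odd, indispensable.
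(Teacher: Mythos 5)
Your proposal is correct and follows essentially the same route as the paper, which omits the proof of Theorem~\ref{thm-N14} by referring to the argument of Theorem~\ref{thm-N2234}: determine the dimension, count the weight-$3$ codewords of the dual through the $3\times 3$ column determinants (using Lemmas~\ref{lem-opoly1}, \ref{lem-opoly2}, \ref{lem-opoly3}), establish the minimum distance by a zero-coordinate case analysis, and finish with Lemmas~\ref{lem-N2} and~\ref{lem-N1}. Your dependent-triple bookkeeping, the factorizations $(x+y)(xy+1)$, $(x+y)(xy+x+y)$, $(x+y)(x+y+1)$, and the parity observation that the three special coordinates $a+c$, $a+b$, $b+c$ sum to zero are correct and yield exactly the count $A_{q-1}=\frac{(q-1)(3q-4)}{2}$ consistent with the stated weight enumerator.
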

\begin{proof}
The proof is similar to that of Theorem \ref{thm-N2234} and omitted.
\end{proof}

Define
\begin{eqnarray*}
G_{4,2}=\left[
\begin{array}{lllllll}
1 & 1 & \cdots & 1 & 1 & 1 & 1\\
\alpha_1 & \alpha_2 & \cdots & \alpha_{q-1} & 0 & 0 & 1\\
\alpha_1^2 & \alpha_2^2 & \cdots & \alpha_{q-1}^2 & 0 & 1 & 0
\end{array}
\right].
\end{eqnarray*}
 Let  $\mathcal{F}_{2}$ be the linear code  over $\gf_q$ with generator matrix $G_{4,2}$.

In the following theorem, the parameters and weight enumerator of $\mathcal{F}_{2}$ are presented.

\begin{theorem}\label{thm-N15}
Let $m \geq 3$ be odd. Then the linear code $\mathcal{F}_{2}$ is an NMDS code over $\gf_q$ with parameters $[q+2, 3, q-1]$ and weight enumerator
\begin{eqnarray*}
A(z)=1 + (q-1)(q-2) z^{q-1} + \frac{(q-1)(q^2-3q+14)}{2} z^{q} + \\
3(q-1)(q-2) z^{q+1} + \frac{(q-1)(q^2-3q+4)}{2} z^{q+2}.
\end{eqnarray*}
\end{theorem}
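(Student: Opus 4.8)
The plan is to follow the template of the proof of Theorem~\ref{thm-N2234}, since $\mathcal{F}_2$ again has length $q+2$, dimension $3$, and target distance $q-1$; only the three appended columns $(1,0,0)^{\top}$, $(1,0,1)^{\top}$, $(1,1,0)^{\top}$ differ. First I would record that $\dim(\mathcal{F}_2)=3$: the three appended columns have determinant $1$ over $\gf_q$, so $\rank(G_{4,2})=3$ and hence $\dim(\mathcal{F}_2^\perp)=q-1$. A convenient structural remark is that the column $(1,0,0)^{\top}$ is exactly the evaluation $(1,\alpha,\alpha^2)^{\top}$ at $\alpha=0$; thus the first $q-1$ columns together with $(1,0,0)^{\top}$ form the full Vandermonde family $\{(1,\alpha,\alpha^2)^{\top}:\alpha\in\gf_q\}$, leaving only two genuinely extra columns $(1,0,1)^{\top}$ and $(1,1,0)^{\top}$. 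This regrouping streamlines the case analysis.

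For the dual, I would first note that no column of $G_{4,2}$ vanishes and any two columns are $\gf_q$-linearly independent, so $d(\mathcal{F}_2^\perp)>2$; exhibiting one dependent triple gives $d(\mathcal{F}_2^\perp)=3$. The core computation is to count the weight-$3$ codewords of $\mathcal{F}_2^\perp$, equivalently the linearly dependent column triples, each of which contributes exactly $q-1$ such codewords (the defining dependence has full support, as in Theorem~\ref{thm-N2234}). Splitting by how many columns come from the Vandermonde family: a triple of three Vandermonde columns is never dependent by Lemma~\ref{lem-opoly2}; a triple of two Vandermonde columns at distinct $x,y$ together with $(1,0,1)^{\top}$ has determinant $(x+y)(1+xy)$, vanishing (for $x\neq y$) precisely when $xy=1$, giving $(q-2)/2$ pairs; a triple with two Vandermonde columns and $(1,1,0)^{\top}$ has determinant $(x+y)(x+y+xy)$, vanishing precisely when $(x+1)(y+1)=1$, again giving $(q-2)/2$ pairs; and the unique triple with one Vandermonde column and both extra columns has determinant $x^2+x+1$, which never vanishes by Lemma~\ref{lem-opoly3}. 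Summing, there are $q-2$ dependent triples, so $\mathcal{F}_2^\perp$ has $(q-1)(q-2)$ codewords of weight $3$.

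Next I would prove $d(\mathcal{F}_2)=q-1$. Writing a codeword as $\bc=a\bg_1+b\bg_2+c\bg_3$, its first $q-1$ entries are $a+b\alpha_i+c\alpha_i^2$ and its last three entries are $a$, $a+c$, $a+b$. Assuming for contradiction that $d(\mathcal{F}_2)\le q-2$ forces a nonzero $\bc$ with at least four zero coordinates, and I would split on how many of the three appended entries vanish, paralleling Cases~2.1--2.3 of Theorem~\ref{thm-N2234}. If all three vanish then $a=b=c=0$; if exactly two vanish, two Vandermonde entries must also vanish, and the subsystem either forces $a=b=c=0$ directly or, in the subcase $a+c=a+b=0$, yields $a(\alpha^2+\alpha+1)=0$, which by Lemma~\ref{lem-opoly3} has no root, contradicting two Vandermonde zeros; if at most one appended entry vanishes, then $a+b\alpha+c\alpha^2$ has at least three roots, impossible for a nonzero polynomial of degree $\le 2$ (the relevant $3\times 3$ Vandermonde minor is nonsingular by Lemma~\ref{lem-opoly2}). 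Hence $d(\mathcal{F}_2)\ge q-1$; the Singleton bound gives $d(\mathcal{F}_2)\le q$, and $d(\mathcal{F}_2)=q$ is excluded since it would make $\mathcal{F}_2^\perp$ an MDS code of distance $4$, contradicting $d(\mathcal{F}_2^\perp)=3$. Thus $\mathcal{F}_2$ is NMDS with $d(\mathcal{F}_2)=q-1$.

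Finally, Lemma~\ref{lem-N2} equates $A_{q-1}$ with the number of weight-$3$ dual codewords, giving $A_{q-1}=(q-1)(q-2)$, and substituting this into the recurrences of Lemma~\ref{lem-N1} yields the stated weight enumerator. I expect the \emph{main obstacle} to be the weight-$3$ counting in the second step: the bookkeeping must correctly exclude the degenerate values (notably $x=1$ and $x=0$) so that each vanishing condition contributes exactly $(q-2)/2$ unordered pairs rather than $q/2$, and one must confirm that the $\alpha=0$ column never joins an extra column in a dependent triple. The hypothesis that $m\ge 3$ is odd enters precisely through Lemma~\ref{lem-opoly3}, both in killing the $x^2+x+1$ case of the dual count and in the subcase $a+c=a+b=0$ of the distance argument.
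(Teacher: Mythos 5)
Your proof is correct and takes essentially the approach the paper intends: the paper omits this proof with the remark that it is ``similar to that of Theorem \ref{thm-N2234}'', and your argument is precisely that template --- count the linearly dependent column triples via explicit determinants (using Lemmas \ref{lem-opoly1}--\ref{lem-opoly3}), each contributing $q-1$ weight-$3$ dual codewords, then rule out $d(\mathcal{F}_2)\le q-2$ by the same three-case split, and finish with Lemmas \ref{lem-N2} and \ref{lem-N1}. Your counts check out ($(q-2)/2$ pairs from $xy=1$, $(q-2)/2$ pairs from $(x+1)(y+1)=1$, none from $x^2+x+1$, giving $A_3^{\perp}=(q-1)(q-2)$, which indeed reproduces the stated enumerator); the regrouping of $(1,0,0)^{\top}$ into the Vandermonde family is only a bookkeeping simplification of the paper's longer case analysis, not a different method.
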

\begin{proof}
The proof is similar to that of Theorem \ref{thm-N2234} and omitted.
\end{proof}

Define
\begin{eqnarray*}
G_{4,3}=\left[
\begin{array}{lllllll}
1 & 1 & \cdots & 1 & 1 & 0 & 1\\
\alpha_1 & \alpha_2 & \cdots & \alpha_{q-1} & 0 & 0 & 0\\
\alpha_1^2 & \alpha_2^2 & \cdots & \alpha_{q-1}^2 & 0 & 1 & 1
\end{array}
\right].
\end{eqnarray*}
 Let  $\mathcal{F}_{3}$ be  the linear code over $\gf_q$ with generator matrix  $G_{4,3}$.

The parameters and weight enumerators of $\mathcal{F}_{3}$ are given as follows.
\begin{theorem}\label{thm-N16}
Let $m \geq 3$ be odd. Then the linear code $\mathcal{F}_{3}$ is an NMDS code over $\gf_q$ with parameters $[q+2, 3, q-1]$ and weight enumerator
\begin{eqnarray*}
A(z)=1 + \frac{q(q-1)}{2} z^{q-1} + \frac{(q-1)(q^2+2)}{2} z^{q} + \frac{3q(q-1)}{2} z^{q+1} + \frac{q(q-1)(q-2)}{2} z^{q+2}.
\end{eqnarray*}
\end{theorem}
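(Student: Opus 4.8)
The plan is to follow the template of the proof of Theorem~\ref{thm-N2234}, establishing in turn that $\dim(\mathcal{F}_{3})=3$, that $\mathcal{F}_{3}^{\perp}$ is a $[q+2,\,q-1,\,3]$ code with exactly $\tfrac{q(q-1)}{2}$ codewords of weight $3$, and that $d(\mathcal{F}_{3})=q-1$; the weight enumerator then drops out of Lemmas~\ref{lem-N2} and~\ref{lem-N1}. The first step is routine: the columns of $G_{4,3}$ are the $q$ conic columns $(1,\alpha,\alpha^{2})^{\top}$ with $\alpha$ ranging over $\gf_q$ (the $q$-th being the case $\alpha=0$, namely $(1,0,0)^{\top}$) together with $(0,0,1)^{\top}$ and $(1,0,1)^{\top}$; the columns $(1,0,0)^{\top}$, $(0,0,1)^{\top}$ and the conic column $(1,1,1)^{\top}$ are linearly independent, so $\rank(G_{4,3})=3$ and $\dim(\mathcal{F}_{3}^{\perp})=q-1$.

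Next I would compute $d(\mathcal{F}_{3}^{\perp})$ and count its minimum-weight words. No column of $G_{4,3}$ is zero and any two columns are $\gf_q$-independent, so $d(\mathcal{F}_{3}^{\perp})>2$; since $(1,0,0)^{\top}+(0,0,1)^{\top}=(1,0,1)^{\top}$, there is a dependent triple and $d(\mathcal{F}_{3}^{\perp})=3$. Each weight-$3$ word corresponds to a triple of columns of rank $2$, and each such triple contributes exactly $q-1$ words, so the task is to count dependent triples. I would run through the possible supports: three conic columns give the determinant $(x+y)(x^{2}+z^{2})+(x+z)(x^{2}+y^{2})$, which is nonzero by Lemma~\ref{lem-opoly2}; two conic columns with $(0,0,1)^{\top}$ give determinant $x+y\neq0$; two conic columns with $(1,0,1)^{\top}$ give determinant $(x+y)(1+xy)$, vanishing exactly when $xy=1$ and $x\neq y$; and one conic column with both $(0,0,1)^{\top}$ and $(1,0,1)^{\top}$ gives determinant $x$, vanishing only at $x=0$. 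The relation $xy=1$ with $x\neq y$ pairs each $x\in\gf_q\setminus\{0,1\}$ with $x^{-1}$, yielding $(q-2)/2$ triples, and the last case contributes one more, for $q/2$ dependent triples in all; hence $A_{3}^{\perp}=(q-1)\cdot\tfrac{q}{2}=\tfrac{q(q-1)}{2}$. Notably, unlike in Theorem~\ref{thm-N2234}, these mixed vanishings are governed by the elementary conditions $xy=1$ and $x=0$ rather than by Lemma~\ref{lem-opoly1}.

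Then I would bound $d(\mathcal{F}_{3})$ from below. Writing $\bc=a\bg_1+b\bg_2+c\bg_3$, its entry at conic position $\alpha$ is $a+b\alpha+c\alpha^{2}$, while its last two entries are $c$ and $a+c$. If $d(\mathcal{F}_{3})\leq q-2=(q+2)-4$ then a minimum-weight word has at least four zero coordinates, and I would rule this out by splitting on $c$: when $c\neq0$ the quadratic $c\alpha^{2}+b\alpha+a$ has at most two roots and the $(q+1)$-th entry $c$ is nonzero, so at most three coordinates vanish; when $c=0$ the conic entry $a+b\alpha$ vanishes for at most one $\alpha$ and, together with the two special entries, again at most three coordinates vanish. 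Should three distinct conic positions ever be forced to vanish, the Vandermonde system $a+bx+cx^{2}=a+by+cy^{2}=a+bz+cz^{2}=0$ would be nonsingular by Lemma~\ref{lem-opoly2}, forcing $\bc=0$, exactly as in Case~2.3 of Theorem~\ref{thm-N2234}. Hence $d(\mathcal{F}_{3})\geq q-1$; the Singleton bound gives $d(\mathcal{F}_{3})\leq q$, and $d(\mathcal{F}_{3})=q$ would make $\mathcal{F}_{3}^{\perp}$ a $[q+2,q-1,4]$ MDS code, contradicting $d(\mathcal{F}_{3}^{\perp})=3$. Thus $d(\mathcal{F}_{3})=q-1$, so $\mathcal{F}_{3}$ is NMDS; Lemma~\ref{lem-N2} gives $A_{q-1}=A_{3}^{\perp}=\tfrac{q(q-1)}{2}$, and substituting this into the recurrence of Lemma~\ref{lem-N1} yields the coefficients of $z^{q}$, $z^{q+1}$ and $z^{q+2}$.

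I expect the only genuine obstacle to be the exhaustive, non-overlapping enumeration of column triples needed for the weight-$3$ count in $\mathcal{F}_{3}^{\perp}$: one must check that every support pattern is accounted for exactly once and must separate cleanly the determinant vanishings controlled by the oval-polynomial identity (Lemma~\ref{lem-opoly2}) from the elementary ones such as $xy=1$ and $x=0$. Everything following that count is a mechanical transcription of the argument for Theorem~\ref{thm-N2234}.
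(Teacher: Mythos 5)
Your proposal is correct and follows essentially the same route as the paper: the paper omits this proof, deferring to the template of Theorem~\ref{thm-N2234}, and your argument is exactly that template instantiated for $G_{4,3}$ --- your case enumeration of dependent column triples is exhaustive, the determinants $(x+y)(xy+1)$ and $x$ are computed correctly, the resulting count $\bigl(\tfrac{q-2}{2}+1\bigr)(q-1)=\tfrac{q(q-1)}{2}$ matches $A_{q-1}$, and the recurrence of Lemma~\ref{lem-N1} then yields the stated enumerator. Your only departures are harmless streamlinings: bounding $d(\mathcal{F}_3)$ by directly counting roots of $a+b\alpha+c\alpha^2$ instead of the paper's case-by-case linear systems, and observing (correctly) that the mixed-column vanishing conditions here are elementary, so Lemma~\ref{lem-opoly3} is never needed.
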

\begin{proof}
The proof is similar to that of Theorem \ref{thm-N2234} and omitted.
\end{proof}

We remark that the NMDS codes in Theorem \ref{thm-N2234}, \ref{thm-N14}, \ref{thm-N15} and \ref{thm-N16} have different enumerators, though they have the same parameters.
Hence these NMDS codes are pairwise inequivalent to each other. The NMDS code in Theorem \ref{thm-N15}  has the same weight enumerator as that of the NMDS code in \cite[Theorem 12]{Wqiu}. It is open whether they are equivalent to each other.
\section{Optimal locally recoverable codes}\label{sec4}
Locally recoverable codes (LRCs for short) are widely used in distributed data storage systems. In this paper, we only consider linear locally recoverable codes.

For a positive integer $n$, we denote by $[n]=\{ 0,1,\cdots,n-1 \}$.  Let $\mathcal{C}$ be an $[n,k,d]$ linear code over $\mathbb{F}_q$. We index the coordinates of the codewords in $\mathcal{C}$ with the elements in $[n]$.
For each $i \in [n]$, if there exist a subset $R_{i} \subseteq [n] \backslash {i}$ of size $r$ and a function $f_{i}(x_1,x_2,\cdots,x_r)$ on $\mathbb{F}_q^{r}$ meeting $c_i=f_{i}(\mathbf{c}_{R_i})$ for any $\mathbf{c}=(c_0,\cdots,c_{n-1}) \in \mathcal{C}$, then we call $\mathcal{C}$ an $(n,k,d,q;r)$-LRC, where $\mathbf{c}_{R_i}$ is the projection of $\mathbf{c}$ at $R_{i}$. The set $R_{i}$ is called  the repair set of $c_i$.

There exist some tradeoffs between the locality, length, dimension and minimal distance of LRCs.
In the following, two famous bounds on LRCs are presented.

\begin{lemma}[\cite{Gopalan}, Singleton-like bound]\label{lem-R1112}
For any $(n,k,d,q;r)-LRC$,
\begin{eqnarray}\label{eqn-slbound}
d \leq n-k- \left \lceil \frac{k}{r} \right \rceil +2
\end{eqnarray}
\end{lemma}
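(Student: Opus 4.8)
The plan is to prove the bound by exhibiting a nonzero codeword of small support, which I will locate by constructing a large set of coordinates on which some nonzero codeword vanishes. Concretely, for $S \subseteq [n]$ write $\mathcal{C}|_S$ for the projection of $\mathcal{C}$ onto the coordinates indexed by $S$. If $\dim(\mathcal{C}|_S) \le k-1$, then the projection map $\mathcal{C} \to \mathcal{C}|_S$ has nontrivial kernel, so there is a nonzero $\mathbf{c} \in \mathcal{C}$ with $\support(\mathbf{c}) \subseteq [n] \setminus S$, whence $d \le \wt(\mathbf{c}) \le n - |S|$. Thus it suffices to produce a set $S$ with $\dim(\mathcal{C}|_S) \le k-1$ and $|S| \ge k + \lceil k/r \rceil - 2$; the desired inequality $d \le n-k-\lceil k/r\rceil +2$ then follows immediately.

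First I would build $S$ greedily, adjoining one repair group at a time. Starting from $S = \emptyset$, while $\rank(S) := \dim(\mathcal{C}|_S) \le k-2$ I would choose a coordinate $i \notin S$ that is not yet determined by $S$ (one exists because $\rank(S) < k$) and add the whole group $\{i\} \cup R_i$ to $S$. The decisive input is the locality relation $c_i = f_i(\mathbf{c}_{R_i})$: once $R_i \subseteq S$, the symbol $c_i$ is a function of the others, so adjoining $\{i\} \cup R_i$ raises $\rank(S)$ by at most $|R_i \setminus S| \le r$, whereas it enlarges $S$ by exactly $1 + |R_i \setminus S|$. Comparing the two increments shows that the \emph{redundancy gap} $|S| - \rank(S)$ grows by at least $1$ at each such step, while $\rank(S)$ grows by at least $1$ and at most $r$.

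Then I would carry out the accounting. Since every group raises the rank by at most $r$, driving $\rank(S)$ up to $k-1$ forces on the order of $\lceil k/r\rceil - 1$ group insertions, and each contributes one unit to the gap; converting this into the size estimate gives $|S| - (k-1) \ge \lceil k/r\rceil - 1$, i.e. $|S| \ge k + \lceil k/r\rceil - 2$, exactly the target. To keep $\rank(S)$ from overshooting $k-1$ to $k$, at the final group I would add its coordinates one at a time and stop the instant the rank reaches $k-1$, which is legitimate because adjoining a single coordinate changes $\rank(S)$ by $0$ or $1$.

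The hard part will be the boundary bookkeeping. The per-step gain of one unit of gap is transparent for a full group with $i \notin S$, but I must check that overlaps between $R_i$ and the already-constructed $S$ do not erode it, that an undetermined coordinate can always be found while $\rank(S) < k$, and, most delicately, that the possibly partial final group still leaves the accumulated gap at least $\lceil k/r\rceil - 1$ \emph{while} $\rank(S)$ lands exactly on $k-1$. Controlling this off-by-one at termination is the genuine obstacle; once it is handled, everything else follows from the defining relation $c_i = f_i(\mathbf{c}_{R_i})$ and the elementary monotonicity of $\rank$ under enlarging $S$.
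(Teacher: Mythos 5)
The paper does not actually prove this lemma: it is imported verbatim from the cited reference \cite{Gopalan}, so there is no internal proof to compare against. Your strategy is precisely the one used in that reference, and its skeleton is sound: the reduction is correct (if $\dim(\mathcal{C}|_S)\le k-1$ there is a nonzero codeword vanishing on $S$, so $d\le n-|S|$); the key rank estimate is correct even for nonlinear repair functions $f_i$, since two codewords agreeing on $S\cup R_i$ must agree at coordinate $i$, so the projection of $\mathcal{C}|_{S\cup\{i\}\cup R_i}$ onto $S\cup R_i$ is injective and the rank increment when adjoining $\{i\}\cup R_i$ is at most $|R_i\setminus S|\le r$ while the size increment is $1+|R_i\setminus S|$; and an undetermined coordinate always exists while $\rank(S)<k$, because otherwise the projection onto $S$ would be injective and $\rank(S)=k$. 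In the run of the algorithm where every step adds a full group, your accounting gives $|S|-\rank(S)\ge \ell \ge \lceil (k-1)/r\rceil \ge \lceil k/r\rceil -1$ with $\rank(S)=k-1$, which is exactly what is needed.

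The step you flag but leave open, the truncated final group, is the only real hole, and you should close it explicitly; it takes two lines. Suppose at step $\ell$ adding the full group $\{i\}\cup R_i$ would push the rank to $k$ or more, so you instead add its coordinates one at a time and stop at rank exactly $k-1$; this truncated step may contribute nothing to the gap $|S|-\rank(S)$. But in this case $k\le \rank(S_{\ell-1}\cup\{i\}\cup R_i)\le \rank(S_{\ell-1})+r$, so $\rank(S_{\ell-1})\ge k-r$. Since each of the first $\ell-1$ (full) steps raises the rank by at most $r$, we get $(\ell-1)\,r\ge k-r$, hence $\ell-1\ge \lceil k/r\rceil-1$. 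Those $\ell-1$ full steps each contribute at least $1$ to the gap, and the truncated step contributes at least $0$, so $|S|-(k-1)\ge \lceil k/r\rceil-1$, i.e. $|S|\ge k+\lceil k/r\rceil-2$ with $\rank(S)=k-1$, as required. With this paragraph inserted, your proposal is a complete and correct proof of the bound, matching the argument of \cite{Gopalan}.
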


LRCs are said to be distance-optimal ($d$-optimal for short) when they achieve  the Singleton-like bound.

\begin{lemma}[\cite{Upper}, Cadambe-Mazumdar bound]\label{lem-R1113}
For any $(n,k,d,q;r)-LRC$,
\begin{eqnarray}\label{eqn-cmbound}
k \leq \mathop{\min}_{t \in Z^{+}} [rt+k_{opt}^{(q)}(n-t(r+1),d)]
\end{eqnarray}
where $k_{opt}^{(q)}(n,d)$ is the largest possible dimension of a linear code with length $n$, minimum distance $d$ and alphabet size $q$. $Z^{+}$ represents the set of all positive integers.
\end{lemma}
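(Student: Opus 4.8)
The plan is to prove the bound for each fixed positive integer $t$ with $t(r+1)\le n$ and then to take the minimum over $t$; for those $t$ with $t(r+1)>n$ the corresponding term is vacuous and cannot affect the minimum. The core idea is to exhibit inside $\mathcal{C}$ a shortened subcode whose length, dimension and minimum distance are controlled tightly enough that the mere definition of $k_{opt}^{(q)}$ delivers the claim. Concretely, I would first translate the locality hypothesis into linear algebra: writing a $k\times n$ generator matrix $G$ of $\mathcal{C}$ with columns $g_0,\dots,g_{n-1}\in\mathbb{F}_q^{k}$, the requirement $c_i=f_i(\mathbf{c}_{R_i})$ for \emph{all} codewords forces, for each $i\in[n]$, the column $g_i$ to lie in the $\mathbb{F}_q$-span of $\{g_j:j\in R_i\}$, a set of at most $r$ columns. (For linear codes the repair function $f_i$ is automatically linear, so this translation is lossless.)

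The heart of the argument is to construct a coordinate set $S\subseteq[n]$ with $|S|=t(r+1)$ and $\dim(\mathcal{C}|_S)\le tr$, where $\mathcal{C}|_S$ denotes the projection of $\mathcal{C}$ onto the coordinates in $S$. I would build $S$ greedily as a union of local repair groups: starting from $S=\emptyset$, I repeatedly select a coordinate $i\notin S$ whose column strictly increases $\rank\{g_j:j\in S\}$, and adjoin the whole group $\{i\}\cup R_i$ to $S$. Each such group enlarges $|S|$ by at most $r+1$, yet by the local dependence $g_i\in\mathrm{span}\{g_j:j\in R_i\}$ it raises the rank by at most $r$; equivalently each group contributes one local parity relation among its coordinates. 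After $t$ rank-increasing groups one reaches $\dim(\mathcal{C}|_S)\le tr$, and one then pads $S$ up to exactly $t(r+1)$ coordinates while keeping the projected dimension at most $tr$.

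Granting such an $S$, I would pass to the shortened code $\widetilde{\mathcal{C}}$ consisting of those codewords of $\mathcal{C}$ that vanish on every coordinate of $S$, restricted to the complement $[n]\setminus S$. Since these codewords are zero on $S$, their projection onto $[n]\setminus S$ is injective, so by rank--nullity applied to $\mathcal{C}\to\mathcal{C}|_S$ we get $\dim\widetilde{\mathcal{C}}=k-\dim(\mathcal{C}|_S)\ge k-tr$; the length is $n-|S|=n-t(r+1)$; and any nonzero codeword of $\widetilde{\mathcal{C}}$ arises from a nonzero codeword of $\mathcal{C}$ whose entire weight, at least $d$, is supported on $[n]\setminus S$, so $d(\widetilde{\mathcal{C}})\ge d$. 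By the very definition of $k_{opt}^{(q)}$ we then obtain $k-tr\le\dim\widetilde{\mathcal{C}}\le k_{opt}^{(q)}(n-t(r+1),d)$, that is $k\le rt+k_{opt}^{(q)}(n-t(r+1),d)$; minimizing over $t\in Z^{+}$ yields \eqref{eqn-cmbound}.

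The step I expect to be the main obstacle is the greedy construction of $S$, because local repair groups may overlap: after $t$ groups the set $S$ can be smaller than $t(r+1)$, and then the rank increments and the subsequent padding must be bookkept carefully to guarantee \emph{simultaneously} $|S|=t(r+1)$ and $\dim(\mathcal{C}|_S)\le tr$. The clean case is when $t$ pairwise disjoint groups exist, each contributing an independent local relation; the real work lies in handling overlapping groups, so that $t$ \emph{independent} parity relations are still captured, together with the boundary situations in which the rank reaches $k$ or $S$ would exhaust $[n]$ before $t(r+1)$ coordinates are collected. Everything else, namely the shortened-code computation and the direct appeal to the definition of $k_{opt}^{(q)}$, is routine.
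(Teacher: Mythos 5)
The paper does not actually prove this lemma: it is quoted directly from Cadambe--Mazumdar \cite{Upper}, so the only thing to compare your proposal against is the original reference, and your argument is a correct, self-contained proof of the statement in the form the paper needs. Your translation of locality into linear algebra is sound: for a linear code the set $\{(\mathbf{c}_{R_i},c_i):\mathbf{c}\in\mathcal{C}\}$ is a subspace which is the graph of a function, hence the graph of a \emph{linear} function, so indeed $g_i\in\mathrm{span}\{g_j:j\in R_i\}$. The bookkeeping you flag as the main obstacle resolves exactly as you hope, via the single invariant $\rank(\mathcal{C}|_S)\le |S|-s$ after $s$ greedy steps: a greedy step adjoins $A=(\{i\}\cup R_i)\setminus S$ with $1\le |A|\le r+1$ new coordinates while raising the rank by at most $|A|-1$ (because $g_i$ lies in the span of the $R_i$-columns), and a padding step raises $|S|$ by one and the rank by at most one, so the invariant survives both operations and gives $\rank(\mathcal{C}|_S)\le t(r+1)-t=tr$ once $|S|=t(r+1)$; overlapping repair groups only shrink $|A|$, which is harmless. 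If the greedy selection dies after $t'<t$ steps, then every column of $G$ lies in the span of the $S$-columns, so $k=\rank(\mathcal{C}|_S)\le |S|-t'\le t'r\le tr$ and the bound holds trivially (padding cannot push the rank past $k$); the case $t(r+1)>n$ or $n-t(r+1)<d$ is a convention issue in the statement itself, as you note. The shortening step and rank--nullity are routine. The genuine difference from \cite{Upper} is that the original proof is an entropy/counting argument valid for general, possibly nonlinear codes, whereas yours is the linear specialization; here that is a gain rather than a loss, since the paper restricts to linear LRCs and defines $k_{opt}^{(q)}$ over linear codes, and producing a linear shortened code is precisely what establishes that (formally stronger) linear version of the bound.
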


LRCs are said to be dimension-optimal ($k$-optimal for short) when they achieve  the Cadambe-Mazumdar bound.

\begin{lemma}[\cite{TanP}]\label{lem-R1114}
Let $\mathcal{C}$ be a nontrivial linear code of length $n$, $d^{\perp}=d(\mathcal{C}^{\perp})$. Then $\mathcal{C}$ has the minimum linear locality $d^{\perp}-1$ if and only if
$$
\mathop{\bigcup}_{\mathcal{S} \in \mathcal{B}_{d^{\perp}}(\mathcal{C}^{\perp})} \mathcal{S}=[n].
$$
where $\mathcal{B}_{d}(\mathcal{C})$ denotes the set of the supports of all codewords with weight $d$ in  $\mathcal{C}$, and the coordinates of the codewords are indexed by $(0,1,\cdots,n-1)$.
\end{lemma}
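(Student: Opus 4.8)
The plan is to translate the notion of linear locality of a single coordinate into a statement about the weights of codewords of the dual code $\mathcal{C}^{\perp}$, and then read off the global minimum locality as a maximum of these per-coordinate localities. The starting observation is the standard dictionary for linear repair: a coordinate $c_i$ admits a linear recovery function $f_i$ from a set $R_i$ of size $r$ precisely when there is a codeword $\mathbf{h}=(h_0,\dots,h_{n-1})\in\mathcal{C}^{\perp}$ with $h_i\neq 0$ and $\support(\mathbf{h})\subseteq R_i\cup\{i\}$. Indeed, if $c_i=\sum_{j\in R_i}a_jc_j$ holds for every $\mathbf{c}\in\mathcal{C}$, then the vector $\mathbf{h}$ with $h_i=1$, $h_j=-a_j$ for $j\in R_i$ and $h_j=0$ otherwise lies in $\mathcal{C}^{\perp}$ and has the required support; conversely, from such an $\mathbf{h}$ one recovers $c_i=-h_i^{-1}\sum_{j\neq i}h_jc_j$, a linear function of the coordinates in $\support(\mathbf{h})\setminus\{i\}$, and one may enlarge this repair set to any prescribed size by adjoining dummy coordinates with zero coefficient. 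I would state and verify this equivalence first.

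From this dictionary, the minimum linear locality of the single coordinate $i$ equals $r_i=\min\{\wt(\mathbf{h})-1 : \mathbf{h}\in\mathcal{C}^{\perp},\ h_i\neq 0\}$, and the minimum linear locality of $\mathcal{C}$ is $r=\max_{i\in[n]}r_i$. Because every nonzero codeword of $\mathcal{C}^{\perp}$ has weight at least $d^{\perp}$, each finite $r_i$ satisfies $r_i\geq d^{\perp}-1$; hence $r\geq d^{\perp}-1$ unconditionally, and the content of the lemma is exactly the characterization of when this inequality is an equality.

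For the two directions: if $\bigcup_{\mathcal{S}\in\mathcal{B}_{d^{\perp}}(\mathcal{C}^{\perp})}\mathcal{S}=[n]$, then every coordinate $i$ lies in the support of some weight-$d^{\perp}$ codeword of $\mathcal{C}^{\perp}$, so $r_i=d^{\perp}-1$ for all $i$ and therefore $r=d^{\perp}-1$. Conversely, if $\mathcal{C}$ has minimum linear locality $d^{\perp}-1$, then every coordinate is linearly recoverable, so each $r_i$ is finite with $r_i\leq d^{\perp}-1$; combined with $r_i\geq d^{\perp}-1$ this forces $r_i=d^{\perp}-1$, i.e.\ each $i$ is covered by the support of a weight-$d^{\perp}$ dual codeword, which is the asserted union. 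I would present these as the $(\Leftarrow)$ and $(\Rightarrow)$ halves.

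I expect the only real subtlety, rather than an obstacle, to be pinning down the definitions so the equivalences are airtight: namely that \emph{minimum linear locality} is the maximum over coordinates of the per-coordinate linear locality, that padding a repair set to a fixed size $r$ is harmless here since $r\leq n-1$, and that the nontriviality hypothesis rules out degenerate coordinates (those $i$ with $\mathbf{e}_i\in\mathcal{C}$, equivalently with no dual codeword supported through them), which would otherwise be non-recoverable and make $r_i$ infinite. Once the linear-repair dictionary is established, the remainder is bookkeeping on supports and weights.
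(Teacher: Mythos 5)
The paper offers no proof of this lemma at all --- it is imported verbatim from \cite{TanP} as a known result --- so there is no in-paper argument to compare against; your proposal is judged on its own and it is correct, being essentially the standard argument behind the cited result. The core steps are all sound: the dictionary between linear repair of coordinate $i$ and dual codewords $\mathbf{h}$ with $h_i \neq 0$, the resulting per-coordinate locality $r_i = \min\{\wt(\mathbf{h})-1 : \mathbf{h} \in \mathcal{C}^{\perp},\ h_i \neq 0\}$, the unconditional bound $r_i \geq d^{\perp}-1$, and the two implications obtained by checking when equality holds at every coordinate. You also correctly isolate the two points a careless write-up would fumble: padding a repair set to a common size $r \leq n-1$ is harmless, and the nontriviality hypothesis is exactly what guarantees each $r_i$ is finite (equivalently, no $\mathbf{e}_i \in \mathcal{C}$, so every coordinate meets the support of some dual codeword), so the maximum defining the minimum linear locality exists.
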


\begin{lemma}[\cite{TanP}]\label{lem-R1115}
Let $\mathcal{C}$ be a nontrivial NMDS code, then the minimum linear locality of $\mathcal{C}$ is either $d(\mathcal{C}^{\perp})-1$ or $d(\mathcal{C}^{\perp})$.
\end{lemma}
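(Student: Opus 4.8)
The plan is to recast the locality of each coordinate as a weight condition on $\mathcal{C}^\perp$ and then pin the largest such weight to within one of $d^\perp:=d(\mathcal{C}^\perp)$. Recall the standard dictionary for linear codes: if $G$ is a generator matrix of $\mathcal{C}$ with columns $G_1,\dots,G_n$, then $\mathbf{h}\in\mathbb{F}_q^n$ lies in $\mathcal{C}^\perp$ exactly when $\sum_{j}h_jG_j=\mathbf{0}$, so the codewords of $\mathcal{C}^\perp$ are precisely the linear dependences among the columns of $G$. Consequently a coordinate $i$ admits a linear repair set of size $r$ if and only if there is an $\mathbf{h}\in\mathcal{C}^\perp$ with $h_i\neq0$ and $\wt(\mathbf{h})\le r+1$; writing $w_i:=\min\{\wt(\mathbf{h}):\mathbf{h}\in\mathcal{C}^\perp,\ h_i\neq0\}$, the minimum linear locality of $\mathcal{C}$ equals $\max_{i}w_i-1$ (a smaller repair set can always be padded out). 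Since $w_i\ge d^\perp$ for every $i$ by definition of $d^\perp$, this already delivers the lower bound: the minimum linear locality is at least $d^\perp-1$, in agreement with the ``only if'' direction encoded in Lemma~\ref{lem-R1114}.

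It remains to establish the matching upper bound $\max_i w_i\le d^\perp+1$. For an NMDS code $\mathcal{C}^\perp$ is an $[n,n-k,k]$ AMDS code, so $d^\perp=k$, and it suffices to show that every coordinate $i$ lies in the support of some codeword of $\mathcal{C}^\perp$ of weight at most $k+1$. Fix $i$. Because $\mathcal{C}$ is nontrivial, $d(\mathcal{C})=n-k\ge2$, so $\mathcal{C}$ has no codeword of weight $1$; equivalently, no nonzero functional on $\mathbb{F}_q^k$ annihilates every column $G_j$ with $j\neq i$, i.e. the $n-1$ columns $\{G_j:j\neq i\}$ still span $\mathbb{F}_q^k$. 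Hence one may select $k$ coordinates $j_1,\dots,j_k\neq i$ whose columns $G_{j_1},\dots,G_{j_k}$ form a basis of $\mathbb{F}_q^k$.

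Expanding $G_i$ in this basis, say $G_i=\sum_{t=1}^{k}\lambda_t G_{j_t}$, yields the linear dependence $-G_i+\sum_{t}\lambda_t G_{j_t}=\mathbf{0}$, whose coefficient at position $i$ is nonzero. By the dictionary above this dependence is a codeword $\mathbf{h}\in\mathcal{C}^\perp$ with $h_i\neq0$ and support contained in $\{i,j_1,\dots,j_k\}$, hence of weight at most $k+1=d^\perp+1$. Therefore $w_i\le d^\perp+1$ for every $i$, so $\max_i w_i\in\{d^\perp,d^\perp+1\}$ and the minimum linear locality $\max_i w_i-1$ is either $d^\perp-1$ or $d^\perp$. (Equivalently, by Lemma~\ref{lem-R1114} the value $d^\perp-1$ is attained precisely when the supports of the minimum-weight codewords of $\mathcal{C}^\perp$ cover $[n]$, and the value $d^\perp$ in the contrary case.)

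The one genuinely delicate point I expect is the selection of a basis inside the columns indexed by $[n]\setminus\{i\}$. The naive attempt of adjoining $i$ to an arbitrary block of $k$ further columns can fail: if those $k$ columns are themselves linearly dependent, then every dependence among the resulting $k+1$ columns may vanish at position $i$ and so never certify a repair set for $i$. The correct move is to first secure a spanning set of size $k$ within $[n]\setminus\{i\}$ — which is exactly where nontriviality ($d(\mathcal{C})\ge2$, ruling out weight-one codewords) is used — and only afterwards express $G_i$ through it. Once this is in place, the remaining steps are routine linear algebra and hold over an arbitrary field, so the argument is independent of the binary oval-polynomial setting of the earlier constructions.
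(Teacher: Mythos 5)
Your proof is correct, but note that there is nothing in the paper to compare it against: the statement appears only as a quoted result, cited from \cite{TanP}, and the paper never supplies an argument for it. Your derivation is a legitimate, self-contained one, and its key steps all check out. The dictionary identifying codewords of $\mathcal{C}^\perp$ with linear dependences among the columns of a generator matrix $G$ correctly translates linear repairability of coordinate $i$ into the existence of $\mathbf{h}\in\mathcal{C}^\perp$ with $h_i\neq 0$ and $\wt(\mathbf{h})\le r+1$, so the minimum linear locality is $\max_i w_i-1$ with $w_i$ as you define it; the lower bound $w_i\ge d^\perp$ is immediate; and your upper bound is obtained by the right mechanism: nontriviality gives $d(\mathcal{C})=n-k\ge 2$, which rules out weight-one codewords, hence the columns indexed by $[n]\setminus\{i\}$ still span $\mathbb{F}_q^k$, and expressing $G_i$ in a basis chosen among them produces a dual codeword of weight at most $k+1$ that is nonzero at position $i$. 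Since $\mathcal{C}^\perp$ is an $[n,n-k,k]$ AMDS code, $d^\perp=k$, and the two bounds force the locality into $\{d^\perp-1,d^\perp\}$. You also correctly flag the one delicate point: the spanning set must be secured inside $[n]\setminus\{i\}$ \emph{before} expanding $G_i$, which is exactly where nontriviality is consumed. Finally, your argument actually proves something more general than the lemma: for any linear code with $d\ge 2$, the minimum linear locality lies between $d(\mathcal{C}^\perp)-1$ and $\dim(\mathcal{C})$; the NMDS hypothesis enters only to make these two bounds adjacent, which is a cleaner way to see why the dichotomy is special to (near-)MDS-type codes.
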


\begin{theorem}\label{lem-R1116}
Let $\mathcal{C}$ be an NMDS code, $d^{\perp}=d(\mathcal{C}^{\perp})$. Then the minimum linear locality of $\mathcal{C}^{\perp}$ is $d(\mathcal{C})-1$ if and only if
$$
\mathop{\bigcap}_{\mathcal{S} \in \mathcal{B}_{d^{\perp}}(\mathcal{C}^{\perp})} \mathcal{S}=\emptyset,
$$
where $\mathcal{B}_{d^{\perp}}(\mathcal{C}^{\perp})$ denotes the set of the supports of all codewords with weight $d^{\perp}$ in $\mathcal{C}^{\perp}$.
\end{theorem}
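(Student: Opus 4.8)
The plan is to reduce the statement to Lemma \ref{lem-R1114} applied to the code $\mathcal{C}^\perp$, and then to convert the resulting union condition into the stated intersection condition by exploiting the complementary support structure of NMDS codes supplied by Lemma \ref{lem-N2}. First I would record the two basic facts about duals of NMDS codes. Since $\mathcal{C}$ is NMDS with parameters $[n,k,n-k]$, its dual $\mathcal{C}^\perp$ is again NMDS, with parameters $[n,n-k,k]$; in particular $\mathcal{C}^\perp$ is nontrivial, $(\mathcal{C}^\perp)^\perp=\mathcal{C}$, and $d(\mathcal{C})=n-k$, $d^\perp=d(\mathcal{C}^\perp)=k$, so that $d(\mathcal{C})+d^\perp=n$. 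Applying Lemma \ref{lem-R1114} to the code $\mathcal{C}^\perp$, whose dual is $\mathcal{C}$ with $d\big((\mathcal{C}^\perp)^\perp\big)=d(\mathcal{C})$, the minimum linear locality of $\mathcal{C}^\perp$ equals $d(\mathcal{C})-1$ if and only if
$$
\bigcup_{\mathcal{S}\in\mathcal{B}_{d(\mathcal{C})}(\mathcal{C})}\mathcal{S}=[n].
$$
Thus it suffices to prove that this union condition is equivalent to the intersection condition in the statement.

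The key step is to show that the supports of the minimum weight codewords of $\mathcal{C}^\perp$ are exactly the complements in $[n]$ of the supports of the minimum weight codewords of $\mathcal{C}$, namely
$$
\mathcal{B}_{d^\perp}(\mathcal{C}^\perp)=\{\,[n]\setminus\mathcal{S}:\mathcal{S}\in\mathcal{B}_{d(\mathcal{C})}(\mathcal{C})\,\}.
$$
To establish this, fix a minimum weight codeword $\bc$ of $\mathcal{C}$; by Lemma \ref{lem-N2} there is, up to a scalar, a unique minimum weight codeword $\bc^\perp$ of $\mathcal{C}^\perp$ with $\support(\bc)\cap\support(\bc^\perp)=\emptyset$. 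Since $|\support(\bc)|=d(\mathcal{C})=n-k$ and $|\support(\bc^\perp)|=d^\perp=k$ are disjoint subsets of $[n]$ whose sizes sum to $n$, they partition $[n]$, whence $\support(\bc^\perp)=[n]\setminus\support(\bc)$. This gives $[n]\setminus\mathcal{S}\in\mathcal{B}_{d^\perp}(\mathcal{C}^\perp)$ for every $\mathcal{S}\in\mathcal{B}_{d(\mathcal{C})}(\mathcal{C})$; running the same argument with Lemma \ref{lem-N2} applied to $\mathcal{C}^\perp$ (whose dual is $\mathcal{C}$) yields the reverse inclusion, and since complementation is an involution the two sets coincide.

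Finally I would conclude with a De Morgan computation: from the displayed set equality,
$$
\bigcap_{\mathcal{T}\in\mathcal{B}_{d^\perp}(\mathcal{C}^\perp)}\mathcal{T}=\bigcap_{\mathcal{S}\in\mathcal{B}_{d(\mathcal{C})}(\mathcal{C})}\big([n]\setminus\mathcal{S}\big)=[n]\setminus\bigcup_{\mathcal{S}\in\mathcal{B}_{d(\mathcal{C})}(\mathcal{C})}\mathcal{S},
$$
so the intersection on the left is empty precisely when the union on the right equals $[n]$. Combining this with the reformulation of Lemma \ref{lem-R1114} from the first paragraph gives the claimed equivalence; note that Lemma \ref{lem-R1115}, applied to $\mathcal{C}^\perp$, already guarantees that the locality is either $d(\mathcal{C})-1$ or $d(\mathcal{C})$, so pinning down the smaller value is exactly what is at stake. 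The only genuinely nontrivial point is the complementary support identity, and that is precisely where the NMDS hypothesis is used, through Lemma \ref{lem-N2} together with the relation $d(\mathcal{C})+d^\perp=n$; everything else is bookkeeping.
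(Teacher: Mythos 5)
Your proof is correct, and it is actually more self-contained than the paper's. Both arguments rest on the same two pillars, namely Lemma \ref{lem-R1114} applied to $\mathcal{C}^{\perp}$ (so that locality $d(\mathcal{C})-1$ is equivalent to $\bigcup_{\mathcal{S}\in\mathcal{B}_{d(\mathcal{C})}(\mathcal{C})}\mathcal{S}=[n]$) and the disjoint-support property of Lemma \ref{lem-N2}; the difference is that you push Lemma \ref{lem-N2} one step further. Since $d(\mathcal{C})+d^{\perp}=n$ for an NMDS code, disjointness forces the two supports to be exact complements, giving the identity $\mathcal{B}_{d^{\perp}}(\mathcal{C}^{\perp})=\{[n]\setminus\mathcal{S}:\mathcal{S}\in\mathcal{B}_{d(\mathcal{C})}(\mathcal{C})\}$, after which the stated equivalence is pure De Morgan and both implications drop out simultaneously. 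The paper, by contrast, proves only the necessity direction directly (if the locality is $d(\mathcal{C})-1$ but some $i$ lay in the intersection, then by the disjointness in Lemma \ref{lem-N2} that $i$ would lie outside $\bigcup_{\mathcal{S}\in\mathcal{B}_{d(\mathcal{C})}(\mathcal{C})}\mathcal{S}$, contradicting Lemma \ref{lem-R1114}) and cites \cite{TanP} for sufficiency, so your route buys a complete proof of the ``if'' direction at the cost of one easy counting observation. One shared caveat: Lemma \ref{lem-R1114} is stated for nontrivial codes, and your parenthetical claim that NMDS-ness makes $\mathcal{C}^{\perp}$ nontrivial is asserted rather than argued; this hypothesis is, however, equally implicit in the paper's own statement and proof, so it is not a gap specific to your argument.
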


\begin{proof}
The sufficiency was proved in \cite{TanP}. In the following, we  prove the necessity.

Let $\mathcal{C}$ be an $[n,k,d]$ NMDS code. If $\mathcal{C}^{\perp}$ has the minimum linear locality $d(\mathcal{C})-1$, then $\mathop{\bigcup}_{\mathcal{S} \in \mathcal{B}_{d}(\mathcal{C})} \mathcal{S}=[n]$ by Lemma \ref{lem-R1114}.

Now we prove $\mathop{\bigcap}_{\mathcal{S} \in \mathcal{B}_{d^{\perp}}(\mathcal{C}^{\perp})} \mathcal{S}=\emptyset$ in the following. If there exists an integer $i \in [n]$ such that $\mathop{\bigcap}_{\mathcal{S} \in \mathcal{B}_{d^{\perp}}(\mathcal{C}^{\perp})} \mathcal{S}=i$. By Lemma \ref{lem-N2}, it is easy to deduce that $i$ is not in $\mathop{\bigcup}_{\mathcal{S} \in \mathcal{B}_{d}(\mathcal{C})} \mathcal{S}$, which means that $\mathop{\bigcup}_{\mathcal{S} \in \mathcal{B}_{d}(\mathcal{C})} \mathcal{S} \neq [n]$. This contradicts with $\mathop{\bigcup}_{\mathcal{S} \in \mathcal{B}_{d}(\mathcal{C})} \mathcal{S}=[n]$. Then the desired conclusion follows.
\end{proof}

\begin{theorem}\label{thm-R1222}
The NMDS code $\mathcal{C}$ in Theorem \ref{thm-N4235} is a
$$
(q+4,3,q+1,q;2)-LRC
$$
and $\mathcal{C}^{\perp}$ is a
$$
(q+4,q+1,3,q;q)-LRC.
$$
In addition, $\mathcal{C}$ and $\mathcal{C}^{\perp}$ are both d-optimal and k-optimal.
\end{theorem}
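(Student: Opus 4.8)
The plan is to read the localities of $\mathcal{C}$ and $\mathcal{C}^\perp$ off the explicit catalogue of weight-$3$ codewords of $\mathcal{C}^\perp$ already assembled in the proof of Theorem \ref{thm-N4235}, and then to verify the two optimality bounds by direct substitution.

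For the locality of $\mathcal{C}$: since $\mathcal{C}$ is NMDS with $d(\mathcal{C}^\perp)=3$, Lemma \ref{lem-R1115} restricts its minimum linear locality to $\{2,3\}$, and by Lemma \ref{lem-R1114} the locality is the smaller value $d(\mathcal{C}^\perp)-1=2$ exactly when $\bigcup_{\mathcal{S}\in\mathcal{B}_{3}(\mathcal{C}^\perp)}\mathcal{S}=[n]$ with $n=q+4$. I would therefore collect the supports of the rank-deficient column triples found in Cases 1.4, 1.5, 1.7, 1.9 and 1.13 of the proof of Theorem \ref{thm-N4235} and track which coordinates they meet. The key point is that Case 1.5 (the pairs with $x+y=1$) already runs through every element of $\gf_q$, hence covers all $q$ coordinates of the first block together with coordinate $q+4$; the three leftover coordinates $q+1$, $q+2$, $q+3$ are then supplied by Cases 1.7, 1.9 and 1.13. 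This shows the union is the whole coordinate set and forces the locality of $\mathcal{C}$ to be $2$.

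For the locality of $\mathcal{C}^\perp$: it is again NMDS with dual $\mathcal{C}$ of distance $q+1$, so Lemma \ref{lem-R1115} puts its locality in $\{q,q+1\}$, and Theorem \ref{lem-R1116} says it equals $d(\mathcal{C})-1=q$ iff $\bigcap_{\mathcal{S}\in\mathcal{B}_{3}(\mathcal{C}^\perp)}\mathcal{S}=\emptyset$. Once the support list from the previous step is available this is immediate: it suffices to display two weight-$3$ supports that are disjoint, for instance $\{q+1,q+2,q+4\}$ from Case 1.13 and a support $\{i,j,q+3\}$ from Case 1.4 with $i,j$ lying in the first block $\{2,\dots,q-1\}$ (such a pair exists, since Case 1.4 produces $(q-2)/2>0$ of them). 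Their intersection is empty, so the global intersection is empty and the locality of $\mathcal{C}^\perp$ is $q$.

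It remains to check optimality, which is purely arithmetic. For the Singleton-like bound (Lemma \ref{lem-R1112}) I would substitute $(n,k,r)=(q+4,3,2)$ to get $n-k-\lceil k/r\rceil+2=q+1=d(\mathcal{C})$, and $(n,k,r)=(q+4,q+1,q)$ to get $3=d(\mathcal{C}^\perp)$, so both codes meet the bound and are $d$-optimal. For the Cadambe--Mazumdar bound (Lemma \ref{lem-R1113}) I would evaluate the right-hand side at $t=1$: for $\mathcal{C}$ this is $2+k_{opt}^{(q)}(q+1,q+1)=2+1=3$, because a length-$(q+1)$ code of distance $q+1$ has dimension $1$; and for $\mathcal{C}^\perp$ it is $q+k_{opt}^{(q)}(3,3)=q+1$, since $k_{opt}^{(q)}(3,3)=1$ by the Singleton bound. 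As these values already equal the true dimensions $3$ and $q+1$, the minimum over $t$ is attained there and both codes are $k$-optimal. The only genuinely substantive step is the covering verification for $\mathcal{C}$; the disjointness argument for $\mathcal{C}^\perp$ and all of the bound substitutions are routine once the minimum-weight supports of $\mathcal{C}^\perp$ have been organized by the coordinates they occupy.
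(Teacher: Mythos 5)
Your proposal is correct and follows essentially the same route as the paper: establish the locality of $\mathcal{C}$ via Lemma \ref{lem-R1114} from the covering property $\bigcup_{\mathcal{S}\in\mathcal{B}_{3}(\mathcal{C}^{\perp})}\mathcal{S}=[q+4]$, establish the locality of $\mathcal{C}^{\perp}$ via Theorem \ref{lem-R1116} from the empty intersection, and then verify both bounds by the same substitutions (including $t=1$ in the Cadambe--Mazumdar bound). The only difference is that you spell out explicitly which cases of the proof of Theorem \ref{thm-N4235} supply the covering and the disjoint supports, details the paper merely asserts.
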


\begin{proof}
From the proof of Theorem \ref{thm-N4235}, the support sets of all codewords with weight $3$ in $\mathcal{C}^{\perp}$ are traversed $[q+4]$, which means that
$$
\mathop{\bigcup}_{\mathcal{S} \in \mathcal{B}_{3}(\mathcal{C}^{\perp})} \mathcal{S}=[q+4].
$$
Then by Lemma \ref{lem-R1114}, we have the minimum linear locality of $\mathcal{C}$ is $d(\mathcal{C}^{\perp})-1=2$. Besides, it can be found that the intersection of the support sets of all codewords with weight $3$ in $\mathcal{C}^{\perp}$ is an empty set, which means that
$$
\mathop{\bigcap}_{\mathcal{S} \in \mathcal{B}_{3}(\mathcal{C}^{\perp})} \mathcal{S}=\emptyset
$$
Then by Theorem \ref{lem-R1116}, we have the minimum linear locality of $\mathcal{C}^{\perp}$ is $d(\mathcal{C})-1=q$.
We then prove $\mathcal{C}$ is an optimal LRC. By Lemma \ref{lem-R1112}, putting the parameters of the $(q+4,3,q+1,q;2)$-LRC into the right-hand side of the Singleton-like bound in (\ref{eqn-slbound}), we have
$$
n-k- \left \lceil \frac{k}{r} \right \rceil +2
=q+4-3- \left \lceil \frac{3}{2} \right \rceil +2=q+1.
$$
Hence $\mathcal{C}$ is a $d$-optimal LRC.
By Lemma \ref{lem-R1113}, putting $t=1$ and the parameters of the $(q+4,3,q+1,q;2)$-LRC into the right-hand side of the Cadambe-Mazumdar bound in (\ref{eqn-cmbound}), we have
$$
k \leq r+k_{opt}^{(q)}(n-(r+1),d)
=2+k_{opt}^{(q)}(q+1,q+1) = 3,
$$
where the last equality holds as $k_{opt}^{(q)}(q+1,q+1) = 1$ by the classical Singleton bound. Thus, $\mathcal{C}$ is a $k$-optimal LRC. Then we have proved that $\mathcal{C}$ is both $d$-optimal and $k$-optimal. Similarly, we can prove that $\mathcal{C}^{\perp}$ is both $d$-optimal and $k$-optimal.
\end{proof}

\begin{theorem}
The NMDS code $\mathcal{C}_{1}$ in Theorem \ref{thm-N11} is a
$$
(q+4,3,q+1,q;2)-LRC
$$
and $\mathcal{C}_{1}^{\perp}$ is a
$$
(q+4,q+1,3,q;q)-LRC.
$$
In addition, $\mathcal{C}_{1}$ and $\mathcal{C}_{1}^{\perp}$ are both $d$-optimal and $k$-optimal.
\end{theorem}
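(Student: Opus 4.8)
The plan is to follow the same route as the proof of Theorem~\ref{thm-R1222}, since $\mathcal{C}_1$ shares the length $q+4$, dimension $3$, minimum distance $q+1$, and dual distance $d(\mathcal{C}_1^{\perp})=3$ with the code $\mathcal{C}$ of Theorem~\ref{thm-N4235}. First I would extract, from the case analysis behind Theorem~\ref{thm-N11} (which is the analogue of Cases~1.1--1.15 in the proof of Theorem~\ref{thm-N4235}), the complete list of coordinate triples supporting a weight-$3$ codeword of $\mathcal{C}_1^{\perp}$. From this list I would verify the covering identity
$$
\mathop{\bigcup}_{\mathcal{S} \in \mathcal{B}_{3}(\mathcal{C}_1^{\perp})} \mathcal{S}=[q+4],
$$
so that Lemma~\ref{lem-R1114} yields the minimum linear locality of $\mathcal{C}_1$ as $d(\mathcal{C}_1^{\perp})-1=2$, and the emptiness identity
$$
\mathop{\bigcap}_{\mathcal{S} \in \mathcal{B}_{3}(\mathcal{C}_1^{\perp})} \mathcal{S}=\emptyset,
$$
so that Theorem~\ref{lem-R1116} yields the minimum linear locality of $\mathcal{C}_1^{\perp}$ as $d(\mathcal{C}_1)-1=q$. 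This fixes the two locality parameters $r=2$ and $r=q$ appearing in the statement.

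For optimality the computations are numerically identical to those in Theorem~\ref{thm-R1222}, since only $n$, $k$, $d$, and $r$ enter the bounds. For the $(q+4,3,q+1,q;2)$-LRC $\mathcal{C}_1$, the Singleton-like bound of Lemma~\ref{lem-R1112} evaluates to $n-k-\lceil k/r\rceil+2=q+4-3-2+2=q+1=d$, giving $d$-optimality, while the Cadambe--Mazumdar bound of Lemma~\ref{lem-R1113} with $t=1$ gives $k\le r+k_{opt}^{(q)}(q+1,q+1)=2+1=3=k$, giving $k$-optimality. For the dual $(q+4,q+1,3,q;q)$-LRC $\mathcal{C}_1^{\perp}$ I would repeat the two substitutions with $(n,k,d,r)=(q+4,q+1,3,q)$: the Singleton-like bound returns $(q+4)-(q+1)-\lceil (q+1)/q\rceil+2=3$, and the Cadambe--Mazumdar bound with $t=1$ returns $q+k_{opt}^{(q)}(3,3)=q+1$. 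Both match the actual parameters, so $\mathcal{C}_1^{\perp}$ is $d$-optimal and $k$-optimal as well.

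The main obstacle is the first step: confirming the covering and empty-intersection properties of the weight-$3$ supports of $\mathcal{C}_1^{\perp}$. Although $G_{1,1}$ and $G_1$ share their first $q-1$ Vandermonde-type columns, they differ among the appended columns, so one cannot simply transcribe the coordinate-by-coordinate counts obtained for $\mathcal{C}^{\perp}$ in Theorem~\ref{thm-N4235}; instead one must rerun the determinant computations for $G_{1,1}$ and record, for each family of weight-$3$ dual codewords, exactly which of the $q+4$ positions can occur. Once it is checked that every position appears in at least one such support while no single position lies in all of them, the two localities follow from Lemmas~\ref{lem-R1114} and Theorem~\ref{lem-R1116}, and the optimality computations above close the proof.
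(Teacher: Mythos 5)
Your proposal is correct and follows essentially the same route as the paper, which simply states that the proof is similar to that of Theorem~\ref{thm-R1222} and omits it: establish the covering identity and empty-intersection identity for the weight-$3$ supports of $\mathcal{C}_1^{\perp}$ (via the case analysis for $G_{1,1}$), apply Lemma~\ref{lem-R1114} and Theorem~\ref{lem-R1116} to get localities $2$ and $q$, and then verify both bounds exactly as you did. Your bound computations match the paper's in Theorem~\ref{thm-R1222}, and your caveat that the determinant case analysis must be rerun for $G_{1,1}$ rather than transcribed from $G_1$ is precisely the content the paper leaves implicit.
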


\begin{proof}
The proof of this theorem is similar  to that of Theorem \ref{thm-R1222} and omitted.
\end{proof}

\begin{theorem}
The NMDS code $\mathcal{D}$ in Theorem \ref{thm-N4234} is a
$$
(q+3,3,q,q;2)-LRC
$$
and $\mathcal{D}^{\perp}$ is a
$$
(q+3,q,3,q;q-1)-LRC.
$$
In addition, $\mathcal{D}$ and $\mathcal{D}^{\perp}$ are both $d$-optimal and $k$-optimal.
\end{theorem}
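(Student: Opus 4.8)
The plan is to follow the template of Theorem \ref{thm-R1222}, since $\mathcal{D}$ and $\mathcal{D}^{\perp}$ form an NMDS pair with $d(\mathcal{D})=q$ and $d(\mathcal{D}^{\perp})=3$, exactly the structural situation handled there. The argument splits into two halves: first determine the minimum linear locality of each code from the geometry of the weight-$3$ codewords of $\mathcal{D}^{\perp}$, and then check optimality against the Singleton-like bound (Lemma \ref{lem-R1112}) and the Cadambe-Mazumdar bound (Lemma \ref{lem-R1113}).

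For the locality of $\mathcal{D}$ I would invoke Lemma \ref{lem-R1114}. Since $d(\mathcal{D}^{\perp})=3$, by Lemma \ref{lem-R1115} the minimum linear locality of $\mathcal{D}$ is either $2$ or $3$, and Lemma \ref{lem-R1114} shows it equals $d(\mathcal{D}^{\perp})-1=2$ precisely when $\bigcup_{\mathcal{S}\in\mathcal{B}_{3}(\mathcal{D}^{\perp})}\mathcal{S}=[q+3]$. This covering claim is read off the case analysis in the proof of Theorem \ref{thm-N4234}: the weight-$3$ supports produced in Cases 1.3, 1.4 and 1.5 already sweep across the first $q$ coordinates and the last three coordinates, so every index of $[q+3]$ lies in at least one weight-$3$ support. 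For the locality of $\mathcal{D}^{\perp}$ I would apply Theorem \ref{lem-R1116}, which gives locality $d(\mathcal{D})-1=q-1$ exactly when $\bigcap_{\mathcal{S}\in\mathcal{B}_{3}(\mathcal{D}^{\perp})}\mathcal{S}=\emptyset$. The emptiness again follows by inspecting the same cases: no single coordinate is common to all listed weight-$3$ supports, since, for instance, the supports from Case 1.5 (concentrated on the last three positions) are disjoint from those of Case 1.4 (hitting the first $q$ positions).

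For optimality the computations are routine substitutions. Viewing $\mathcal{D}$ as a $(q+3,3,q,q;2)$-LRC, the right-hand side of (\ref{eqn-slbound}) equals $(q+3)-3-\lceil 3/2\rceil+2=q=d(\mathcal{D})$, so $\mathcal{D}$ is $d$-optimal; and (\ref{eqn-cmbound}) with $t=1$ gives $k\le 2+k_{opt}^{(q)}(q,q)=2+1=3=\dim(\mathcal{D})$, so $\mathcal{D}$ is $k$-optimal. Viewing $\mathcal{D}^{\perp}$ as a $(q+3,q,3,q;q-1)$-LRC, (\ref{eqn-slbound}) yields $(q+3)-q-\lceil q/(q-1)\rceil+2=3=d(\mathcal{D}^{\perp})$ (using $\lceil q/(q-1)\rceil=2$), and (\ref{eqn-cmbound}) with $t=1$ gives $k\le (q-1)+k_{opt}^{(q)}(3,3)=(q-1)+1=q=\dim(\mathcal{D}^{\perp})$; the two classical Singleton evaluations $k_{opt}^{(q)}(q,q)=1$ and $k_{opt}^{(q)}(3,3)=1$ justify the equalities. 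Hence both codes are $d$-optimal and $k$-optimal.

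The genuinely new content, as opposed to plugging parameters into the two bounds, is the verification of the covering condition $\bigcup \mathcal{S}=[q+3]$ and the empty-intersection condition $\bigcap \mathcal{S}=\emptyset$; this is the main obstacle, though it is bookkeeping rather than a deep step. The care required is to confirm that the weight-$3$ supports harvested from Cases 1.3--1.5 of Theorem \ref{thm-N4234} collectively touch every coordinate while sharing no common coordinate. Once these two set-theoretic facts are established, Lemmas \ref{lem-R1114}, \ref{lem-R1115} and Theorem \ref{lem-R1116} fix both localities, and the optimality is immediate from the bound computations above.
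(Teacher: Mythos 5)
Your proposal is correct and follows essentially the same route as the paper, which likewise reads $\bigcup_{\mathcal{S}\in\mathcal{B}_{3}(\mathcal{D}^{\perp})}\mathcal{S}=[q+3]$ and $\bigcap_{\mathcal{S}\in\mathcal{B}_{3}(\mathcal{D}^{\perp})}\mathcal{S}=\emptyset$ off the case analysis in the proof of Theorem \ref{thm-N4234} and then repeats the locality arguments (Lemma \ref{lem-R1114}, Theorem \ref{lem-R1116}) and bound substitutions of Theorem \ref{thm-R1222}, with the same numerical checks you give. The one slip is your ``for instance'': the Case 1.4 support arising from the pair $\{0,1\}$ is $\{1,q,q+3\}$, which meets the Case 1.5 support $\{q,q+1,q+2\}$ in the $q$-th coordinate, so to exhibit disjoint supports you should instead take a Case 1.4 pair $\{x,1+x\}$ with $x\notin\{0,1\}$ (such pairs exist since $q=2^m\geq 8$), after which the empty-intersection claim stands.
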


\begin{proof}
By the proof of Theorem \ref{thm-N4234}, we can deduce that
$$
\mathop{\bigcup}_{\mathcal{S} \in \mathcal{B}_{3}(\mathcal{D}^{\perp})} \mathcal{S}=[q+3].
$$
and
$$
\mathop{\bigcap}_{\mathcal{S} \in \mathcal{B}_{3}(\mathcal{D}^{\perp})} \mathcal{S}=\emptyset.
$$
The rest of the proof is similar to that of Theorem \ref{thm-R1222} and omitted.
\end{proof}

\begin{theorem}
The NMDS code $\mathcal{D}_{1}$ in Theorem \ref{thm-N12} is a
$$
(q+3,3,q,q;2)-LRC
$$
and $\mathcal{D}_{1}^{\perp}$ is a
$$
(q+3,q,3,q;q)-LRC.
$$
In addition, $\mathcal{D}_{1}$ is both $d$-optimal and $k$-optimal and $\mathcal{D}_{1}^{\perp}$ is $k$-optimal and almost $d$-optimal.
\end{theorem}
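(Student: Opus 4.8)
The plan is to pin down the two localities from the weight-$3$ codewords of $\mathcal{D}_1^\perp$ and then to read off optimality by substituting into the two bounds, exactly as in the proof of Theorem \ref{thm-R1222}. A weight-$3$ codeword of $\mathcal{D}_1^\perp$ corresponds precisely to a linearly dependent triple of columns of $G_{2,1}$, since any two columns are $\gf_q$-linearly independent. So first I would enumerate these triples. Writing the first $q$ columns as $P_\alpha=(1,\alpha,\alpha^2)^{\top}$ for $\alpha\in\gf_q$ and the three appended columns as $(0,1,0)^{\top}$, $(0,0,1)^{\top}$, $(1,1,0)^{\top}$, a short determinant computation in each configuration shows the dependent triples are exactly: $\{P_x,P_y,(1,1,0)^{\top}\}$ whenever $(x+1)(y+1)=1$ with $x\neq y$; the single triple $\{P_0,(0,1,0)^{\top},(1,1,0)^{\top}\}$; and the single triple $\{P_1,(0,0,1)^{\top},(1,1,0)^{\top}\}$. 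Counting $q-1$ codewords per support recovers $A_3^\perp=(q-1)(q+2)/2$, which agrees with Theorem \ref{thm-N12} via Lemma \ref{lem-N2} and serves as a consistency check on the enumeration.

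For the locality of $\mathcal{D}_1$, I would show $\bigcup_{\mathcal{S}\in\mathcal{B}_3(\mathcal{D}_1^\perp)}\mathcal{S}=[q+3]$. The involution $x\mapsto(x+1)^{-1}+1$ maps $\gf_q\setminus\{0,1\}$ onto itself, so the first family of supports already covers every coordinate attached to an $\alpha\in\gf_q\setminus\{0,1\}$; the five remaining coordinates --- those of $P_0$, $P_1$, the columns $(0,1,0)^{\top}$, $(0,0,1)^{\top}$, and the last column $(1,1,0)^{\top}$ --- are supplied by the two exceptional triples. Hence the union is $[q+3]$, and Lemma \ref{lem-R1114} gives minimum linear locality $d(\mathcal{D}_1^\perp)-1=2$, so $\mathcal{D}_1$ is a $(q+3,3,q,q;2)$-LRC.

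The crux of the argument --- and the point where $\mathcal{D}_1^\perp$ departs from the earlier duals --- is the locality of $\mathcal{D}_1^\perp$. Every dependent triple listed above contains the last column $(1,1,0)^{\top}$, so all weight-$3$ supports of $\mathcal{D}_1^\perp$ share the last coordinate and $\bigcap_{\mathcal{S}\in\mathcal{B}_3(\mathcal{D}_1^\perp)}\mathcal{S}\neq\emptyset$. By Theorem \ref{lem-R1116} the minimum linear locality of $\mathcal{D}_1^\perp$ is therefore \emph{not} $d(\mathcal{D}_1)-1=q-1$, and since $\mathcal{D}_1^\perp$ is NMDS, Lemma \ref{lem-R1115} forces it to equal $d(\mathcal{D}_1)=q$; thus $\mathcal{D}_1^\perp$ is a $(q+3,q,3,q;q)$-LRC. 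I expect the enumeration behind this common coordinate to be the main obstacle, since it must be exhaustive: overlooking even one dependent triple that avoids the last column would destroy the nonempty-intersection claim and collapse the locality back to $q-1$.

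Finally, optimality follows by substitution. For $\mathcal{D}_1$, the Singleton-like bound of Lemma \ref{lem-R1112} gives $n-k-\lceil k/r\rceil+2=(q+3)-3-\lceil 3/2\rceil+2=q=d$, so $\mathcal{D}_1$ is $d$-optimal; and the Cadambe--Mazumdar bound of Lemma \ref{lem-R1113} with $t=1$ gives $k\le r+k_{opt}^{(q)}(n-(r+1),d)=2+k_{opt}^{(q)}(q,q)=2+1=3$, so $\mathcal{D}_1$ is $k$-optimal, using $k_{opt}^{(q)}(q,q)=1$ by the classical Singleton bound. For $\mathcal{D}_1^\perp$, the Cadambe--Mazumdar bound with $t=1$ gives $k\le q+k_{opt}^{(q)}(2,3)=q+0=q$ (there is no nonzero code of length $2$ and distance $3$), so $\mathcal{D}_1^\perp$ is $k$-optimal; whereas the Singleton-like bound gives $d\le(q+3)-q-\lceil q/q\rceil+2=4$, one more than the actual distance $3$, so $\mathcal{D}_1^\perp$ is only almost $d$-optimal. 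This last gap is exactly what separates the present statement from Theorem \ref{thm-R1222}, and it traces directly back to the nonempty intersection established above.
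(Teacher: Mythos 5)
Your proposal is correct and follows essentially the same route as the paper: establish $\bigcup_{\mathcal{S}\in\mathcal{B}_3(\mathcal{D}_1^\perp)}\mathcal{S}=[q+3]$ and $\bigcap_{\mathcal{S}\in\mathcal{B}_3(\mathcal{D}_1^\perp)}\mathcal{S}\neq\emptyset$, invoke Lemma \ref{lem-R1114}, Lemma \ref{lem-R1115} and Theorem \ref{lem-R1116} to get localities $2$ and $q$, then substitute into the Singleton-like and Cadambe--Mazumdar bounds. The only difference is that you work out explicitly (and correctly) the enumeration of dependent column triples of $G_{2,1}$ that the paper leaves as ``similar to the proof of Theorem \ref{thm-N4234}.''
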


\begin{proof}
Similarly to the proof of Theorem \ref{thm-N4234}, we can prove that the intersection of the support sets of all codewords with weight $3$ in $\mathcal{D}_{1}^{\perp}$ is not an empty set. Besides, $\mathop{\bigcup}_{\mathcal{S} \in \mathcal{B}_{3}(\mathcal{D}_1^{\perp})} \mathcal{S}=[q+3].$ Combining  Lemma \ref{lem-R1115} and Theorem \ref{lem-R1116}, we have the minimum linear locality of $\mathcal{D}_{1}^{\perp}$ is $d(\mathcal{D}_{1})=q$, and the minimum linear locality of $\mathcal{D}_{1}$ is $d(\mathcal{D}_{1}^{\perp})-1=2$. The rest of the proof is similar to that of Theorem \ref{thm-R1222}.
\end{proof}

\begin{theorem}
The NMDS code $\mathcal{D}_{2}$ in Theorem \ref{thm-N13} is a
$$
(q+3,3,q,q;2)-LRC
$$
and $\mathcal{D}_{2}^{\perp}$ is a
$$
(q+3,q,3,q;q-1)-LRC.
$$
In addition, $\mathcal{D}_{2}$ and $\mathcal{D}_{2}^{\perp}$ are both $d$-optimal and $k$-optimal.
\end{theorem}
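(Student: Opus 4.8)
The plan is to mirror the argument already spelled out in the proof of Theorem \ref{thm-R1222}, substituting the parameters of $\mathcal{D}_2$, namely $[q+3,3,q]$, and of its dual $\mathcal{D}_2^\perp$, namely $[q+3,q,3]$. The whole theorem reduces to two set-theoretic facts about the supports of the minimum-weight codewords of $\mathcal{D}_2^\perp$, followed by two routine substitutions into the known bounds.

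First I would revisit the case analysis of weight-$3$ codewords of $\mathcal{D}_2^\perp$ carried out (in parallel with Theorem \ref{thm-N4234}) inside the omitted proof of Theorem \ref{thm-N13}. From the list of $3\times 3$ singular submatrices of $G_{2,2}$ produced there, I would read off the collection $\mathcal{B}_3(\mathcal{D}_2^\perp)$ of supports and verify the two conditions that control the locality: that these supports cover every coordinate, $\bigcup_{\mathcal{S}\in\mathcal{B}_3(\mathcal{D}_2^\perp)}\mathcal{S}=[q+3]$, and that they have no coordinate in common, $\bigcap_{\mathcal{S}\in\mathcal{B}_3(\mathcal{D}_2^\perp)}\mathcal{S}=\emptyset$. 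Granting these, Lemma \ref{lem-R1114} gives that the minimum linear locality of $\mathcal{D}_2$ is $d(\mathcal{D}_2^\perp)-1=2$, so $\mathcal{D}_2$ is a $(q+3,3,q,q;2)$-LRC, and Theorem \ref{lem-R1116} gives that the minimum linear locality of $\mathcal{D}_2^\perp$ is $d(\mathcal{D}_2)-1=q-1$, so $\mathcal{D}_2^\perp$ is a $(q+3,q,3,q;q-1)$-LRC.

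For optimality I would plug each parameter set into the two bounds. For $\mathcal{D}_2$, the Singleton-like bound (\ref{eqn-slbound}) gives $n-k-\lceil k/r\rceil+2=(q+3)-3-\lceil 3/2\rceil+2=q$, matching $d=q$, so $\mathcal{D}_2$ is $d$-optimal; the Cadambe-Mazumdar bound (\ref{eqn-cmbound}) with $t=1$ gives $k\le 2+k_{opt}^{(q)}(q,q)=2+1=3$, using $k_{opt}^{(q)}(q,q)=1$ from the classical Singleton bound, so $\mathcal{D}_2$ is $k$-optimal. For $\mathcal{D}_2^\perp$, the Singleton-like bound gives $(q+3)-q-\lceil q/(q-1)\rceil+2=3-2+2=3$ (here $\lceil q/(q-1)\rceil=2$ since $q\ge 8$ under the hypothesis $m\ge 3$ odd), matching $d=3$, and the Cadambe-Mazumdar bound with $t=1$ gives $k\le (q-1)+k_{opt}^{(q)}(3,3)=(q-1)+1=q$; hence $\mathcal{D}_2^\perp$ is both $d$-optimal and $k$-optimal.

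The main obstacle is the first step. Unlike Theorem \ref{thm-R1222}, whose supporting weight-$3$ computation is displayed in full in Theorem \ref{thm-N4235}, here the support structure of $\mathcal{D}_2^\perp$ is not written out explicitly because the proof of Theorem \ref{thm-N13} is omitted. One must therefore re-derive which coordinate triples index singular $3\times 3$ submatrices of $G_{2,2}$ and confirm both that their union exhausts $[q+3]$ and that their intersection is empty; the presence of the three special columns $(1,1,0)^{\mathsf T},(1,0,1)^{\mathsf T},(0,1,1)^{\mathsf T}$ together with the full Vandermonde block makes this a finite but case-heavy check. Once it is in hand, the remainder is an immediate substitution identical in spirit to Theorem \ref{thm-R1222}, so that part of the proof could simply be declared similar and omitted.
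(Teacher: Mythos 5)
Your proposal is correct and takes essentially the same route as the paper: the paper's own (terse) proof likewise reduces everything to the two support conditions $\bigcup_{\mathcal{S} \in \mathcal{B}_{3}(\mathcal{D}_2^{\perp})} \mathcal{S}=[q+3]$ and $\bigcap_{\mathcal{S} \in \mathcal{B}_{3}(\mathcal{D}_2^{\perp})} \mathcal{S}=\emptyset$, asserted via the case analysis paralleling Theorem \ref{thm-N4234}, and then defers the locality deductions (Lemma \ref{lem-R1114} and Theorem \ref{lem-R1116}) and the Singleton-like and Cadambe--Mazumdar substitutions to the template of Theorem \ref{thm-R1222}. Your locality values and bound computations agree with the paper's in every detail.
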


\begin{proof}
Similarly to the proof of Theorem \ref{thm-N4234}, we can prove that
$$
\mathop{\bigcup}_{\mathcal{S} \in \mathcal{B}_{3}(\mathcal{D}_2^{\perp})} \mathcal{S}=[q+3].
$$
and
$$
\mathop{\bigcap}_{\mathcal{S} \in \mathcal{B}_{3}(\mathcal{D}_2^{\perp})} \mathcal{S}=\emptyset.
$$
The remainder  proof of this theorem is similar to that of Theorem \ref{thm-R1222}.
\end{proof}

\begin{theorem}
The NMDS code $\mathcal{E}$ in Theorem \ref{thm-N1235} is a
$$
(q+1,3,q-2,q;2)-LRC
$$
and $\mathcal{E}^{\perp}$ is a
$$
(q+1,q-2,3,q;q-3)-LRC.
$$
In addition, $\mathcal{E}$, $\mathcal{E}^{\perp}$ is both $d$-optimal and $k$-optimal.
\end{theorem}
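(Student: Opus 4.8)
The plan is to mimic the proof of Theorem \ref{thm-R1222}, feeding the explicit list of weight-$3$ codewords of $\mathcal{E}^{\perp}$ produced in the proof of Theorem \ref{thm-N1235} into the locality criteria. Everything hinges on two combinatorial quantities attached to $\mathcal{B}_{3}(\mathcal{E}^{\perp})$: the union $\bigcup_{\mathcal{S} \in \mathcal{B}_{3}(\mathcal{E}^{\perp})}\mathcal{S}$, which fixes the locality of $\mathcal{E}$ via Lemma \ref{lem-R1114}, and the intersection $\bigcap_{\mathcal{S} \in \mathcal{B}_{3}(\mathcal{E}^{\perp})}\mathcal{S}$, which fixes the locality of $\mathcal{E}^{\perp}$ via Theorem \ref{lem-R1116}. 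Once the localities are known, $d$-optimality and $k$-optimality reduce to substitutions into the Singleton-like bound (\ref{eqn-slbound}) and the Cadambe-Mazumdar bound (\ref{eqn-cmbound}).

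First I would settle $\mathcal{E}$. By Cases~1.1 and 1.2 of Theorem \ref{thm-N1235}, every weight-$3$ codeword of $\mathcal{E}^{\perp}$ is supported on a set $\{i,j,q+1\}$, where the $i$-th and $j$-th columns of $G_3$ correspond to a pair $x,y\in\gf_q$ with $x+y=1$. Since the pairs $\{x,x+1\}$ partition $\gf_q$, every one of the first $q$ coordinates appears in some such support, and the $(q+1)$-th coordinate appears in all of them, so $\bigcup_{\mathcal{S} \in \mathcal{B}_{3}(\mathcal{E}^{\perp})}\mathcal{S}=[q+1]$. Lemma \ref{lem-R1114} then gives $\mathcal{E}$ the minimum linear locality $d(\mathcal{E}^{\perp})-1=2$, so $\mathcal{E}$ is a $(q+1,3,q-2,q;2)$-LRC. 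Substituting $(n,k,d,r)=(q+1,3,q-2,2)$ into (\ref{eqn-slbound}) yields $n-k-\lceil k/r\rceil+2=q-2=d$, so $\mathcal{E}$ is $d$-optimal, and taking $t=1$ in (\ref{eqn-cmbound}) gives $k\le 2+k_{opt}^{(q)}(q-2,q-2)=2+1=3$, so $\mathcal{E}$ is $k$-optimal.

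The hard part, and where I expect the argument to break, is the dual side. Locality $q-3=d(\mathcal{E})-1$ for $\mathcal{E}^{\perp}$ would require $\bigcap_{\mathcal{S} \in \mathcal{B}_{3}(\mathcal{E}^{\perp})}\mathcal{S}=\emptyset$ by Theorem \ref{lem-R1116}. But the description above shows that \emph{every} weight-$3$ codeword of $\mathcal{E}^{\perp}$ contains the $(q+1)$-th coordinate, whence $\bigcap_{\mathcal{S} \in \mathcal{B}_{3}(\mathcal{E}^{\perp})}\mathcal{S}=\{q+1\}\neq\emptyset$. Dually, by Lemma \ref{lem-N2} the minimum-weight codewords of $\mathcal{E}$ are the complements of these supports and so never meet the $(q+1)$-th coordinate; indeed, any $\bc=a\bg_1+b\bg_2+c\bg_3\in\mathcal{E}$ with nonzero last coordinate $b+c$ forces $(b,c)\neq(0,0)$, and a nonzero polynomial $cX^2+bX+a$ of degree at most two has at most two roots in $\gf_q$, so $\bc$ has weight at least $q-1$. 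Hence the $(q+1)$-th coordinate of $\mathcal{E}^{\perp}$ cannot be repaired from fewer than $q-2$ others, and by Lemma \ref{lem-R1115} the minimum linear locality of $\mathcal{E}^{\perp}$ equals $q-2$, not $q-3$.

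This discrepancy is caused by the single distinguished column $(0,1,1)^{\top}$ of $G_3$ lying in the support of every minimum-weight dual codeword, and it changes the conclusion: $\mathcal{E}^{\perp}$ is a $(q+1,q-2,3,q;q-2)$-LRC. I would then check optimality with the corrected $r=q-2$: the Cadambe-Mazumdar bound at $t=1$ gives $k\le(q-2)+k_{opt}^{(q)}(2,3)=(q-2)+0=q-2$, so $\mathcal{E}^{\perp}$ remains $k$-optimal, whereas (\ref{eqn-slbound}) now reads $n-k-\lceil k/r\rceil+2=3-1+2=4>3=d$, so $\mathcal{E}^{\perp}$ is \emph{not} $d$-optimal. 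Thus the claims for $\mathcal{E}$ go through verbatim, but the assertion that $\mathcal{E}^{\perp}$ is a $d$-optimal $(q+1,q-2,3,q;q-3)$-LRC is the genuine sticking point and, as the intersection computation shows, does not hold as stated.
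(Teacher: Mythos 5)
Your proposal is sound, and the point where you refuse to follow the statement is a genuine error in the paper, not a gap in your argument. For $\mathcal{E}$ itself you do exactly what the paper does: by Cases 1.1 and 1.2 of Theorem \ref{thm-N1235}, the weight-$3$ supports of $\mathcal{E}^{\perp}$ are the sets consisting of the positions of $x$ and $x+1$ together with the last position, and since the pairs $\{x,x+1\}$ partition $\gf_q$ these supports cover $[q+1]$; Lemma \ref{lem-R1114} then gives locality $2$, and your substitutions into (\ref{eqn-slbound}) and (\ref{eqn-cmbound}) coincide with the paper's computation in Theorem \ref{thm-R1222}. This half agrees with the paper verbatim.

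On the dual side the paper's proof asserts $\bigcap_{\mathcal{S}\in\mathcal{B}_{3}(\mathcal{E}^{\perp})}\mathcal{S}=\emptyset$ and concludes via Theorem \ref{lem-R1116} that $\mathcal{E}^{\perp}$ has locality $q-3$; as you observe, this is inconsistent with the paper's own proof of Theorem \ref{thm-N1235}. Every weight-$3$ codeword of $\mathcal{E}^{\perp}$ arises from the dependency $(1,x,x^2)^{T}+(1,x+1,(x+1)^2)^{T}+(0,1,1)^{T}=0$, so every support contains the last coordinate and the intersection is that singleton (index $q$ in the paper's $[q+1]=\{0,1,\ldots,q\}$ convention), not $\emptyset$. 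Your independent check is also correct: a codeword $a\bg_1+b\bg_2+c\bg_3$ of $\mathcal{E}$ with $b+c\neq 0$ has at most two zeros among its first $q$ coordinates, hence weight at least $q-1$, so no minimum-weight codeword of $\mathcal{E}$ covers the last position; by Lemma \ref{lem-N2} and Lemma \ref{lem-R1114} this rules out locality $q-3$ for $\mathcal{E}^{\perp}$ directly. Hence by Lemma \ref{lem-R1115} the minimum linear locality of $\mathcal{E}^{\perp}$ is $q-2$: it is a $(q+1,q-2,3,q;q-2)$-LRC, $k$-optimal (since $k_{opt}^{(q)}(2,3)=0$) but only almost $d$-optimal (the bound in (\ref{eqn-slbound}) becomes $4>3$). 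The corrected conclusion is the exact analogue of how the paper itself treats $\mathcal{D}_1$ (after Theorem \ref{thm-N12}) and $\mathcal{F}_3$ (Theorem \ref{thm-R1333}), where a non-empty intersection forces the larger locality and the dual is declared $k$-optimal and almost $d$-optimal; the present theorem should have been stated and proved the same way.
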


\begin{proof}
By the proof of Theorem \ref{thm-N1235}, we can prove that
$$
\mathop{\bigcup}_{\mathcal{S} \in \mathcal{B}_{3}(\mathcal{E}^{\perp})} \mathcal{S}=[q+1].
$$
and
$$
\mathop{\bigcap}_{\mathcal{S} \in \mathcal{B}_{3}(\mathcal{E}^{\perp})} \mathcal{S}=\emptyset.
$$
The remainder of the proof is similar to that of Theorem \ref{thm-R1222}.
\end{proof}

\begin{theorem}
The NMDS code $\mathcal{E}_1$ in Theorem \ref{thm-N1234} is a
$$
(q+1,3,q-2,q;3)-LRC
$$
and $\mathcal{E}_1^{\perp}$ is a
$$
(q+1,q-2,3,q;q-3)-LRC.
$$
In addition, $\mathcal{E}_1$ is $k$-optimal and almost $d$-optimal, $\mathcal{E}_1^{\perp}$ is both $d$-optimal and $k$-optimal.
\end{theorem}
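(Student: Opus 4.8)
The plan is to follow the pattern of the proof of Theorem~\ref{thm-R1222}, the only new ingredient being a sharper description of the supports of the minimum-weight codewords of $\mathcal{E}_1^{\perp}$. Recall that the first $q-1$ columns of $G_{3,1}$ are $(1,\beta,\beta^2)$ with $\beta$ ranging over $\gf_q^{*}$ (the case $\beta=1$ giving the all-ones column), while the last two columns are $(1,1,0)$ and $(0,1,1)$. A weight-$3$ codeword of $\mathcal{E}_1^{\perp}$ is precisely a triple of linearly dependent columns of $G_{3,1}$ with all three coefficients nonzero, so I would classify the dependent triples. Three first-block columns give a Vandermonde determinant, hence are independent; a first-block pair $(1,x,x^2),(1,y,y^2)$ together with $(1,1,0)$ has determinant $(x+y)(xy+x+y)$, vanishing iff $(x+1)(y+1)=1$; together with $(0,1,1)$ the determinant is $(x+y)(x+y+1)$, vanishing iff $x+y=1$; and one first-block column with both special columns gives the determinant $x^2+x+1$, nonzero by Lemma~\ref{lem-opoly3}. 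Counting the two vanishing families, each having $(q-2)/2$ triples and each triple contributing $q-1$ codewords, recovers the $(q-1)(q-2)$ minimum-weight codewords predicted by Theorem~\ref{thm-N1234} and Lemma~\ref{lem-N2}.

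The decisive point is that the all-ones column ($\beta=1$) appears in no dependent triple: setting $x=1$ forces $y\in\{0,1\}$ in both quadratic conditions, and the remaining determinant $x^2+x+1$ equals $1$ there. Every other coordinate does occur, since the pairs with $(x+1)(y+1)=1$ and the pairs with $x+y=1$ each sweep out all of $\gf_q\setminus\{0,1\}$ while the two special columns occur in their respective families. Hence
\[
\bigcup_{\mathcal{S}\in\mathcal{B}_{3}(\mathcal{E}_1^{\perp})}\mathcal{S}=[q+1]\setminus\{\,\beta=1\,\}\neq[q+1],
\]
so by Lemma~\ref{lem-R1114} the locality of $\mathcal{E}_1$ is not $d(\mathcal{E}_1^{\perp})-1=2$, and Lemma~\ref{lem-R1115} then forces it to be $3$; thus $\mathcal{E}_1$ is a $(q+1,3,q-2,q;3)$-LRC. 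On the other hand no coordinate lies in every support: $(1,1,0)$ occurs only in the first family and $(0,1,1)$ only in the second, while each first-block coordinate lies in exactly two of the $q-2>2$ triples. Therefore $\bigcap_{\mathcal{S}}\mathcal{S}=\emptyset$, and Theorem~\ref{lem-R1116} yields locality $d(\mathcal{E}_1)-1=q-3$ for $\mathcal{E}_1^{\perp}$, i.e.\ a $(q+1,q-2,3,q;q-3)$-LRC.

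Finally I would feed the parameters into the two bounds. For $\mathcal{E}_1$ the Singleton-like bound~(\ref{eqn-slbound}) gives $n-k-\lceil k/r\rceil+2=(q+1)-3-1+2=q-1$, exceeding $d=q-2$ by one, so $\mathcal{E}_1$ is almost $d$-optimal; the Cadambe--Mazumdar bound~(\ref{eqn-cmbound}) with $t=1$ gives $k\le 3+k_{opt}^{(q)}(q-3,q-2)=3$, because $k_{opt}^{(q)}(q-3,q-2)=0$ as the required distance exceeds the length, so $\mathcal{E}_1$ is $k$-optimal. For $\mathcal{E}_1^{\perp}$ the Singleton-like bound gives $(q+1)-(q-2)-2+2=3=d$, so it is $d$-optimal, and the Cadambe--Mazumdar bound with $t=1$ gives $k\le (q-3)+k_{opt}^{(q)}(3,3)=q-2$, so it is $k$-optimal. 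The step I expect to be the main obstacle is the support analysis of the second paragraph, and within it the proof that the all-ones coordinate is excluded from every minimum-weight support: this single missing coordinate is exactly what makes the locality of $\mathcal{E}_1$ equal to $3$ rather than $2$, separating it from the earlier codes of the paper, and it relies essentially on Lemma~\ref{lem-opoly3} to rule out the triple using both special columns. Once this is pinned down, the union/intersection dichotomy and the two optimality bounds are routine.
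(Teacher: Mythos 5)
Your proposal is correct and structurally identical to the paper's argument: determine the union and intersection of the weight-$3$ supports of $\mathcal{E}_1^{\perp}$, combine Lemmas \ref{lem-R1114} and \ref{lem-R1115} and Theorem \ref{lem-R1116} to get localities $3$ and $q-3$, then feed the parameters into the Singleton-like and Cadambe--Mazumdar bounds exactly as you do (your bound computations match the paper's pattern from Theorem \ref{thm-R1222}).

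There is, however, one substantive difference, and it is in your favor. The paper's proof justifies $\bigcup_{\mathcal{S}\in\mathcal{B}_{3}(\mathcal{E}_1^{\perp})}\mathcal{S}\neq[q+1]$ by asserting that \emph{both} the first and the $q$-th coordinates vanish in every weight-$3$ codeword of $\mathcal{E}_1^{\perp}$. For $G_{3,1}$, whose $q$-th column is $(1,1,0)^{T}$, the claim about the $q$-th coordinate is false: as you compute, the triples $\bigl\{(1,x,x^2)^{T},(1,y,y^2)^{T},(1,1,0)^{T}\bigr\}$ with $(x+1)(y+1)=1$ are dependent, so the $q$-th coordinate does occur in weight-$3$ supports. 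Your consistency check settles the matter: without that family the dual would have only $(q-1)(q-2)/2$ weight-$3$ codewords, contradicting $A_{q-2}=(q-1)(q-2)$ from Theorem \ref{thm-N1234} via Lemma \ref{lem-N2}. The paper's assertion appears to have been carried over from the companion code $\mathcal{E}_2$ (generator $G_{3,2}$, whose $q$-th column $(1,0,0)^{T}$ genuinely lies in no dependent triple). Since the $\beta=1$ coordinate alone is already missing from the union, the paper's conclusions survive, and your sharper description of the union is the correct one. The only blemish in your write-up is the phrase that $x=1$ ``forces $y\in\{0,1\}$'' in the condition $(x+1)(y+1)=1$; in fact $x=1$ makes that condition unsolvable, which serves the same purpose.
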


\begin{proof}
Similarly to the proof of Theorem \ref{thm-N1235}, we can prove that the support sets of all codewords with weight $3$ in $\mathcal{E}_1^{\perp}$ are not traversed $[q+1]$ because the first and $q$-th coordinates of all codewords with weight $3$ is zero. Besides, $\mathop{\bigcap}_{\mathcal{S} \in \mathcal{B}_{3}(\mathcal{E}_1^{\perp})} \mathcal{S}=\emptyset.$ Combining Lemma \ref{lem-R1115} and Lemma \ref{lem-R1114}, we have the minimum linear locality of $\mathcal{E}_1$ is $d(\mathcal{E}_1^{\perp})=3$, and the minimum linear locality of $\mathcal{E}_1^{\perp}$ is $d(\mathcal{E}_1)-1=q-3$. The rest of the proof is similar to that of Theorem \ref{thm-R1222}.
\end{proof}

\begin{theorem}
The NMDS code $\mathcal{E}_2$ in Theorem \ref{thm-N17} is a
$$
(q+1,3,q-2,q;3)-LRC
$$
and $\mathcal{E}_2^{\perp}$ is a
$$
(q+1,q-2,3,q;q-3)-LRC.
$$
In addition, $\mathcal{E}_2$ is $k$-optimal and almost $d$-optimal, $\mathcal{E}_2^{\perp}$ is both $d$-optimal and $k$-optimal.
\end{theorem}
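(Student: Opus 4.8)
The plan is to follow the template set by Theorem~\ref{thm-R1222} and by the locality analyses of $\mathcal{E}$ and $\mathcal{E}_1$: first describe the set $\mathcal{B}_3(\mathcal{E}_2^{\perp})$ of supports of the minimum-weight codewords of the dual, then read off the locality of $\mathcal{E}_2$ from Lemmas~\ref{lem-R1114} and~\ref{lem-R1115}, read off the locality of $\mathcal{E}_2^{\perp}$ from Theorem~\ref{lem-R1116}, and finally test both codes against the bounds~(\ref{eqn-slbound}) and~(\ref{eqn-cmbound}). The opening step is essentially the bookkeeping already performed inside the proof of Theorem~\ref{thm-N17} (which mirrors Theorem~\ref{thm-N1235}): a weight-$3$ word of $\mathcal{E}_2^{\perp}$ is exactly a minimal dependent triple of columns of $G_{3,2}$, and since the determinant of three conic columns $P_\alpha=(1,\alpha,\alpha^2)^{T}$ factors as $(x+y)(x+z)(y+z)\neq0$, every dependent triple must use the unique non-conic column $R=(1,0,1)^{T}$ sitting in the last coordinate. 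A short computation gives $\det(P_x,P_y,R)=(x+y)(1+xy)$, so the dependent triples are precisely $\{P_x,P_{x^{-1}},R\}$ with $x\in\gf_q^{*}\setminus\{1\}$; each contributes $q-1$ codewords, in agreement with $A_{q-2}=(q-1)(q-2)/2$ from Theorem~\ref{thm-N17} via Lemma~\ref{lem-N2}.

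For the locality of $\mathcal{E}_2$ I would form $\bigcup_{\mathcal{S}\in\mathcal{B}_3(\mathcal{E}_2^{\perp})}\mathcal{S}$. Because the triples above only involve the coordinates of $\alpha\in\gf_q^{*}\setminus\{1\}$ together with $R$, the two coordinates carrying $\alpha=0$ and $\alpha=1$ lie outside every weight-$3$ support, so this union is a proper subset of $[q+1]$. By Lemma~\ref{lem-R1114} the locality of $\mathcal{E}_2$ is therefore not $d(\mathcal{E}_2^{\perp})-1=2$, and the NMDS dichotomy of Lemma~\ref{lem-R1115} forces it to equal $d(\mathcal{E}_2^{\perp})=3$, giving the $(q+1,3,q-2,q;3)$-LRC.

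The optimality of $\mathcal{E}_2$ is then immediate. The right-hand side of~(\ref{eqn-slbound}) evaluates to $(q+1)-3-\lceil 3/3\rceil+2=q-1$, one more than $d=q-2$, so $\mathcal{E}_2$ is almost $d$-optimal; and~(\ref{eqn-cmbound}) with $t=1$ gives $k\le 3+k_{opt}^{(q)}(q-3,q-2)=3$, since a length-$(q-3)$ code of minimum distance $q-2$ is trivial, so $\mathcal{E}_2$ is $k$-optimal.

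The genuinely delicate point, and the step I expect to be the main obstacle, is the locality of $\mathcal{E}_2^{\perp}$. By Theorem~\ref{lem-R1116} this locality equals $d(\mathcal{E}_2)-1=q-3$ if and only if $\bigcap_{\mathcal{S}\in\mathcal{B}_3(\mathcal{E}_2^{\perp})}\mathcal{S}=\emptyset$, so everything hinges on this intersection. Here the single-special-column structure of $G_{3,2}$ cuts against the argument that worked for $\mathcal{E}_1$: in $\mathcal{E}_1$ there are two non-conic columns, producing two families of weight-$3$ supports — one containing the $q$-th coordinate but not the last one, and one the reverse — so that their common intersection is empty; but for $\mathcal{E}_2$ every minimal dependent triple contains the last coordinate $R$, which leads me to expect $\bigcap_{\mathcal{S}}\mathcal{S}$ to be the singleton $\{q+1\}$ rather than $\emptyset$. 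Pinning down this intersection exactly is therefore the crux on which the statement turns, and it is the point I would scrutinize most carefully: should it indeed be nonempty, Theorem~\ref{lem-R1116} would give locality $q-2$ for $\mathcal{E}_2^{\perp}$, and re-substituting $r=q-2$ into~(\ref{eqn-slbound}) and~(\ref{eqn-cmbound}) would make $\mathcal{E}_2^{\perp}$ $k$-optimal and only almost $d$-optimal, so the reconciliation of the true intersection with the claimed locality $q-3$ is exactly the part of the proof demanding the most care.
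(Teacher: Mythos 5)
Your computation of $\mathcal{B}_3(\mathcal{E}_2^{\perp})$ is correct, and it is exactly the point where you and the paper part ways --- and you are right, the paper is not. Writing $P_\alpha=(1,\alpha,\alpha^2)^{T}$ for the conic columns of $G_{3,2}$ and $R=(1,0,1)^{T}$ for its last column, three conic columns are never dependent (Vandermonde), and $\det(P_x,P_y,R)=(x+y)(1+xy)$, so the weight-$3$ supports of $\mathcal{E}_2^{\perp}$ are precisely the $(q-2)/2$ sets consisting of the coordinates of a pair $\{x,x^{-1}\}$, $x\in\gf_q^{*}\setminus\{1\}$, together with the last coordinate; your count of $(q-1)(q-2)/2$ codewords matches $A_{q-2}$ in Theorem~\ref{thm-N17} via Lemma~\ref{lem-N2}, confirming no supports were missed. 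Hence $\mathop{\bigcap}_{\mathcal{S} \in \mathcal{B}_{3}(\mathcal{E}_2^{\perp})} \mathcal{S}$ is the last coordinate, not $\emptyset$. The paper's own proof, copying the $\mathcal{E}_1$ template, simply asserts this intersection is empty and concludes locality $q-3$ and $d$-optimality for $\mathcal{E}_2^{\perp}$; that assertion is false, because the structural feature that made it true for $\mathcal{E}_1$ --- two non-conic columns producing two families of supports, one avoiding each special coordinate --- is absent here, where every minimal dependency must use the unique non-conic column $R$.

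So you should not hedge in your final paragraph: your determinant computation already settles the matter, and the statement as printed is wrong in its second half. By Theorem~\ref{lem-R1116} the locality of $\mathcal{E}_2^{\perp}$ is not $q-3$, so by Lemma~\ref{lem-R1115} it equals $q-2$; then the Singleton-like bound (\ref{eqn-slbound}) with $r=q-2$ gives $d\le 4$ while $d=3$, so $\mathcal{E}_2^{\perp}$ is only almost $d$-optimal, and the Cadambe--Mazumdar bound (\ref{eqn-cmbound}) with $t=1$ gives $k\le (q-2)+k_{opt}^{(q)}(2,3)=q-2$, so it is $k$-optimal --- the same situation as $\mathcal{D}_1^{\perp}$ and $\mathcal{F}_3^{\perp}$ elsewhere in the paper, whose statements are phrased accordingly. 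Your treatment of the first half of the statement (locality $3$ for $\mathcal{E}_2$ because the union of the weight-$3$ supports misses the coordinates of $\alpha=0$ and $\alpha=1$; $k$-optimality via $k_{opt}^{(q)}(q-3,q-2)=0$; almost $d$-optimality since the bound gives $q-1$) is correct and agrees with the paper.
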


\begin{proof}
Similarly to the proof of Theorem \ref{thm-N1235}, we can prove that the support sets of all codewords with weight $3$ in $\mathcal{E}_2^{\perp}$ are not traversed $[q+1]$ because the first and $q$-th coordinates of all codewords with weight $3$ is zero. Besides, $\mathop{\bigcap}_{\mathcal{S} \in \mathcal{B}_{3}(\mathcal{E}_2^{\perp})} \mathcal{S}=\emptyset.$ Combining Lemma \ref{lem-R1115} and Lemma \ref{lem-R1114}, we have the minimum linear locality of $\mathcal{E}_2$ is $d(\mathcal{E}_2^{\perp})=3$, and the minimum linear locality of $\mathcal{E}_2^{\perp}$ is $d(\mathcal{E}_2)-1=q-3$. The rest of the proof is similar to that of Theorem \ref{thm-R1222}.
\end{proof}

\begin{theorem}
The NMDS code $\bar{\mathcal{E}_1}$ in Theorem \ref{thm-N2234} is a
$$
(q+2,3,q-1,q;2)-LRC
$$
and $\bar{\mathcal{E}_1}^{\perp}$ is a
$$
(q+2,q-1,3,q;q-2)-LRC.
$$
In addition, $\bar{\mathcal{E}_1}$, $\bar{\mathcal{E}_1}^{\perp}$ are both $d$-optimal and $k$-optimal.
\end{theorem}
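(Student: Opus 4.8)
The plan is to mirror the proof of Theorem~\ref{thm-R1222}, using that $\bar{\mathcal{E}_1}$ is a $[q+2,3,q-1]$ NMDS code with $d(\bar{\mathcal{E}_1}^\perp)=3$. First I would collect, from the eight-case analysis in the proof of Theorem~\ref{thm-N2234}, the complete list of supports of the weight-$3$ codewords of $\bar{\mathcal{E}_1}^\perp$, equivalently the triples of linearly dependent columns of $\bar{G_3}$. Only Cases 1.2, 1.3 and 1.6 there produce dependent triples: two of the first $q-1$ columns together with the $q$-th column (Case 1.2), two of the first $q-1$ columns together with the $(q+1)$-th column (Case 1.3), and the single triple formed by the first, $q$-th and $(q+2)$-th columns (Case 1.6); every other triple was shown to be independent.

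With this list, the two facts needed are
$$
\bigcup_{\mathcal{S}\in\mathcal{B}_{3}(\bar{\mathcal{E}_1}^\perp)}\mathcal{S}=[q+2]
\qquad\text{and}\qquad
\bigcap_{\mathcal{S}\in\mathcal{B}_{3}(\bar{\mathcal{E}_1}^\perp)}\mathcal{S}=\emptyset .
$$
For the union I would use that the pairing $\{x,y\}$ arising in Case 1.2 partitions $\gf_q\setminus\{0,1\}$ into $(q-2)/2$ pairs, so every coordinate among the first $q-1$ positions other than the first is covered; the $q$-th coordinate is hit by every Case 1.2 triple, the $(q+1)$-th by every Case 1.3 triple, and the remaining first and $(q+2)$-th coordinates are covered precisely by the Case 1.6 triple. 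For the intersection I would observe that any common coordinate must lie in the Case 1.6 support, hence be the first, the $q$-th, or the $(q+2)$-th coordinate; but the first and $(q+2)$-th coordinates lie in no Case 1.2 triple and the $q$-th coordinate lies in no Case 1.3 triple, so none of the three is common to all supports. Applying Lemma~\ref{lem-R1114} and Theorem~\ref{lem-R1116} then gives that $\bar{\mathcal{E}_1}$ has minimum linear locality $d(\bar{\mathcal{E}_1}^\perp)-1=2$ and $\bar{\mathcal{E}_1}^\perp$ has minimum linear locality $d(\bar{\mathcal{E}_1})-1=q-2$, which yields the two claimed LRC tuples.

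To conclude I would substitute into the two bounds. For the $(q+2,3,q-1,q;2)$-LRC $\bar{\mathcal{E}_1}$, the Singleton-like bound (Lemma~\ref{lem-R1112}) gives $n-k-\lceil k/r\rceil+2=(q+2)-3-\lceil 3/2\rceil+2=q-1$, which equals $d$, and the Cadambe--Mazumdar bound (Lemma~\ref{lem-R1113}) with $t=1$ gives $k\le r+k_{opt}^{(q)}(q-1,q-1)=2+1=3$, using $k_{opt}^{(q)}(q-1,q-1)=1$ from the classical Singleton bound; hence $\bar{\mathcal{E}_1}$ is both $d$-optimal and $k$-optimal. For the $(q+2,q-1,3,q;q-2)$-LRC $\bar{\mathcal{E}_1}^\perp$, the Singleton-like bound gives $(q+2)-(q-1)-\lceil(q-1)/(q-2)\rceil+2=3$ (since $\lceil(q-1)/(q-2)\rceil=2$), which equals $d$, and Cadambe--Mazumdar with $t=1$ gives $k\le (q-2)+k_{opt}^{(q)}(3,3)=(q-2)+1=q-1$; hence $\bar{\mathcal{E}_1}^\perp$ is both $d$-optimal and $k$-optimal.

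The main obstacle is the bookkeeping in the first two steps: extracting the exact family $\mathcal{B}_{3}(\bar{\mathcal{E}_1}^\perp)$ from the case analysis of Theorem~\ref{thm-N2234} and verifying the covering claim. In particular, one must notice that the lone Case 1.6 triple is the only minimum-weight dual support meeting the first and $(q+2)$-th coordinates, so that it is simultaneously indispensable for the union and, combined with the Case 1.2 and 1.3 triples, forces the intersection to be empty. Once the two locality values are established, the bound substitutions are entirely routine.
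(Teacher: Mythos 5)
Your proposal is correct and follows essentially the same route as the paper: it extracts from the case analysis of Theorem \ref{thm-N2234} that the weight-$3$ dual supports (Cases 1.2, 1.3, 1.6) cover $[q+2]$ with empty common intersection, applies Lemma \ref{lem-R1114} and Theorem \ref{lem-R1116} to get localities $2$ and $q-2$, and then verifies both optimality bounds exactly as in Theorem \ref{thm-R1222}. The paper's own proof states the union and intersection facts without detail, so your explicit bookkeeping of which cases yield dependent triples is a faithful (and more complete) rendering of the same argument.
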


\begin{proof}
By the proof of Theorem \ref{thm-N2234}, we deduce that
$$
\mathop{\bigcup}_{\mathcal{S} \in \mathcal{B}_{3}(\bar{\mathcal{E}}^{\perp})} \mathcal{S}=[q+2].
$$
and
$$
\mathop{\bigcap}_{\mathcal{S} \in \mathcal{B}_{3}(\bar{\mathcal{E}}^{\perp})} \mathcal{S}=\emptyset.
$$
The remainder of the proof is similar to that of Theorem \ref{thm-R1222}.
\end{proof}

\begin{theorem}
The NMDS code $\mathcal{F}_{1}$ in Theorem \ref{thm-N14} is a
$$
(q+2,3,q-1,q;3)-LRC
$$
and $\mathcal{F}_{1}^{\perp}$ is a
$$
(q+2,q-1,3,q;q-2)-LRC.
$$
In addition, $\mathcal{F}_{1}$ is k-optimal and almost d-optimal, $\mathcal{F}_{1}^{\perp}$ is both $d$-optimal and $k$-optimal.
\end{theorem}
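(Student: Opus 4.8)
The plan is to read off all supports of weight-$3$ codewords of $\mathcal{F}_1^\perp$ from the determinant computations behind Theorem \ref{thm-N14} (these supports are exactly the linearly dependent triples of columns of $G_{4,1}$, since any two columns are already independent), and then feed the union and the intersection of these supports into Lemmas \ref{lem-R1114} and \ref{lem-R1115} and Theorem \ref{lem-R1116}. All four optimality claims will then follow by substituting the LRC parameters into the Singleton-like bound (Lemma \ref{lem-R1112}) and the Cadambe--Mazumdar bound (Lemma \ref{lem-R1113}).

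First I would settle the locality of $\mathcal{F}_1$. By Lemma \ref{lem-R1115} it equals either $d(\mathcal{F}_1^\perp)-1=2$ or $d(\mathcal{F}_1^\perp)=3$, and by Lemma \ref{lem-R1114} the value $2$ occurs precisely when $\bigcup_{\mathcal{S}\in\mathcal{B}_3(\mathcal{F}_1^\perp)}\mathcal{S}=[q+2]$. I would show this union is proper by checking that the first column $(1,1,1)^{\top}$ (the evaluation point $\alpha=1$) lies in no dependent triple: with two curve columns the determinant is the nonzero Vandermonde-type expression of Lemma \ref{lem-opoly2}; with one curve column $(1,\alpha,\alpha^2)^{\top}$ and a special column it equals one of $(\alpha+1)^2$, $\alpha+1$, $\alpha(\alpha+1)$ according to which special column is used, each nonzero for admissible $\alpha\neq 0,1$; and with two special columns it is the nonzero constant $1$. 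Hence the first coordinate is missed, the union is not $[q+2]$, and the locality is forced to be $3$, so $\mathcal{F}_1$ is a $(q+2,3,q-1,q;3)$-LRC.

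Next I would treat $\mathcal{F}_1^\perp$ via Theorem \ref{lem-R1116}, whose hypothesis is $\bigcap_{\mathcal{S}\in\mathcal{B}_3(\mathcal{F}_1^\perp)}\mathcal{S}=\emptyset$. I would exhibit three supports with empty common intersection: the three special columns $(1,0,1)^{\top},(1,1,0)^{\top},(0,1,1)^{\top}$ are dependent (their determinant is $2=0$ in characteristic two), giving $\{q,q+1,q+2\}$; the column $(1,0,1)^{\top}$ with two curve columns is dependent exactly when $\alpha_j\alpha_k=1$, giving a support $\{j,k,q\}$; and the column $(0,1,1)^{\top}$ with two curve columns is dependent exactly when $\alpha_j+\alpha_k=1$, giving a support $\{j',k',q+2\}$. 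Since the curve indices differ from $q,q+1,q+2$, the first two supports meet only in $\{q\}$, which is disjoint from the third, so the overall intersection is empty. By Theorem \ref{lem-R1116} the locality of $\mathcal{F}_1^\perp$ is $d(\mathcal{F}_1)-1=q-2$, so $\mathcal{F}_1^\perp$ is a $(q+2,q-1,3,q;q-2)$-LRC.

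Finally I would verify optimality. For $\mathcal{F}_1$, the Singleton-like bound gives $n-k-\lceil k/r\rceil+2=(q+2)-3-1+2=q$, which exceeds $d=q-1$ by exactly one, so $\mathcal{F}_1$ is almost $d$-optimal; Cadambe--Mazumdar with $t=1$ gives $k\le 3+k_{opt}^{(q)}(q-2,q-1)=3$, since no code of length $q-2$ can have minimum distance $q-1$ and hence $k_{opt}^{(q)}(q-2,q-1)=0$ by the classical Singleton bound, so $\mathcal{F}_1$ is $k$-optimal. For $\mathcal{F}_1^\perp$, the Singleton-like bound gives $(q+2)-(q-1)-\lceil(q-1)/(q-2)\rceil+2=5-2=3=d$ (using $\lceil(q-1)/(q-2)\rceil=2$ for $q\ge 8$), so it is $d$-optimal, while Cadambe--Mazumdar with $t=1$ gives $k\le(q-2)+k_{opt}^{(q)}(3,3)=(q-2)+1=q-1$, so it is $k$-optimal. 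The main obstacle is the locality bookkeeping: one must be certain, from the triple-by-triple determinant analysis, both that the first coordinate truly never appears in a weight-$3$ support and that enough weight-$3$ supports exist to collapse the intersection to $\emptyset$; once the dependent triples of Theorem \ref{thm-N14} are in hand, the remaining steps are routine substitutions into the two bounds.
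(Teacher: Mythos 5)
Your proposal is correct and follows essentially the same route as the paper: it establishes $\bigcup_{\mathcal{S}\in\mathcal{B}_3(\mathcal{F}_1^\perp)}\mathcal{S}\subseteq[q+2]\setminus\{0\}$ and $\bigcap_{\mathcal{S}\in\mathcal{B}_3(\mathcal{F}_1^\perp)}\mathcal{S}=\emptyset$ via the column-dependence analysis of $G_{4,1}$, then applies Lemmas \ref{lem-R1114}, \ref{lem-R1115} and Theorem \ref{lem-R1116} for the localities and the Singleton-like and Cadambe--Mazumdar bounds (with $t=1$) for optimality, exactly as the paper does by reference to Theorems \ref{thm-N2234} and \ref{thm-R1222}. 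The only difference is that you verify a slightly weaker (but fully sufficient) union statement and supply the explicit determinant computations that the paper omits.
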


\begin{proof}
Similarly to the proof of Theorem \ref{thm-N2234}, we can prove that
$$
\mathop{\bigcup}_{\mathcal{S} \in \mathcal{B}_{3}(\mathcal{F}_{1}^{\perp})} \mathcal{S}=[q+2] \backslash \{0\}
$$
and
$$
\mathop{\bigcap}_{\mathcal{S} \in \mathcal{B}_{3}(\mathcal{F}_{1}^{\perp})} \mathcal{S}=\emptyset.
$$
Combining  Lemma \ref{lem-R1114} and \ref{lem-R1115}, we have the minimum linear locality of $\mathcal{F}_{1}$ is $d(\mathcal{F}_{1}^{\perp})=3$,
and the minimum linear locality of $\mathcal{F}_{1}^{\perp}$ is $d(\mathcal{F}_{1})-1=q-2$.
The rest of the proof is similar to that of Theorem \ref{thm-R1222}.
\end{proof}

\begin{theorem}
The NMDS code $\mathcal{F}_{2}$ in Theorem \ref{thm-N15} is a
$$
(q+2,3,q-1,q;3)-LRC
$$
and $\mathcal{F}_{2}^{\perp}$ is a
$$
(q+2,q-1,3,q;q-2)-LRC.
$$
In addition, $\mathcal{F}_{2}$ is $k$-optimal and almost d-optimal, $\mathcal{F}_{2}^{\perp}$ is both $d$-optimal and $k$-optimal.
\end{theorem}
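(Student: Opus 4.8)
The plan is to follow the template of Theorem~\ref{thm-R1222} together with the locality analysis already carried out for $\mathcal{F}_1$: everything reduces to describing the supports of the weight-$3$ codewords of $\mathcal{F}_2^{\perp}$, which are exactly the minimal $\gf_q$-linearly dependent triples of columns of $G_{4,2}$. These triples were already classified while proving Theorem~\ref{thm-N15} (mirroring the case analysis of Theorem~\ref{thm-N2234}), so I would simply read off from that analysis which coordinate positions occur. Writing $P_\alpha=(1,\alpha,\alpha^2)^{T}$ for a field column and $Q_1=(1,0,0)^{T},\,Q_2=(1,0,1)^{T},\,Q_3=(1,1,0)^{T}$ for the three special columns, the oval-polynomial condition (Lemma~\ref{lem-opoly2}) rules out any dependent triple of three field columns, while short determinant computations show the only dependencies are $\{P_x,P_y,Q_2\}$ with $xy=1$ and $\{P_x,P_y,Q_3\}$ with $(x+1)(y+1)=1$. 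In particular $Q_1$, any mixed triple with two special columns, and the triple $\{Q_1,Q_2,Q_3\}$ contribute nothing.

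First I would compute the union. From the two families above, both special positions $q$ and $q+1$ are hit, and every field position except the one carrying $\alpha=1$ is covered (each nonzero $x\neq1$ is paired with $x^{-1}$, resp.\ with the solution of $(x+1)(y+1)=1$); the position of $Q_1=(1,0,0)$ and the position of $\alpha=1$ are never covered. Hence
$$
\mathop{\bigcup}_{\mathcal{S} \in \mathcal{B}_{3}(\mathcal{F}_{2}^{\perp})} \mathcal{S} \neq [q+2],
$$
so by Lemma~\ref{lem-R1114} the locality of $\mathcal{F}_2$ is not $d(\mathcal{F}_2^\perp)-1=2$, and Lemma~\ref{lem-R1115} then forces it to be $d(\mathcal{F}_2^\perp)=3$; thus $\mathcal{F}_2$ is a $(q+2,3,q-1,q;3)$-LRC. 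Next I would check that
$$
\mathop{\bigcap}_{\mathcal{S} \in \mathcal{B}_{3}(\mathcal{F}_{2}^{\perp})} \mathcal{S} = \emptyset,
$$
which is immediate since the supports of the first family contain $q$ but not $q+1$, those of the second contain $q+1$ but not $q$, and the field coordinates vary. By Theorem~\ref{lem-R1116} this gives $\mathcal{F}_2^\perp$ minimum linear locality $d(\mathcal{F}_2)-1=q-2$, i.e.\ a $(q+2,q-1,3,q;q-2)$-LRC.

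It then remains to verify optimality by substituting the two parameter sets into the bounds with $t=1$. For $\mathcal{F}_2$ the Singleton-like bound (Lemma~\ref{lem-R1112}) gives $n-k-\lceil k/r\rceil+2=(q+2)-3-1+2=q$, one more than the true distance $q-1$, so $\mathcal{F}_2$ is almost $d$-optimal; the Cadambe--Mazumdar bound (Lemma~\ref{lem-R1113}) gives $k\le r+k_{opt}^{(q)}(q-2,q-1)=3+0=3$ (the entry is $0$ since $d>n$ there), so $\mathcal{F}_2$ is $k$-optimal. For $\mathcal{F}_2^\perp$ the Singleton-like bound gives $(q+2)-(q-1)-\lceil(q-1)/(q-2)\rceil+2=3$, matching $d=3$, so it is $d$-optimal; and $k\le(q-2)+k_{opt}^{(q)}(3,3)=(q-2)+1=q-1$, matching $k=q-1$, so it is $k$-optimal.

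The main obstacle I anticipate is purely bookkeeping: correctly extracting the support set from the determinant case analysis, and in particular confirming that the column $(1,0,0)$ never participates in a minimal dependency (so that the union genuinely fails to be $[q+2]$, which is what upgrades the locality of $\mathcal{F}_2$ from $2$ to $3$). Once this proper-subset fact and the empty-intersection fact are secured, the two localities follow from Lemmas~\ref{lem-R1114},~\ref{lem-R1115} and Theorem~\ref{lem-R1116}, and the four optimality claims are routine substitutions into the two bounds, exactly as in Theorem~\ref{thm-R1222}.
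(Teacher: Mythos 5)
Your proposal is correct and follows essentially the same route as the paper: you identify the weight-$3$ supports of $\mathcal{F}_2^{\perp}$ (the dependent column triples of $G_{4,2}$), observe that their union is $[q+2]$ minus the coordinates of the columns $(1,1,1)^{T}$ and $(1,0,0)^{T}$ while their intersection is empty, and then combine Lemmas \ref{lem-R1114}, \ref{lem-R1115} and Theorem \ref{lem-R1116} with the substitutions $t=1$ into the Singleton-like and Cadambe--Mazumdar bounds. The only difference is one of detail: your explicit determinant computations (the families $xy=1$ with $Q_2$ and $(x+1)(y+1)=1$ with $Q_3$, with Lemma \ref{lem-opoly3} implicitly ruling out the mixed triples) and the explicit bound evaluations fill in exactly what the paper compresses into ``similarly to the proof of Theorem \ref{thm-N2234}'' and ``similar to that of Theorem \ref{thm-R1222}.''
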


\begin{proof}
Similarly to the proof of Theorem \ref{thm-N2234}, we can prove that
$$
\mathop{\bigcup}_{\mathcal{S} \in \mathcal{B}_{3}(\mathcal{F}_{1}^{\perp})} \mathcal{S}=[q+2] \backslash \{0, q-1\}
$$
and
$$
\mathop{\bigcap}_{\mathcal{S} \in \mathcal{B}_{3}(\mathcal{F}_{1}^{\perp})} \mathcal{S}=\emptyset.
$$
Combining  Lemma \ref{lem-R1114} and \ref{lem-R1115}, we have the minimum linear locality of $\mathcal{F}_{2}$ is $d(\mathcal{F}_{2}^{\perp})=3$,
and the minimum linear locality of $\mathcal{F}_{2}^{\perp}$ is $d(\mathcal{F}_{2})-1=q-2$.
The remainder of the proof is similar to that of Theorem \ref{thm-R1222}.
\end{proof}

\begin{theorem}\label{thm-R1333}
The NMDS code $\mathcal{F}_{3}$ in Theorem \ref{thm-N16} is a
$$
(q+2,3,q-1,q;3)-LRC
$$
and $\mathcal{F}_{3}^{\perp}$ is a
$$
(q+2,q-1,3,q;q-1)-LRC.
$$
In addition, both $\mathcal{F}_{3}$ and $\mathcal{F}_{3}^{\perp}$ are $k$-optimal and almost $d$-optimal.
\end{theorem}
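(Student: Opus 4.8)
The plan is to follow the same template as the proof of Theorem~\ref{thm-R1222}: first pin down the minimum linear localities of $\mathcal{F}_3$ and $\mathcal{F}_3^{\perp}$ through the locality criteria of Lemmas~\ref{lem-R1114} and~\ref{lem-R1115} and Theorem~\ref{lem-R1116}, and then test the resulting parameter sets against the Singleton-like bound (Lemma~\ref{lem-R1112}) and the Cadambe--Mazumdar bound (Lemma~\ref{lem-R1113}). Everything reduces to one combinatorial input, namely the family $\mathcal{B}_3(\mathcal{F}_3^{\perp})$ of supports of the weight-$3$ codewords of $\mathcal{F}_3^{\perp}$, which are exactly the linearly dependent triples of columns of $G_{4,3}$.

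First I would recompute $\mathcal{B}_3(\mathcal{F}_3^{\perp})$ by the case analysis of Theorem~\ref{thm-N2234}. Writing the columns of $G_{4,3}$ as the conic points $(1,\alpha_i,\alpha_i^2)^{\top}$ for $1\le i\le q-1$ together with $u=(1,0,0)^{\top}$, $v=(0,0,1)^{\top}$ and $w=(1,0,1)^{\top}$, every triple turns out to be independent except for two families: the triple $\{u,v,w\}$, since $u+v+w=\mathbf{0}$, and the triples $\{(1,\alpha_i,\alpha_i^2)^{\top},(1,\alpha_j,\alpha_j^2)^{\top},w\}$ with $\alpha_i\alpha_j=1$, since the corresponding determinant factors as $(\alpha_i+\alpha_j)(\alpha_i\alpha_j+1)$. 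The second family forces $\alpha_i\in\gf_q^{*}\setminus\{1\}$ and so contributes $(q-2)/2$ supports, giving $q/2$ supports in total, which is consistent with $A_{q-1}=q(q-1)/2$ via Lemma~\ref{lem-N2}.

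The key step is to read off the union and the intersection of these supports. Since the conic point attached to $\alpha=1$ (coordinate $0$) appears in no dependent triple whereas the column $w$ (coordinate $q+1$) appears in every one, I expect
$$
\bigcup_{\mathcal{S}\in\mathcal{B}_3(\mathcal{F}_3^{\perp})}\mathcal{S}=[q+2]\setminus\{0\}, \qquad \bigcap_{\mathcal{S}\in\mathcal{B}_3(\mathcal{F}_3^{\perp})}\mathcal{S}=\{q+1\}\neq\emptyset.
$$
Because the union is not all of $[q+2]$, Lemma~\ref{lem-R1114} excludes locality $2$, so Lemma~\ref{lem-R1115} forces the locality of $\mathcal{F}_3$ to be $d(\mathcal{F}_3^{\perp})=3$; because the intersection is nonempty, Theorem~\ref{lem-R1116} excludes locality $q-2$, so the locality of $\mathcal{F}_3^{\perp}$ equals $d(\mathcal{F}_3)=q-1$. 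This yields the two announced LRC parameter sets.

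The last step is routine substitution into the two bounds. For the $(q+2,3,q-1,q;3)$-LRC $\mathcal{F}_3$ the right-hand side of the Singleton-like bound equals $n-k-\lceil k/r\rceil+2=q$, which is one larger than $d=q-1$, so $\mathcal{F}_3$ is almost $d$-optimal; taking $t=1$ in the Cadambe--Mazumdar bound gives $k\le 3+k_{opt}^{(q)}(q-2,q-1)=3$, the inner term vanishing because $q-1>q-2$, so $\mathcal{F}_3$ is $k$-optimal. For the $(q+2,q-1,3,q;q-1)$-LRC $\mathcal{F}_3^{\perp}$ the Singleton-like bound gives $4$, again one larger than $d=3$, and $t=1$ in Cadambe--Mazumdar gives $k\le(q-1)+k_{opt}^{(q)}(2,3)=q-1$ for the same reason, so $\mathcal{F}_3^{\perp}$ is $k$-optimal and almost $d$-optimal. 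The hard part is the first step: establishing the precise asymmetry that coordinate $0$ is missed by the union while coordinate $q+1$ is common to all supports, since this is what distinguishes the actual localities ($3$ and $q-1$) from the generic alternatives ($2$ and $q-2$) and must be extracted from the full determinant analysis; once it is in hand, the bound computations are immediate.
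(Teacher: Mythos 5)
Your proposal is correct and follows essentially the same route as the paper: the paper's (terse) proof likewise identifies $\bigcup_{\mathcal{S} \in \mathcal{B}_{3}(\mathcal{F}_{3}^{\perp})} \mathcal{S}=[q+2] \setminus \{0\}$ and $\bigcap_{\mathcal{S} \in \mathcal{B}_{3}(\mathcal{F}_{3}^{\perp})} \mathcal{S}=\{q+1\}$ (via the determinant case analysis of Theorem \ref{thm-N2234}, which your factorization $(\alpha_i+\alpha_j)(\alpha_i\alpha_j+1)$ and the dependency $u+v+w=\mathbf{0}$ reproduce exactly), then applies Lemmas \ref{lem-R1114}, \ref{lem-R1115} and Theorem \ref{lem-R1116} to get localities $3$ and $q-1$, and finishes with the same substitutions into the Singleton-like and Cadambe--Mazumdar bounds. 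Your write-up actually supplies details the paper leaves implicit (the explicit enumeration of the $q/2$ dependent column triples and the vanishing of $k_{opt}^{(q)}(n-r-1,d)$), but the underlying argument is identical.
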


\begin{proof}
Similarly to the proof of Theorem \ref{thm-N2234}, we can prove that
$$
\mathop{\bigcup}_{\mathcal{S} \in \mathcal{B}_{3}(\mathcal{F}_{3}^{\perp})} \mathcal{S}=[q+2] \backslash \{0\}
$$
and
$$
\mathop{\bigcap}_{\mathcal{S} \in \mathcal{B}_{3}(\mathcal{F}_{3}^{\perp})} \mathcal{S}=q+1.
$$
Combining  Lemma \ref{lem-R1114} and \ref{lem-R1115}, we have the minimum linear locality of $\mathcal{F}_{3}$ is $d(\mathcal{F}_{3}^{\perp})=3$,
and the minimum linear locality of $\mathcal{F}_{3}^{\perp}$ is $d(\mathcal{F}_{3})=q-1$.
The remainder of the proof is similar to that of Theorem \ref{thm-R1222}.
\end{proof}

\section{Concluding remarks}

In this paper, based on the oval polynomial $f(x)=x^2$ and some special matrixes, we presented several constructions of NMDS codes.
The weight enumerators of these NMDS codes were explicitly determined.
It is interesting that some constructions produce NMDS codes with the same parameters but different weight enumerators, which comfirms
the fact that NMDS codes with the same parameters may have different weight enumerators. As an important application,
most of these NMDS codes and their duals were proved to be optimal locally recoverable codes.

In \cite{LC}, a class of optimal locally repairable codes of distances 3 and 4 with unbounded length was constructed.
We remark that the optimal locally repairable codes of distance 3 in this paper are not contained in \cite{LC}.

\section*{References}

\end{document}